\documentclass{article}
\usepackage[a4paper, total={7in, 10in}]{geometry}
\usepackage{graphicx} 
\usepackage{amsmath}
\usepackage{multirow}
\usepackage[authoryear]{natbib}
\bibliographystyle{chicago}
\usepackage{amssymb}
\usepackage{xcolor}
\usepackage{hyperref}
\usepackage{epstopdf} 
\usepackage{epsfig}

\newtheorem{theorem}{Theorem}

\newtheorem{corollary}[theorem]{Corollary}

\newtheorem{lemma}{Lemma}

\newtheorem{proposition}{Proposition}

\newenvironment{proof}[1][Proof]{\noindent\textbf{#1.} }{\ \rule{0.5em}{0.5em}}

\usepackage{etoolbox}
\apptocmd{\lim}{\limits}{}{}

\title{The green transition of firms: The role of evolutionary competition, adjustment costs, transition risk, and green technology progress\thanks{The paper benefited from the comments of the participants at the 
27th Annual Workshop on Economics with Heterogeneous Interacting Agents (WEHIA 2024) in Bamberg. We thank Tomasz Makarewicz and Kerstin H\"{o}tte for their valuable insights and stimulating discussions.}}
\author{Davide Radi$^{\dagger}$\footnote{Corresponding author: Davide Radi, Catholic University of Sacred Heart, Department of Mathematics for Economic, Financial and Actuarial Sciences (DiMSEFA), Via Necchi 9, 20123 Milan, Italy. Email: davide.radi@unicatt.it.} and Frank Westerhoff$^{+,}$}
\date{%
\footnotesize
$^{\dagger}$Department of Mathematics for Economic, Financial and Actuarial Sciences (DiMSEFA), Catholic University of Sacred Heart, Milan, Italy. \\
$^{+}$Department of Economics, University of Bamberg, Bamberg, Germany.\\
October 2024}



\begin{document}

\maketitle

\begin{abstract}
We propose an evolutionary competition model to investigate the green transition of firms, highlighting the role of adjustment costs, dynamically adjusted transition risk, and green technology progress in this process. Firms base their decisions to adopt either green or brown technologies on relative performance. To incorporate the costs of switching to another technology into their decision-making process, we generalize the classical exponential replicator dynamics. Our global analysis reveals that increasing transition risk, e.g., by threatening to impose stricter environmental regulations, effectively incentivizes the green transition. Economic policy recommendations derived from our model further suggest maintaining high transition risk regardless of the industry’s level of greenness. Subsidizing the costs of adopting green technologies can reduce the risk of a failed green transition. While advances in green technologies can amplify the effects of green policies, they do not completely eliminate the possibility of a failed green transition. Finally, evolutionary pressures favor the green transition when green technologies are profitable.

\medskip

\noindent \textbf{Keywords} Green transition; Adaptive climate policy; Adjustment costs; Dynamically adjusted transition risk; Green technology progress; Evolutionary competition.

\medskip

\noindent \textbf{JEL classification:} Q55, D21, C73, C62.

\end{abstract}

\section{Introduction}

In line with the European Green Deal and the 2015 Paris Climate Agreement, the European Union (EU) aims to achieve climate neutrality by 2050, striving for a net-zero greenhouse gas emissions economy. The transition to a low-carbon economy requires firms to adopt green technologies (\cite{Rodrik2014}), which may reduce their profitability (\cite{MorgensternPizerShih2001}).\footnote{For surveys on the impact of environmental regulations, see \citet{JaffePetersonPortneyStavins1995} and \citet{Schmalensee1994}} Unfortunately, this financial burden may slow or even prevent the green transition.

In this paper, we propose a novel model of evolutionary competition to enhance our understanding of the green transition. In addition to evolutionary competition, our model incorporates three key factors: adjustment costs, dynamically adjusted transition risk, and green technology progress. Adjustment costs represent the one-time fixed costs that firms incur when they change their production technology. Transition risk refers to the possibility that new green regulations will make activities that are incompatible with a low-carbon economy more expensive, e.g., by imposing a carbon tax, or force brown firms to change production technologies and bear the associated costs (\cite{DunzNaqviMonasterolo2021}, \cite{BoltonKacperczyk2023}). This risk is dynamically adjusted when its intensity depends on the level of greenness of the industry. In contrast, green technology progress is responsible for increasing the returns to adopting green production technologies (\cite{PorterLinde1995}, \cite{Zeppini2015}).

Our goal is to explore the role of these factors in promoting the adoption of green technologies in the context of evolutionary competition. Evolutionary strategy selection is a characteristic of manager-led firms (\cite{Hirshleifer1993}) and is frequently employed in studies of environmental issues (\cite{Zeppini2015}, \cite{AntociBorghesiIannucciSodini2022}, \cite{AntociBorghesiGaldiVergalli2022}, \cite{IannucciTampieri2024}). The dynamic framework we employ is a generalized version of the exponential replicator dynamics (\cite{HofbauerWeibull1996}, \cite{HofbauerSigmund2003}), which introduces an essential innovation to account for the costs associated with changing production technologies.

Our investigation follows a three-stage approach. In the first stage, we consider a simplified model with no adjustment costs, dynamically adjusted transition risk, and green technology progress. The firm’s decision-making process follows a classical one-dimensional discrete-time exponential replicator dynamics. In the second stage, we introduce adjustment costs while still excluding dynamically adjusted transition risk and green technology progress. The firm's decision-making process is now driven by a generalized version of the classical one-dimensional discrete-time exponential replicator dynamics, introduced here for the first time. This approach assigns adjustment costs to firms that change technologies and incorporates these costs into their decision-making process. In the third stage, we analyze the full model, which includes adjustment costs, dynamically adjusted transition risk, and green technology progress. This model is represented by a two-dimensional discrete-time exponential replicator dynamics, also introduced here for the first time.

Our main findings are as follows. When adjustment costs, dynamically adjusted transition risk, and green technology progress are excluded, a green transition occurs either when the green technology is the most profitable or when green regulations make it so.

Adding adjustment costs to the model introduces uncertainty about which technology will be adopted; the higher the cost of switching, the lower the likelihood of adoption. As a result, a green transition may fail to materialize even when the green technology is the most profitable. To mitigate this risk, regulations should impose additional costs on brown production to ensure that green technology yields the highest net profit. Interestingly, the propensity of firms to switch to the most profitable technology increases the likelihood of a green transition when the profit gap of remaining a green firm is larger than that of remaining a brown firm, and decreases the likelihood otherwise. From a policy perspective, the greater the propensity of firms to adopt the most profitable technology, i.e., the greater the evolutionary pressure, the less extensive green economic policies need to be to ensure a successful green transition.

Taking into account dynamically adjusted transition risk and green technology progress, our model offers several scenarios. We employ analytical arguments to characterize all of them and identify the economic conditions that facilitate a green transition. A key finding is that dynamically adjusted transition risk does not preclude a green transition, and green technology progress does not eliminate the possibility of transition failure. Furthermore, even when the green technology is the most profitable, a green transition may not occur, due to adjustment costs. Nevertheless, the results confirm that an increasing transition risk lowers the likelihood of a failed green transition. The economic policy recommendation is to maintain a high level of transition risk regardless of the number of brown firms in the industry. This is in line with \citet{vandenBergh2012}, who emphasizes that mitigating climate change is a public good and requires effective, systemic policies to promote a low-carbon transition and avoid ‘escape routes’ associated with partial solutions. As highlighted in \citet{AlessiBattiston2022}, keeping transition risk high regardless of the level of greenness is a conservative approach that also reduces the risk of green-washing. 

In conclusion, the only green economic policy that guarantees a successful green transition is the threat of introducing stronger green policies, such as the imposition of a tax on brown production, which significantly increases transition risk. In contrast, subsidizing the green technology only ensures the green transition if it becomes more profitable than the brown technology.

The road map of the paper is as follows. Section \ref{LitRew} briefly reviews the related literature and empirical evidence. Section \ref{ModelSetup} introduces our model and two nested versions. Section \ref{NestedVersion1} investigates the dynamics of the simplest nested version of the model, which excludes adjustment costs, dynamically adjusted transition risk, and green technology progress. Section \ref{NestedVersion2} examines the dynamics of the nested version that includes adjustment costs but excludes dynamically adjusted transition risk and green technology progress. The aim is to highlight the role of adjustment costs between green and brown production. Section \ref{fullfledgedmodel} explores the dynamics of the full model, focusing on the role of adjustment costs, dynamically adjusted transition risk, and green technology progress. The implications of possible green economic policies are also discussed. Section \ref{Conc} concludes. Appendix \ref{AppA} contains analytical results for the full model and numerical tests that confirm the robustness of the results. All technical proofs are provided in Appendix \ref{AppB}.

\section{Related literature and empirical evidence}\label{LitRew}

\subsection{Adjustment costs, dynamically adjusted transition risk, and green technology progress}

Adjustment costs are a key factor in lost profits during the green transition. The adoption of green technologies or changing production technologies in general, involves significant upfront fixed costs. In addition, productivity losses are another effect of environmental regulations\footnote{According to the \citet{OTAITE586}, evidence from productivity growth in the 1970s and 1980s suggests that between 8\% and 44\% of industry-level declines in productivity growth were attributable to environmental regulation.}, along with various direct and indirect production costs, caused by factors such as crowding out other productive investments (\cite{Rose1983}), discouraging investment in more efficient facilities (\cite{Gruenspecht1982}, \cite{NelsonTietenbergDonihue1993}), and meeting pollution control requirements (\cite{JoshiLaveShihMcMichael1997}).

On the other hand, firms operating with brown technology are exposed to the risk of more stringent environmental regulations, commonly referred to as climate policy transition risk (\cite{FriedNovanPeterman2022}). This risk negatively affects the profit expectations of brown firms. Climate policy transition risk encompasses the possibility of new environmental regulations that require brown firms to adopt green technologies and bear the associated adjustment costs. It also includes the risk of carbon pricing and stringent green economic policies designed to internalize carbon externalities, such as a carbon tax or tradable emissions permits, see, e.g., \citet{Bowen2011}, \citet{AntociBorghesiIannucciSodini2021}, and \citet{IannucciTampieri2024}.

Importantly, climate policy transition risk may be state-dependent, meaning that public attention, and hence political pressure for a green transition, declines as the industry becomes greener. This reduction in environmental concern has a positive impact on the market value of carbon-intensive firms (\cite{EngleGiglioKellyLeeStroebel2020}) and reduces the positive performance of green stocks, negative greenium, (\cite{PastorStambaughTaylor2022}). Our model captures this dynamically adjusted transition risk by assuming that as fewer firms rely on brown technologies, their production becomes increasingly profitable.

Conversely, clean technologies tend to have steeper learning curves than incumbent dirty technologies, see \citet{McNerneyFarmerRedneraTrancik2011}, and the cost of green technologies decreases as their adoption grows.\footnote{\citet{Allen2009} emphasizes that during the Industrial Revolution, having a large enough market played a key role in rewarding developers for perfecting technology. Specifically, \citet{Allen2009} argues that it is the demand for new techniques that sustains the acceleration of technological progress, making it more profitable to adopt. \citet{PearsonFoxon2012} stress that this is a relevant aspect to avoid a brown technology lock-in, see \citet{Arthur1989}, and to promote a low-carbon industrial revolution.} This reduction results from advances in green technologies, including economies of scale in the production of low-carbon technologies (\cite{PearsonFoxon2012}), technical improvements (\cite{Arthur1989}, \cite{Unruh2000}, \cite{Zeppini2015}), and positive externalities or spillover effects as production scales up and firms accumulate experience (\cite{KatzShapiro1985}, \cite{GrubbKohlerAnderson2002}).

We model this aspect by assuming that the profitability of green technologies increases when all firms adopt them, and we test the hypothesis that green technology progress favors a green transition (\cite{PearsonFoxon2012}). Finally, the green transition can be profitable for firms because it can create new opportunities and capture new market share (\cite{Porter1991}, \cite{PorterLinde1995}). Our model is flexible enough to capture this scenario, as well as to highlight how advances in green technologies interact with green economic policies, amplifying their impact.\footnote{Supporting low-carbon technologies by increasing the risk of climate policies helps achieve sufficient market penetration and experience to compete with incumbent technologies that have long enjoyed the benefits of technological and institutional returns to scale \citet{PearsonFoxon2012}.}

\subsection{Policy implications}

Our results build on and extend the work of \citet{Zeppini2015} by introducing a more comprehensive framework that incorporates adjustment costs and dynamically-adjusted transition risk. This adjustment requires alternative modeling choices in order to maintain analytical tractability. Specifically, these differences related to the firm’s decision-making process, which in \citet{Zeppini2015} follows discrete choice dynamics, a micro-founded model of social learning that has been widely adopted in the economic and financial literature, see \citet{BrockHommes1997,BrockHommes1998}. In contrast, our model is characterized by evolutionary dynamics. Instead of incorporating adjustment costs and dynamically adjusted transition risk, \citeauthor{Zeppini2015}’s (\citeyear{Zeppini2015}) model accounts for endogenous technological progress, environmental policy, and social learning. The results are also similar, with a carbon lock-in equilibrium that requires a carbon tax to overcome and enforce the green transition.

Moreover, our findings on the role of transition risk are consistent with those of \citet{FriedNovanPeterman2022}, who find that such risks reduce carbon emissions by encouraging investment in cleaner technologies. Interestingly, \citet{FriedNovanPeterman2022} emphasize that the risk aversion of firms causes the green transition to accelerate more in response to transition risk than in response to a small, certain carbon tax. Their findings are based on a dynamic general equilibrium model that incorporates beliefs about the likelihood that the government will adopt a climate policy that transitions the economy to a lower-carbon steady state. Our setup, on the other hand, belongs to the realm of partial equilibrium models and is based on evolutionary competition and firms seeking the best relative performance. The fact that different modeling approaches yield the same result confirms its robustness. Anecdotal evidence also supports this, as the announcement and subsequent implementation of a carbon tax has led to a significant decline in the value of fossil capital, see, e.g., \citet{vanderPloegRezai2020}.

\citet{CarattiniHeutelMelkadze2023} point out that transition risk, such as the imposition of a carbon tax, is often combined with macro-prudential policies (e.g., a tax on brown assets for banks) to mitigate the risk of macroeconomic instability. Thus, a brown firm must consider not only the risk of a carbon tax, but also the risk that its carbon-intensive assets may become stranded, i.e., lose most of their economic value. It is also worth noting that these results contradict the predictions of the green paradox literature (e.g. \cite{Sinn2008}), which argues that the risk of future climate policies would increase current emissions by strengthening incentives to extract fossil fuels, thereby expanding supply. These conflicting results may be due to specific conditions that transition risk must satisfy in order to meaningfully affect investments (\cite{FriedNovanPeterman2022}): 1) the likelihood that a climate policy will be adopted in the near future cannot be trivially small, and 2) firms must believe that the climate policy, if implemented, will be stringent enough to change the return on investment. These conditions are confirmed by our analysis and are typically met in practice, as evidenced by surveys showing that firms anticipate business risks from existing or expected carbon regulations by shifting investments to projects that would be competitive in a carbon-constrained future (\cite{Ahluwalia2017}).

Our analysis also echoes the findings of \citet{EngleGiglioKellyLeeStroebel2020}, who show that the stock prices of firms most exposed to transition risk perform relatively worse during periods when attention to regulatory risk is higher (e.g., after negative news about climate change) or after events that are likely to increase the perceived likelihood of future climate regulation. Our contribution is also related to the literature on partial equilibrium models that explore the impacts of environmental policy uncertainty on investment and location decisions (\cite{Xepapadeas2001}, \cite{PommeretSchubert2018}). In contrast to policy-driven uncertainty, a much larger body of literature focuses on how optimal environmental policies are affected by uncertainty arising from often irreversible environmental shocks (\cite{LemoineTraeger2014}).

\section{Model Setup}\label{ModelSetup}

Consider a setup where two firms (players) can choose to adopt either a green or a brown technology.\footnote{According to the EU Taxonomy for sustainable economic activities, see, e.g., \citet{AlessiBattiston2022}, a green technology can be considered such if it contributes to at least one of the following environmental objectives: (i) climate change mitigation; (ii) climate change adaptation; (iii) sustainable use and protection of water and marine resources; (iv) transition to a circular economy; (v) pollution prevention and control; and (vi) protection and restoration of biodiversity and ecosystems.} A firm that adopts a green technology is denoted by $G$, a firm that adopts a brown technology is denoted by $B$. A green technology ensures a payoff $\Pi\left(G,X\right)$ while a brown technology ensures a payoff of $\Pi\left(B,X\right)$. These payoffs depend on $X$, which is the technology adopted by the competitor.

The dependence of the profit function on the technology adopted by the competitor is due to two factors: green technology progress and a dynamically adjusted transition risk. Firms adopt green technologies produced by another industry. Factors such as experience, spillover effects, accumulated knowledge, and economies of scale contribute to green technology progress, making the green technology more profitable when both firms adopt it, see, e.g., \citet{Unruh2000}, \citet{Allen2009}, \citet{PearsonFoxon2012} and \citet{Zeppini2015}. Then we have that:
\begin{equation}
\Pi\left(G,G\right)=\Pi^{GG}>\Pi\left(G,B\right) =\Pi^{GB} 
\end{equation}
Climate policy transition risk is the risk that a new stringent environmental regulation is introduced and reduces the profitability of the brown technology, see, e.g., \citet{FriedNovanPeterman2022} and \citet{CarattiniHeutelMelkadze2023}. We model this type of risk by assuming that it is higher when both firms adopt a brown technology, which we call dynamically adjusted transition risk.\footnote{A dynamically adjusted transition risk is the result of adaptive climate policies aimed at facilitating a smooth green transition.} This modeling assumption echoes the findings of \citet{EngleGiglioKellyLeeStroebel2020}, which show that the stock prices of firms most exposed to transition risk perform relatively worse during periods when attention to regulatory risk is plausibly higher (e.g., when there is negative news about climate change) or after events that are likely to increase the perceived likelihood of future climate regulation. Therefore:
\begin{equation}
\Pi\left(B,G\right)=\Pi^{BG}>\Pi\left(B,B\right)=\Pi^{BB}  
\end{equation}
There are no restrictions on the profit gap between green and brown productions. Evidence and surveye suggest that this gap is negative, see, e.g., \citet{Schmalensee1994}, \citet{JaffePetersonPortneyStavins1995}, \citet{MorgensternPizerShih2001} and \citet{OTAITE586}. However, there are also arguments supporting the idea that environmental regulation could be costless or even profitable, see, e.g., \citet{Porter1991} and \citet{PorterLinde1995}. The underlying notion is that environmental requirements can encourage plant managers to innovate and thus offset the costs associated with environmental protection. Finally, we assume that a firm has to pay an adjustment cost of $C^{B}$ to switch from a green to a brown technology, while it has to pay an adjustment cost of $C^{G}$ to switch from a brown to a green technology.

Inspired by the management literature that emphasizes how managers tend to make decisions in order to achieve the best relative performance, see, e.g., \citet{Hirshleifer1993}, we assume that firms update their decisions according to a two-population dynamic game, in particular, according to an exponential replicator dynamics adjusted to account for the costs of switching strategies. Let $P_{i}\left(G_{t}^{i}\right) = \eta_{t}^{i,G}$ be the probability that player $i$, with $i\in\left\{1,2\right\}$, adopts the green technology at time $t$ and let $P_{-i}\left(G_{t}^{-i}\right) =\eta_{t}^{-i,G}$ be the probability that the opponent adopts the green technology at time $t$. The probability that agent $i$ will adopt the green technology is updated at each time step $t$ according to the following revision protocol:
\begin{equation}
\eta^{i,G}_{t+1} = P_{i}\left(\left.G_{t+1}^{i}\right|G_{t}^{i};\eta^{-i,G}_{t}\right) \eta^{i,G}_{t} +  P_{i}\left(\left.G_{t+1}^{i}\right|B_{t}^{i};\eta^{-i,G}_{t}\right) \left(1-\eta^{i,G}_{t}\right)\text{,}
\end{equation}
where
\begin{equation}
P_{i}\left(\left.G_{t+1}^{i}\right|G_{t}^{i};\eta^{-i,G}_{t}\right) = \frac{\eta^{i,G}_{t} \exp\left(\beta E\left[\left.\Pi^{G}\right|\eta^{i,G}_{t}=1,\eta^{-i,G}_{t}\right]\right)}{\eta^{i,G}_{t}\exp\left(\beta E\left[\left.\Pi^{G}\right|\eta^{i,G}_{t}=1,\eta^{-i,G}_{t}\right]\right)+\left(1-\eta^{i,G}_{t}\right)\exp\left(\beta E\left[\left.\Pi^{B}\right|\eta^{i,G}_{t}=1,\eta^{-i,G}_{t}\right]\right)}
\end{equation}
and
\begin{equation}
P\left(\left.G_{t+1}^{i}\right|B_{t}^{i};\eta^{-i,G}_{t}\right) = \frac{\eta^{i,G}_{t}  \exp\left(\beta E\left[\left.\Pi^{G}\right|\eta^{i,G}_{t}=0,\eta^{-i,G}_{t}\right]\right)}{\eta^{i,G}_{t}  \exp\left(\beta E\left[\left.\Pi^{G}\right|\eta^{i,G}_{t}=0,\eta^{-i,G}_{t}\right]\right)+\left(1-\eta^{i,G}_{t} \right)\exp\left(\beta E\left[\left.\Pi^{B}\right|\eta^{i,G}_{t}=0,\eta^{-i,G}_{t}\right]\right)}\text{.}
\end{equation}
The intensity of the firm's choice is represented by $\beta$. The higher $\beta$ is, the higher the propensity to switch to the more profitable technology in the current period.
In accordance with the hypotheses made and consistent with evolutionary dynamics, we assume that player $i$'s expectations at time $t$ for profits at time $t+1$ are:
\begin{equation}
\begin{array}{lll}
E\left[\left.\Pi^{G}\right|\eta^{i,G}_{t}=1,\eta^{-i,G}_{t}\right] & =  & \Pi^{GG}\eta^{-i,G}_{t}+\Pi^{GB}\left(1-\eta^{-i,G}_{t}\right)  \\
\\
E\left[\left.\Pi^{G}\right|\eta^{i,G}_{t}=0,\eta^{-i,G}_{t}\right] & =  & \left(\Pi^{GG}-C^{G}\right)\eta^{-i,G}_{t}+\left(\Pi^{GB}-C^{G}\right)\left(1-\eta^{-i,G}_{t}\right) =  E\left[\left.\Pi^{G}\right|\eta^{i,G}_{t}=1,\eta^{-i,G}_{t}\right]-C^{G}\\
\\
E\left[\left.\Pi^{B}\right|\eta^{i,G}_{t}=0,\eta^{-i,G}_{t}\right] & =  & \Pi^{BG}\eta^{-i,G}_{t}+\Pi^{BB}\left(1-\eta^{-i,G}_{t}\right)  \\
\\
E\left[\left.\Pi^{B}\right|\eta^{i,G}_{t}=1,\eta^{-i,G}_{t}\right] & =  & \left(\Pi^{BG}-C^{B}\right)\eta^{-i,G}_{t}+\left(\Pi^{BB}-C^{B}\right)\left(1-\eta^{-i,G}_{t}\right) = E\left[\left.\Pi^{B}\right|\eta^{i,G}_{t}=0,\eta^{-i,G}_{t}\right]-C^{B}\\
\end{array}\text{.}
\end{equation}
Then, the dynamics of $\eta^{i,G}_{t}$, with $i =1,2$, is given by
\begin{equation}\label{MAINMAP}
\resizebox{1\hsize}{!}{$
\begin{array}{lll}
\eta^{1,G}_{t+1} & = &  \frac{\left(\eta^{1,G}_{t}\right)^2}{\eta^{1,G}_{t}+\left(1-\eta^{1,G}_{t}\right)\exp\left(\beta \left(\left(\Pi^{BG}-\Pi^{GG}\right)\eta^{2,G}_{t}+\left(\Pi^{BB}-\Pi^{GB}\right)\left(1-\eta^{2,G}_{t}\right)-C^{B}\right)\right)} +  \frac{\eta^{1,G}_{t}\left(1-\eta^{1,G}_{t}\right)}{\eta^{1,G}_{t}+\left(1-\eta^{1,G}_{t}\right)\exp\left(\beta\left(\left(\Pi^{BG} -\Pi^{GG}\right)\eta^{2,G}_{t}+\left(\Pi^{BB}-\Pi^{GB}\right)\left(1-\eta^{2,G}_{t}\right)+C^{G}\right)\right)} \\
\\
\eta^{2,G}_{t+1} & = & \frac{\left(\eta^{2,G}_{t}\right)^{2}}{\eta^{2,G}_{t}+\left(1-\eta^{2,G}_{t}\right)\exp\left(\beta \left(\left(\Pi^{BG}-\Pi^{GG}\right)\eta^{1,G}_{t}+\left(\Pi^{BB}-\Pi^{GB}\right)\left(1-\eta^{1,G}_{t}\right)-C^{B}\right)\right)} +  \frac{\eta^{2,G}_{t}\left(1-\eta^{2,G}_{t}\right)}{\eta^{2,G}_{t}+\left(1-\eta^{2,G}_{t}\right)\exp\left(\beta\left(\left(\Pi^{BG} -\Pi^{GG}\right)\eta^{1,G}_{t}+\left(\Pi^{BB}-\Pi^{GB}\right)\left(1-\eta^{1,G}_{t}\right)+C^{G}\right)\right)}
\end{array}
$}
\end{equation}
where all parameters are assumed to be positive, that is $\Pi^{BB},\Pi^{BB},\Pi^{GG},\Pi^{GB},C^{B},C^{G},\beta>0$. 

To describe the probabilities, we consider and investigate the dynamics of the model in the unit box $\mathcal{D}=\left[0,1\right]^{2}$. However, before focusing on the full version of our model, we consider two nested cases.

The first benchmark setup is the nested model with adjustment costs but without green technology progress and dynamically adjusted transition risk. The dynamics of this model is driven by a one-dimensional map and takes the form of an exponential replicator dynamics adjusted to account for the switching of strategies. We thus have that
\begin{equation}
\Pi\left(G,G\right)=\Pi^{GG}=\Pi\left(G,B\right) =\Pi^{GB} =\Pi^{G}
\end{equation}
and
\begin{equation}
\Pi\left(B,G\right)=\Pi^{BG}=\Pi\left(B,B\right)=\Pi^{BB} =\Pi^{B}\text{.}
\end{equation}
Then, player $i$'s expected profits do not depend on the opponent's choice:
\begin{equation}
\begin{array}{lll}
E\left[\left.\Pi^{G}\right|\eta^{i,G}_{t}=1,\eta^{-i,G}_{t}\right] & =  & \Pi^{GG}\eta^{-i,G}_{t}+\Pi^{GB}\left(1-\eta^{-i,G}_{t}\right) = \Pi^{G}\eta^{-i,G}_{t}+\Pi^{G}\left(1-\eta^{-i,G}_{t}\right) = \Pi^{G}  \\
\\
E\left[\left.\Pi^{G}\right|\eta^{i,G}_{t}=0,\eta^{-i,G}_{t}\right] & =  &   E\left[\left.\Pi^{G}\right|\eta^{i,G}_{t}=1,\eta^{-i,G}_{t}\right]-C^{G} =  \Pi^{G} - C^{G}\\
\\
E\left[\left.\Pi^{B}\right|\eta^{i,G}_{t}=0,\eta^{-i,G}_{t}\right] & =  & \Pi^{BG}\eta^{-i,G}_{t}+\Pi^{BB}\left(1-\eta^{-i,G}_{t}\right) = \Pi^{B}\eta^{-i,G}_{t}+\Pi^{B}\left(1-\eta^{-i,G}_{t}\right) = \Pi^{B}  \\
\\
E\left[\left.\Pi^{B}\right|\eta^{i,G}_{t}=1,\eta^{-i,G}_{t}\right] & =  &   E\left[\left.\Pi^{B}\right|\eta^{i,G}_{t}=0,\eta^{-i,G}_{t}\right]-C^{B} = \Pi^{B} - C^{B}\\
\end{array}
\end{equation}
Therefore, the dynamics of $\eta^{i,G}_{t}$ and $\eta^{-i,G}_{t}$ are independent. It follows that the evolutionary model \eqref{MAINMAP} is a decoupled two-dimensional dynamical system. Its dynamics is fully described by the following one-dimensional replicator dynamics, adjusted to account for the switching of strategies:
\begin{equation}\label{MAINMAPNested1}
\eta^{i,G}_{t+1}  =  \eta^{i,G}_{t}\frac{\eta^{i,G}_{t} }{\eta^{i,G}_{t}+\left(1-\eta^{i,G}_{t}\right)\exp\left(\beta \left(\Pi^{B}-\Pi^{G}-C^{B}\right)\right)} + \left(1-\eta^{i,G}_{t}\right) \frac{\eta^{i,G}_{t} }{\eta^{i,G}_{t}  +\left(1-\eta^{i,G}_{t}\right)\exp\left(\beta \left(\Pi^{B} - \Pi^{G} + C^{G}\right)\right)} 
\end{equation}

The second benchmark setup is the nested model without green technology progress, dynamically adjusted transition risk, and adjustment costs, i.e., $C^{B}=C^{G}=0$. Then, the probability of adopting a green technology in the next period does not depend on the need to switch strategies, and the one-dimensional model \eqref{MAINMAPNested1} reduces to the classical exponential replicator dynamics, as proposed by \citet{HofbauerSigmund2003}: 
\begin{equation}\label{MAINMAPNested2}
\eta^{i,G}_{t+1}  = \frac{\eta^{i,G}_{t}}{\eta^{i,G}_{t} +\left(1-\eta^{i,G}_{t}\right)\exp\left(\beta\left( \Pi^{B} - \Pi^{G}\right)\right)}
\end{equation}

In the following, we will first analyze model \eqref{MAINMAPNested2}, then \eqref{MAINMAPNested1}, and finally \eqref{MAINMAP}. Moreover, we will denote by $\mathcal{B}\left(\cdot\right)$ the basin of an attractor, i.e., the set of trajectories that converge to the attractor in the long run.

\section{Dynamics of the model without green technology progress, dynamically adjusted transition risk, and adjustment costs}\label{NestedVersion1}

If we drop the dependence of $\eta$ on $i$ for the sake of notation simplicity, the following results hold for model \eqref{MAINMAPNested2}.

\medskip

\begin{proposition}\label{Prop::ExpReplicatorStandard}
Consider $\beta>0$. Points $\bar{\eta}_{0}^{G}=0$ and $\bar{\eta}_{1}^{G}=1$ are equilibria of model \eqref{MAINMAPNested2}. Moreover, for
\begin{itemize}
\item[(1)] $\Pi_{G}>\Pi_{B}$, we have that $\bar{\eta}=1$ is globally stable in the sense that $\mathcal{B}\left(\bar{\eta}_{1}^{G}\right)=\left(0,1\right]$;
\item[(2)] $\Pi_{G}<\Pi_{B}$, we have that $\bar{\eta}=0$ is globally stable in the sense that $\mathcal{B}\left(\bar{\eta}_{0}^{G}\right)=\left[0,1\right)$;
\item[(3)] $\Pi_{G}=\Pi_{B}$, the segment $\left[0,1\right]$ is filled with equilibria that are marginally stable.
\end{itemize}
\end{proposition}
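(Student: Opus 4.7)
The plan is to reduce the map to a one-parameter form and exploit its monotonicity. Set $a := \exp(\beta(\Pi^{B}-\Pi^{G}))>0$, so that \eqref{MAINMAPNested2} becomes
\[
\eta^{G}_{t+1} \;=\; F(\eta^{G}_{t}) \;=\; \frac{\eta^{G}_{t}}{\eta^{G}_{t}+a\,(1-\eta^{G}_{t})}.
\]
A direct substitution shows $F(0)=0$ and $F(1)=1$, so $\bar\eta^{G}_{0}=0$ and $\bar\eta^{G}_{1}=1$ are always equilibria; moreover, a fixed-point equation $\eta = \eta/(\eta+a(1-\eta))$ simplifies to $(1-a)(1-\eta)\eta=0$, which rules out any interior equilibrium whenever $a\neq 1$. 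The rest of the argument splits into the three cases listed in the statement.

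For case (1), where $\Pi^{G}>\Pi^{B}$, I would use that $a\in(0,1)$, so the denominator $\eta+a(1-\eta)$ is a strict convex combination of $\eta$ and $1$, hence lies in $(\eta,1)$ for every $\eta\in(0,1)$. This immediately gives $\eta < F(\eta) < 1$ on $(0,1)$, so any trajectory starting in $(0,1]$ is strictly increasing and bounded above by $1$. By monotone convergence it tends to a fixed point in $(0,1]$; by the equilibrium analysis above, the only such fixed point is $\bar\eta^{G}_{1}=1$, proving global stability with $\mathcal{B}(\bar\eta^{G}_{1})=(0,1]$. Case (2) is entirely symmetric: now $a>1$, so the denominator exceeds $1$, yielding $0<F(\eta)<\eta$ on $(0,1)$, and the trajectory decreases monotonically to the unique attainable fixed point $\bar\eta^{G}_{0}=0$, so $\mathcal{B}(\bar\eta^{G}_{0})=[0,1)$.

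Finally, in case (3) one has $\Pi^{G}=\Pi^{B}$, hence $a=1$, and $F(\eta)=\eta/(\eta+1-\eta)=\eta$ identically, so every point of $[0,1]$ is a fixed point and trajectories are stationary, giving marginal stability of the whole segment.

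I do not expect a serious obstacle: the key observation is the simplification to $F(\eta)=\eta/(\eta+a(1-\eta))$, after which the analysis is just monotone convergence on a compact interval. The only mild care needed is to keep the open-versus-closed endpoint distinction in $\mathcal{B}(\bar\eta^{G}_{1})=(0,1]$ versus $\mathcal{B}(\bar\eta^{G}_{0})=[0,1)$, which follows from the fact that $F$ fixes the endpoints so a trajectory starting at one boundary cannot escape it.
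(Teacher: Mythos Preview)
Your proof is correct and follows essentially the same approach as the paper: both reduce to showing that $\eta^{G}_{t+1}\gtrless \eta^{G}_{t}$ on $(0,1)$ according to the sign of $\Pi^{B}-\Pi^{G}$, then use monotone convergence on the invariant interval $[0,1]$ together with the identity $F(\eta)=\eta$ in case (3). Your substitution $a=\exp(\beta(\Pi^{B}-\Pi^{G}))$ and the explicit exclusion of interior fixed points make the argument slightly cleaner, but the underlying idea is identical.
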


\medskip

\begin{figure}[!htbp] 
    \centering
    \includegraphics[scale=0.7]{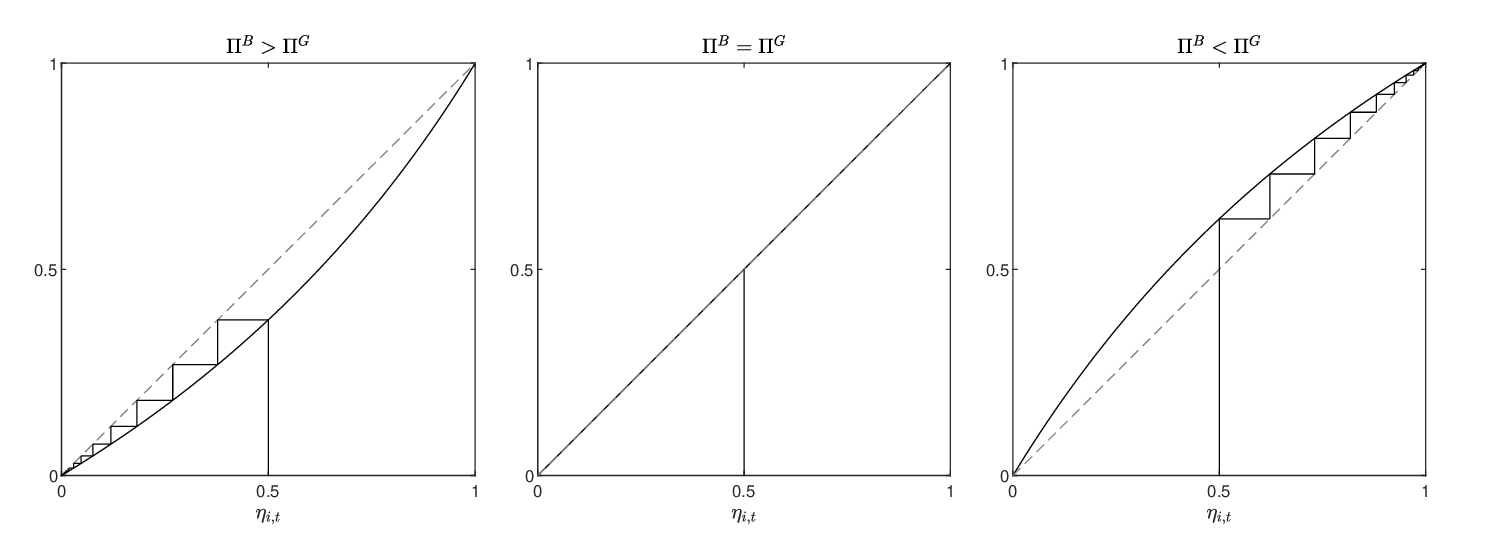}
    \caption{Staircase diagrams for model \eqref{MAINMAPNested2}. Panel (a), $\Pi_{G} = 0.5$ and $\Pi_{B} = 1$. Panel (b), $\Pi_{G} = 1$ and $\Pi_{B} = 1$. Panel (c), $\Pi_{G} = 1$ and $\Pi_{B} = 0.5$. Remaining parameters: $\beta = 1$.}
    \label{Figure2}
\end{figure}

The results given in Proposition \ref{Prop::ExpReplicatorStandard}, and their graphical representation using the staircase diagrams in Figure \ref{Figure2}, indicate that the green technology will be adopted in the long run if the profit generated by this technology is higher than the profit generated by the brown technology. Otherwise, the brown technology will be adopted in the long run. By threatening a stringent environmental regulation, such as a lump sum tax greater than the profit gap between brown and green technologies, policymakers can ensure a green transition even if the brown technology is more profitable.

\section{Dynamics of the model with adjustment costs but without green technology progress and dynamically adjusted transition risk}\label{NestedVersion2}

The following results hold for model \eqref{MAINMAPNested1}.

\medskip

\begin{proposition}\label{Prop::ExpReplicatorEvoluta}
Assume $C^{B},C^{G}\geq0$, either $C^{B}>0$ or $C^{G}>0$, and $\beta>0$. Moreover, let $\bar{\eta}_{0}^{i,G}=0$, $\bar{\eta}_{1}^{i,G}=1$ and
\begin{equation}
\bar{\eta}^{i,G}_{in} = \frac{\exp\left(\beta\left(\Pi^{B}-\Pi^{G}+C^{G}\right)\right)-1}{\exp\left(\beta\left(\Pi^{B}-\Pi^{G}+C^{G}\right)\right)-1+\exp\left(\beta\left(\Pi^{G}-\Pi^{B}+C^{B}\right)\right)-1}\text{.}
\end{equation}
We have the following:
\begin{itemize}
\item[(1)] for $\Pi_{B}\geq \Pi_{G}+C^{B}$, $\bar{\eta}_{0}^{i,G}$ and $\bar{\eta}_{1}^{i,G}$ are the equilibria, $\bar{\eta}_{1}^{i,G}$ is unstable, while $\bar{\eta}_{0}^{i,G}$ is locally asymptotically stable and $\mathcal{B}\left(\bar{\eta}_{0}^{i,G}\right)=\left[0,1\right)$;
\item[(2)] for $\Pi_{G}+C^{B}>\Pi_{B}>\Pi_{G}-C^{G}$, $\bar{\eta}_{0}^{i,G}$, $\bar{\eta}_{1}^{i,G}$ and $\bar{\eta}^{i,G}_{in}$ are the equilibria, $\bar{\eta}^{i,G}_{in}$ is unstable, while $\bar{\eta}_{1}^{i,G}$ and $\bar{\eta}_{0}^{i,G}$ are locally asymptotically stable and $\mathcal{B}\left(\bar{\eta}_{0}^{i,G}\right)=\left[0,\bar{\eta}^{i,G}_{in}\right)$ and $\mathcal{B}\left(\bar{\eta}_{1}^{i,G}\right)=\left(\bar{\eta}^{i,G}_{in},1\right]$;
\item[(3)] for $\Pi_{G}-C^{G}\geq \Pi_{B}$, $\bar{\eta}_{0}^{i,G}$ and $\bar{\eta}_{1}^{i,G}$ are the equilibria, $\bar{\eta}_{0}^{i,G}$ is unstable, while $\bar{\eta}_{0}^{i,G}$ is locally asymptotically stable and $\mathcal{B}\left(\bar{\eta}_{1}^{i,G}\right)=\left(0,1\right]$.
\end{itemize}
\end{proposition}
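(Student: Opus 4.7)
The plan is to recast \eqref{MAINMAPNested1} using the shorthand $a:=\exp(\beta(\Pi^B-\Pi^G-C^B))$ and $b:=\exp(\beta(\Pi^B-\Pi^G+C^G))$, so that $\eta_{t+1}=f(\eta_t)$ with
\[
f(\eta)=\eta\,\frac{\eta}{\eta+(1-\eta)a}+(1-\eta)\,\frac{\eta}{\eta+(1-\eta)b}.
\]
Because $C^B,C^G\geq 0$ and at least one is strictly positive, $b>a>0$, and cases (1)--(3) of the statement correspond respectively to $a\geq 1$, $a<1<b$, and $b\leq 1$.

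First I would identify all fixed points and verify forward invariance of $[0,1]$. Evidently $f(0)=0$ and $f(1)=1$. Writing $u(\eta):=\eta+(1-\eta)a$ and $v(\eta):=\eta+(1-\eta)b$ and pooling the two fractions over the common denominator $uv$, a short algebraic manipulation yields
\[
f(\eta)-\eta=\frac{\eta(1-\eta)\,\big[\eta(1-a)-(1-\eta)\,a(b-1)\big]}{u(\eta)\,v(\eta)}.
\]
Solving the bracketed factor for $\eta$ reproduces exactly the stated formula for $\bar\eta^{i,G}_{in}$, and this root lies in $(0,1)$ iff $1-a$ and $a(b-1)$ are both strictly positive, i.e.\ iff we are in case (2). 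An analogous computation gives $1-f(\eta)=(1-\eta)\big[\eta a+\eta(1-\eta)(b-a)+(1-\eta)ab\big]/(uv)>0$ on $(0,1)$, and $f(\eta)>0$ is immediate, so $(0,1)$ is forward invariant.

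Next I would extract the basin statements via a monotone-sequence argument driven by the sign of $h(\eta):=\eta(1-a)-(1-\eta)\,a(b-1)$. In case (1), $a\geq 1$ and $b>1$ force $h(\eta)<0$ on $(0,1)$, hence $f(\eta)<\eta$; every orbit in $(0,1)$ is then strictly decreasing and bounded below by $0$, so converges to $0$. In case (3), $a<1$ and $b\leq 1$ force $h(\eta)>0$, so $f(\eta)>\eta$ and orbits increase to $1$. In case (2), $h$ is linear with $h(0)<0<h(1)$, so it changes sign exactly at $\bar\eta^{i,G}_{in}$; orbits in $(0,\bar\eta^{i,G}_{in})$ are strictly decreasing and trapped there, so converge to $0$, while orbits in $(\bar\eta^{i,G}_{in},1)$ are strictly increasing and converge to $1$.

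Finally I would classify local stability from the boundary derivatives. The identities $\tfrac{d}{d\eta}[\eta/u]=a/u^2$ and $\tfrac{d}{d\eta}[(1-\eta)/v]=-1/v^2$ give $f'(0)=1/b$ and $f'(1)=a$. In case (1), the standing assumption $C^B+C^G>0$ combined with $\Pi^B\geq\Pi^G+C^B$ forces $b\geq a\cdot e^{\beta(C^B+C^G)}>1$ strictly, so $\bar\eta^{i,G}_0$ is locally asymptotically stable, and $f'(1)=a\geq 1$ together with $f(\eta)<\eta$ on $(0,1)$ shows that $\bar\eta^{i,G}_1$ is repelling; case (3) is symmetric. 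In case (2), $f'(0)<1$ and $f'(1)<1$ simultaneously yield LAS at both boundaries, while the sign reversal of $f(\eta)-\eta$ across $\bar\eta^{i,G}_{in}$ makes the interior equilibrium unstable. The only delicate bookkeeping I anticipate is at the knife-edges $a=1$ in case (1) and $b=1$ in case (3), where linearization is inconclusive and repulsion of the relevant boundary equilibrium must instead be read off from the strict one-sided inequality $f(\eta)\lessgtr\eta$ established in the previous step.
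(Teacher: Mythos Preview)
Your proposal is correct and follows essentially the same strategy as the paper's proof: identify the equilibria from $f(\eta)=\eta$, check when the interior root lies in $(0,1)$, compute $f'(0)$ and $f'(1)$ for local stability, and conclude the basin statements from the sign of $f(\eta)-\eta$ via monotone convergence, with the knife-edge cases handled by the strict one-sided inequality rather than linearization. Your shorthand $a,b$ and the single factorization $f(\eta)-\eta=\eta(1-\eta)h(\eta)/(uv)$ compress the paper's four-case existence check and separate derivative computations into a cleaner presentation, but the underlying argument is the same.
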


\medskip

\begin{figure}
    \centering
    \includegraphics[scale=0.75]{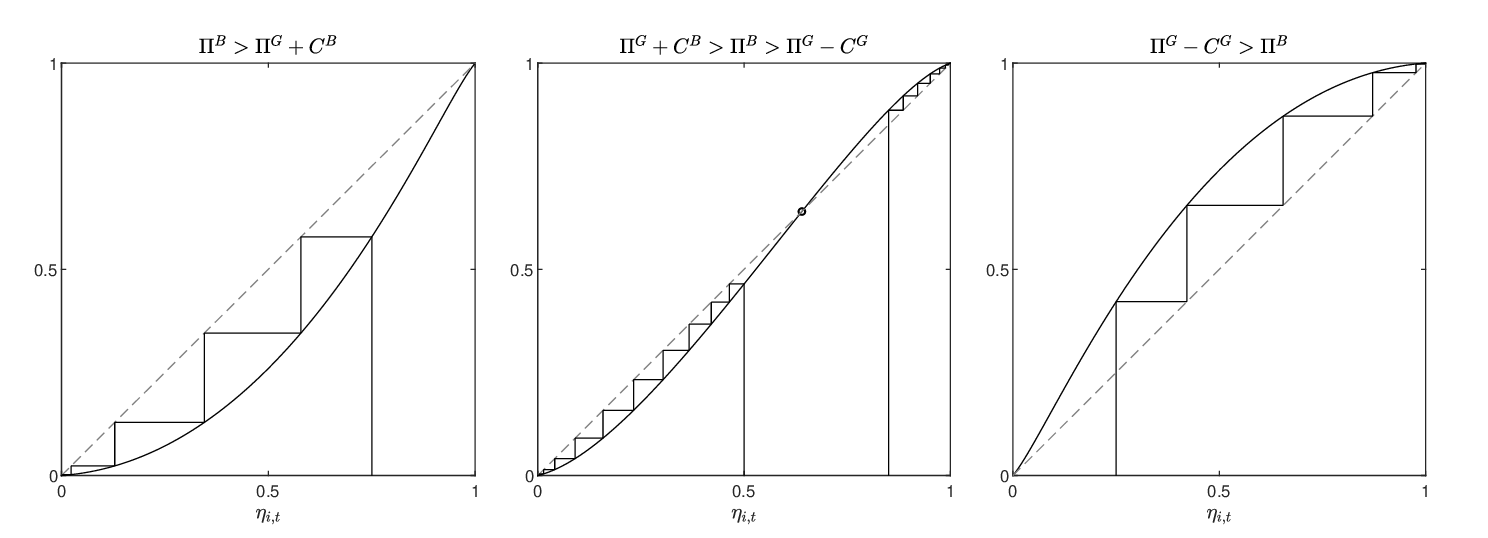}
    \caption{Staircase diagrams for model \eqref{MAINMAPNested1}. Panel (a), $\Pi_{B} = 1.3$. Panel (b), $\Pi_{B} = 1$. Panel (c), $\Pi_{B} = 0.6$. Remaining parameters: $\Pi_{G} = 0.95$; $C^{G} = 0.3$; $C^{B} = 0.3$ and $\beta = 4$.}
    \label{Figure1}
\end{figure}

The results given in Proposition \ref{Prop::ExpReplicatorEvoluta}, and their graphical representation using the staircase diagrams in Figure \ref{Figure1}, indicate that the green technology will be adopted in the long run if the profit generated by this technology, reduced by the cost of switching to the green technology, is higher than the profit generated by the brown technology. In contrast, the brown technology will be adopted in the long run if the profit generated by this technology, reduced by the cost of switching to the brown technology, is higher than the profit generated by the green technology. In all the other cases, both the probability of adopting a green technology and the probability of adopting a brown technology are positive. Comparing these results with those obtained for the model without switching costs, we can observe that in the presence of switching costs, a technology becomes the dominant technology only if its net profit (reduced by the switching cost) is higher than the gross profit (not reduced by the switching cost) of the other technology. In summary, switching costs increase the uncertainty about the technology that will be adopted. Moreover, the higher the cost to switching to a technology, the lower the probability that the technology will be adopted. Finally, due to the adjustment costs, there is the risk that the green technology will not be adopted even though it is the most profitable one. We also note that the intensity of choice affects on the value of the inner equilibrium $\bar{\eta}^{i,G}_{in}$ and thus the green transition. This does not happen when we consider a classical exponential replicator dynamics. In particular, we have the following.

\medskip

\begin{corollary}\label{Corr1}
Consider $\Pi_{G}+C^{B}>\Pi_{B}>\Pi_{G}-C^{G}$ and $\beta>0$. Then $\bar{\eta}^{i,G}_{in}$ is an unstable inner equilibrium. Moreover,
\begin{itemize}
\item[(1)] for $\beta\rightarrow 0$, we have
\begin{equation}
\lim_{\beta\rightarrow 0^{+}} \bar{\eta}^{i,G}_{in} = \bar{\eta}^{i,G}_{in,\left(\beta\rightarrow 0\right)} : = \frac{\Pi^{B}-\Pi^{G}+C^{G}}{C^{B}+C^{G}}\in\left(0,1\right)    
\end{equation}
and the larger the profit gap between a brown and a green technology, i.e. $\Pi^{B}-\Pi^{G}$, the larger the basin of attraction of equilibrium $\bar{\eta}_{0}^{i,G}=0$.
\item[(2)] for $\beta\rightarrow +\infty$, we have
\begin{equation}\label{betainfty}
\displaystyle
\lim_{\beta\rightarrow +\infty} \bar{\eta}^{i,G}_{in}  = \bar{\eta}^{i,G}_{in,\left(\beta\rightarrow +\infty\right)} : =
\left\{
\begin{array}{lll}
1     & \text{if} &  2\left(\Pi^{G}-\Pi^{B}\right) + C^{B} - C^{G} <0 \\
\\
0  & \text{if} & 2\left(\Pi^{G}-\Pi^{B}\right) + C^{B} - C^{G} >0
\end{array}
\right.
\end{equation}
and if the profit gap of remaining a green firm is higher than the profit gap of remaining a brown firm, i.e. $\Pi^{G}- \left(\Pi^{B}-C^{B}\right)>\Pi^{B}- \left(\Pi^{G}-C^{G}\right)$, then $\mathcal{B}\left(\bar{\eta}_{1}^{i,G}\right)$ becomes $\left(0,1\right]$, i.e., $\bar{\eta}_{1}^{i,G}$ is a global attractor. In contrast, if the profit gap of remaining a green firm is lower than the profit gap of remaining a brown firm, i.e., $\Pi^{G}- \left(\Pi^{B}-C^{B}\right)<\Pi^{B}- \left(\Pi^{G}-C^{G}\right)$, then $\mathcal{B}\left(\bar{\eta}_{0}^{i,G}\right)$ becomes $\left[0,1\right)$, i.e., $\bar{\eta}_{0}^{i,G}$ is a global attractor.
\item[(3)] if the profit gap of remaining a green firm is higher than the profit gap of remaining a brown firm, i.e., $\Pi^{G}- \left(\Pi^{B}-C^{B}\right)>\Pi^{B}- \left(\Pi^{G}-C^{G}\right)$, then the basin of attraction of $\bar{\eta}_{1}^{i,G}$ increases as the intensity of choice, parameter $\beta$, increases, otherwise it decreases.
\end{itemize}
\end{corollary}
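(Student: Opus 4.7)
The whole corollary is a scalar calculus exercise once one recognises that, because Proposition~\ref{Prop::ExpReplicatorEvoluta}(2) already gives the basins as $\mathcal{B}(\bar{\eta}_{0}^{i,G}) = [0, \bar{\eta}_{in}^{i,G})$ and $\mathcal{B}(\bar{\eta}_{1}^{i,G}) = (\bar{\eta}_{in}^{i,G}, 1]$, every claim reduces to a statement about the single real-valued function $\bar{\eta}_{in}^{i,G}(\beta)$. To lighten notation I would set $a := \Pi^{B} - \Pi^{G} + C^{G}$ and $b := \Pi^{G} - \Pi^{B} + C^{B}$. The standing hypothesis $\Pi^{G}+C^{B} > \Pi^{B} > \Pi^{G}-C^{G}$ is exactly $a,b > 0$, and with these letters
$$\bar{\eta}_{in}^{i,G}(\beta) \;=\; \frac{e^{\beta a}-1}{(e^{\beta a}-1)+(e^{\beta b}-1)}.$$

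For part~(1) I would Taylor-expand at $\beta=0$: numerator and denominator both vanish to first order, with leading coefficients $a$ and $a+b$, so the limit is $a/(a+b) = (\Pi^{B}-\Pi^{G}+C^{G})/(C^{B}+C^{G})$, which lies in $(0,1)$ because $a$ and $b$ are strictly positive. The comparative-statics claim follows from $\partial_{\Pi^{B}-\Pi^{G}}[a/(a+b)] = 1/(C^{B}+C^{G}) > 0$: a larger profit gap raises $\bar{\eta}_{in}^{i,G}$ and therefore enlarges $\mathcal{B}(\bar{\eta}_{0}^{i,G}) = [0,\bar{\eta}_{in}^{i,G})$.

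For part~(2) the plan is to factor out the dominant exponential. When $a > b$, which is algebraically identical to $2(\Pi^{G}-\Pi^{B})+C^{B}-C^{G} < 0$, dividing numerator and denominator by $e^{\beta a}$ sends the ratio to $1$; when $a < b$, dividing by $e^{\beta b}$ sends it to $0$. The two limits translate directly into the two basin-collapse statements, since $\mathcal{B}(\bar{\eta}_{0}^{i,G})$ exhausts $[0,1)$ precisely when $\bar{\eta}_{in}^{i,G}\to 1$, and $\mathcal{B}(\bar{\eta}_{1}^{i,G})$ exhausts $(0,1]$ precisely when $\bar{\eta}_{in}^{i,G}\to 0$.

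Part~(3), the monotonicity of $\bar{\eta}_{in}^{i,G}$ in $\beta$ on the whole ray $\beta>0$, is the step I expect to be the main obstacle: differentiating $\bar{\eta}_{in}^{i,G}$ directly yields a sign-indeterminate expression of the form $(a-b)e^{\beta(a+b)}-ae^{\beta a}+be^{\beta b}$. My idea is to study instead
$$h(\beta) \;:=\; \frac{1-\bar{\eta}_{in}^{i,G}(\beta)}{\bar{\eta}_{in}^{i,G}(\beta)} \;=\; \frac{e^{\beta b}-1}{e^{\beta a}-1},$$
because $\bar{\eta}_{in}^{i,G}$ is decreasing in $\beta$ exactly when $h$ is increasing. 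A short logarithmic-derivative computation gives
$$\beta\,\frac{d}{d\beta}\log h(\beta) \;=\; k(\beta b) - k(\beta a), \qquad k(x) := \frac{x\,e^{x}}{e^{x}-1}.$$
Differentiating once more, $k'(x) = e^{x}(e^{x}-1-x)/(e^{x}-1)^{2}$, and this is strictly positive on $(0,\infty)$ because $e^{x}>1+x$. Hence $k$ is strictly increasing, so $\operatorname{sign} h'(\beta) = \operatorname{sign}(b-a)$. Consequently $\bar{\eta}_{in}^{i,G}$ is strictly decreasing in $\beta$ when $b>a$ (the green-retention gap $\Pi^{G}-(\Pi^{B}-C^{B})$ exceeds the brown-retention gap $\Pi^{B}-(\Pi^{G}-C^{G})$) and strictly increasing when $b<a$, which via $\mathcal{B}(\bar{\eta}_{1}^{i,G}) = (\bar{\eta}_{in}^{i,G},1]$ delivers precisely the claim of part~(3).
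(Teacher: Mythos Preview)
Your argument is correct and, for parts~(1) and~(2), essentially coincides with the paper's: both compute the $\beta\to 0^{+}$ limit by comparing leading-order terms (the paper writes $\exp(\beta x)-1\sim\beta x$ explicitly), and both handle $\beta\to+\infty$ by factoring out the dominant exponential to obtain $1/(1+\exp(\beta(b-a)))$ in your notation.

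Part~(3) is where the two diverge. The paper differentiates $\bar{\eta}^{i,G}_{in}$ directly, obtains an expression whose numerator is (in your letters) $(a-b)e^{\beta(a+b)}-ae^{\beta a}+be^{\beta b}$, and then simply asserts that its sign equals $\operatorname{sign}(a-b)$ without further justification. Your route through $h(\beta)=(e^{\beta b}-1)/(e^{\beta a}-1)$ and the auxiliary function $k(x)=xe^{x}/(e^{x}-1)$ is genuinely different and strictly more complete: the monotonicity of $k$ on $(0,\infty)$, which you verify via $k'(x)=e^{x}(e^{x}-1-x)/(e^{x}-1)^{2}>0$, turns the sign question into the trivial comparison $k(\beta b)\lessgtr k(\beta a)\Leftrightarrow b\lessgtr a$. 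So your proof actually fills the small gap the paper leaves. One cosmetic point: your symbols $a,b$ clash with the paper's use of the same letters in~\eqref{a_and_b} for the full model; renaming them (say $p,q$) would avoid confusion if the proof were inserted into the manuscript.
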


\medskip

The results in Corollary \eqref{Corr1} indicate that the green transition also depends on the propensity of firms to switch to the more profitable technology. The research conducted so far provides a number of implications for economic policy. First of all, $\tau^{B}_{1}:=\max\left\{\Pi^{B}-\Pi^{G}+C^{G};0\right\}$ quantifies the minimum tax that must be imposed on the brown technology to ensure the green transition. Moreover, $\tau^{B}_{2}:=\max\left\{\Pi^{B}-\Pi^{G}-C^{B};0\right\}\left(<\tau^{B}_{1}\right)$ quantifies the minimum negative impact on the profit of brown production that a green regulation must have in order to create the necessary conditions for the green transition. Finally, the higher the intensity of the choice, parameter $\beta$, the lower the lump sum tax required to enforce the green transition. For $\beta\rightarrow +\infty$, for example, a green regulation that reduces the profit of a brown production by the amount $\tau^{B}_{3}:=\max\left\{\Pi^{B}-\Pi^{G}-\frac{C^{B}-C^{G}}{2};0\right\}\in\left[\tau^{B}_{2},\tau^{B}_{1}\right]$ is a sufficient condition for the green transition.

\section{Dynamics of the model with adjustment costs, green technology progress, and dynamically adjusted transition risk}\label{fullfledgedmodel}

\subsection{Overview and possible scenarios}

Model \eqref{MAINMAP} can be rewritten as
\begin{equation}\label{System::Main}
\begin{array}{lll}
\eta^{1,G}_{t+1} & = &  \eta^{1,G}_{t}\frac{\eta^{1,G}_{t}}{\eta^{1,G}_{t}+\left(1-\eta^{1,G}_{t}\right)\exp\left(\beta \left(a\eta^{2,G}_{t}+b-C^{B}\right)\right)} + \left(1-\eta^{1,G}_{t}\right) \frac{\eta^{1,G}_{t}}{\eta^{1,G}_{t}+\left(1-\eta^{1,G}_{t}\right)\exp\left(\beta\left(a\eta^{2,G}_{t}+b+C^{G}\right)\right)}  \\
\\
\eta^{2,G}_{t+1} & = &  \eta^{2,G}_{t}\frac{\eta^{2,G}_{t}}{\eta^{2,G}_{t}+\left(1-\eta^{2,G}_{t}\right)\exp\left(\beta \left(a\eta^{1,G}_{t}+b-C^{B}\right)\right)} + \left(1-\eta^{2,G}_{t}\right) \frac{\eta^{2,G}_{t}}{\eta^{2,G}_{t}+\left(1-\eta^{2,G}_{t}\right)\exp\left(\beta\left(a\eta^{1,G}_{t}+b+C^{G}\right)\right)}\text{,}
\end{array}
\end{equation}
where
\begin{equation}\label{a_and_b}
\begin{array}{lll}
a & = & \Pi^{BG}-\Pi^{GG} - \Pi^{BB} + \Pi^{GB} = \Pi^{BG} - \Pi^{BB} - \left(\Pi^{GG} - \Pi^{GB}\right) = \Pi^{BG}-\Pi^{GG} - b\\
\\
b & = & \Pi^{BB} - \Pi^{GB}\text{.}
\end{array}
\end{equation}

Model \eqref{System::Main} is a pair of two exponential replicator dynamics adjusted to account for the cost of switching strategies. These exponential replicator dynamics are proposed for the first time and represent a generalization of the classical exponential replicator dynamics, see, e.g., \citet{HofbauerSigmund2003}. Thus, the dynamics of model \eqref{System::Main} is also new and is characterized by peculiarities that are worth investigating on their own. These peculiarities and the main properties of the dynamics of the model are summarized in Lemma \ref{Lemma:MainProperties} in Appendix \ref{AppA}. Moreover, Appendix \ref{AppA} contains some numerical simulations that show the variety and complexity of the dynamics of model \eqref{System::Main}. These properties will be used to prove the results that follow, as well as their robustness.

Here we simply recall that the unit box $\mathcal{D}:=\left[0,1\right]^{2}$ is invariant, trajectories cannot exit the box, and its vertices are always equilibria of the model. Vertex $\eta_{11} = \left(1,1\right)$ represents the green transition, i.e., both firms adopt the green technology, vertex $\eta_{00}=\left(0,0\right)$ represents a failed green transition, i.e., both firms adopt the brown technology, while vertices $\eta_{01} = \left(0,1\right)$ and $\eta_{10} = \left(1,0\right)$ represent a partial green transition, i.e., one firm adopts the green technology and the other firm adopts the brown technology. These are the most relevant equilibria of the model, and their stability determines whether the green transition can occur or not. A necessary condition for the green transition is, for example, the local stability of the vertex equilibrium $\eta_{11}$. In addition to the vertex equilibria, there are four other possible equilibria at the edges of box $\left[0,1\right]^{2}$, given by $\left(0,\eta^{*}\right)$, $\left(\eta^{*},0\right)$, $\left(1,\eta^{+}\right)$, and $\left(\eta^{+},1\right)$, where $\eta^{*}$ and $\eta^{+}$ are defined in Appendix \ref{AppA}. However, we can also have equilibria inside the box $\left[0,1\right]^{2}$. These equilibria are called \emph{inner equilibria} in the following.

The instability of the inner equilibria can only be conjectured on the basis of the analytical results and the extensive numerical simulations perfomred and partially reported in Appendix \ref{AppA}. Instead, the same investigation conducted in Appendix \ref{AppA} emphasizes that $\left(0,\eta^{*}\right)$, $\left(\eta^{*},0\right)$, $\left(1,\eta^{+}\right)$, and $\left(\eta^{+},1\right)$ cannot be stable and the stability of the vertex equilibria can be determined by analytical arguments. These analytical conditions and the results in Lemma \ref{Lemma:MainProperties} in Appendix \ref{AppA} are then employed to investigate the possible scenarios with respect to the vertex equilibria. All the possible scenarios are described in the following proposition.

\medskip

\begin{proposition}\label{Prop:FullModelSce}
Assume $C^{B},C^{G}\geq0$, either $C^{B}>0$ or $C^{G}>0$, $\beta>0$, and $a\neq 0$. Consider $\eta^{*}$ and $\eta^{+}$ defined as in Lemma \ref{Lemma:MainProperties} in Appendix \ref{AppA} and consider a generic number $\bar{\eta}\in\left(0,1\right)$. Regarding the vertex equilibria $\eta_{00}=\left(0,0\right)$, $\eta_{10}=\left(1,0\right)$, $\eta_{11}=\left(1,1\right)$, and $\eta_{01}=\left(0,1\right)$, nine scenarios are possible:
\begin{itemize}
\item[1.] For $\Pi^{BB} > \Pi^{GB} - C^{G}$, $\Pi^{GG} > \Pi^{BG}-C^{B}$, $\Pi^{BG}>\Pi^{GG}-C^{G}$, and $\Pi^{GB}>\Pi^{BB} -C^{B}$, equilibria $\eta_{00}$, $\eta_{11}$, $\eta_{10}$, and $\eta_{01}$ are locally asymptotically stable, while equilibria $\left(0,\eta^{*}\right)$, $\left(\eta^{*},0\right)$, $\left(1,\eta^{+}\right)$ and $\left(\eta^{+},1\right)$ are saddles. There are no other equilibria at the edges, but there is at least one unstable inner equilibrium $\left(\bar{\eta},\bar{\eta}\right)$. 
\item[2.] For $\Pi^{BB} > \Pi^{GB} - C^{G}$, $\Pi^{GG} > \Pi^{BG}-C^{B}$, $\Pi^{BG}>\Pi^{GG}-C^{G}$, and $\Pi^{GB}<\Pi^{BB} -C^{B}$, equilibria $\eta_{00}$ and $\eta_{11}$ are locally asymptotically stable, while equilibria $\eta_{10}$, $\eta_{01}$, $\left(1,\eta^{+}\right)$, and $\left(\eta^{+},1\right)$ are saddles. There are no other equilibria at the edges, but there is at least one unstable inner equilibrium $\left(\bar{\eta},\bar{\eta}\right)$. 
\item[3.] For $\Pi^{BB} > \Pi^{GB} - C^{G}$, $\Pi^{GG} > \Pi^{BG}-C^{B}$, $\Pi^{BG}<\Pi^{GG}-C^{G}$, and $\Pi^{GB}<\Pi^{BB} -C^{B}$, equilibria $\eta_{00}$ and $\eta_{11}$ are locally asymptotically stable, while equilibria $\eta_{10}$ and $\eta_{01}$ are repellors. There are no other equilibria at the edges, but there is at least one unstable inner equilibrium $\left(\bar{\eta},\bar{\eta}\right)$. 
\item[4.] For $\Pi^{BB} > \Pi^{GB} - C^{G}$, $\Pi^{GG} < \Pi^{BG}-C^{B}$, $\Pi^{BG}>\Pi^{GG}-C^{G}$, and $\Pi^{GB}>\Pi^{BB} -C^{B}$, equilibria $\eta_{00}$, $\eta_{10}$, and $\eta_{01}$ are locally asymptotically stable, equilibrium $\eta_{11}$ is a repellor, while equilibria $\left(0,\eta^{*}\right)$ and $\left(\eta^{*},0\right)$ are saddles. There are no other equilibria at the edges. 
\item[5.] For $\Pi^{BB} > \Pi^{GB} - C^{G}$, $\Pi^{GG} < \Pi^{BG}-C^{B}$, $\Pi^{BG}>\Pi^{GG}-C^{G}$ and $\Pi^{GB}<\Pi^{BB} -C^{B}$, equilibrium $\eta_{00}$ is locally asymptotically stable, equilibria $\eta_{10}$ and $\eta_{01}$ are saddles, while equilibrium $\eta_{11}$ is unstable. There are no other equilibria on the edges. 
\item[6.] For $\Pi^{BB} > \Pi^{GB} - C^{G}$, $\Pi^{GG} > \Pi^{BG}-C^{B}$, $\Pi^{BG}<\Pi^{GG}-C^{G}$, and $\Pi^{GB}>\Pi^{BB} -C^{B}$, equilibria $\eta_{00}$ and $\eta_{11}$ are locally asymptotically stable, while equilibria $\eta_{10}$, $\eta_{01}$, $\left(0,\eta^{*}\right)$, and $\left(\eta^{*},0\right)$ are saddles. There are no other equilibria at the edges, but there is at least one unstable inner equilibrium $\left(\bar{\eta},\bar{\eta}\right)$. 
\item[7.] For $\Pi^{BB} < \Pi^{GB} - C^{G}$, $\Pi^{GG} > \Pi^{BG}-C^{B}$, $\Pi^{BG}>\Pi^{GG}-C^{G}$, and $\Pi^{GB}>\Pi^{BB} -C^{B}$, equilibria $\eta_{11}$, $\eta_{10}$, and $\eta_{01}$ are locally asymptotically stables, equilibrium $\eta_{00}$ is a repellor, and equilibria $\left(1,\eta^{+}\right)$ and $\left(\eta^{+},1\right)$ are saddle. There are no other equilibria at the edges. 
\item[8.] For $\Pi^{BB} < \Pi^{GB} - C^{G}$, $\Pi^{GG} < \Pi^{BG}-C^{B}$, $\Pi^{BG}>\Pi^{GG}-C^{G}$, and $\Pi^{GB}>\Pi^{BB} -C^{B}$, equilibria $\eta_{10}$ and $\eta_{01}$ are locally asymptotically stable, while equilibria $\eta_{00}$ and $\eta_{11}$ are repellors. There are no other equilibria at the edges. 
\item[9.] For $\Pi^{BB} < \Pi^{GB} - C^{G}$, $\Pi^{GG} > \Pi^{BG}-C^{B}$, $\Pi^{BG}<\Pi^{GG}-C^{G}$, and $\Pi^{GB}>\Pi^{BB} -C^{B}$, equilibrium $\eta_{11}$ is locally asymptotically stable, equilibria $\eta_{10}$ and $\eta_{01}$ are saddles, and equilibrium $\eta_{00}$ is unstable. There are no other equilibria at the edges. 
\end{itemize}
\end{proposition}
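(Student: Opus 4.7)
The plan rests on three structural observations. First, by Lemma~\ref{Lemma:MainProperties} each edge of $\mathcal{D}=[0,1]^{2}$ is forward-invariant, and the map symmetry $F_{2}(\eta^{1},\eta^{2})=F_{1}(\eta^{2},\eta^{1})$ lets me transfer every result on a vertical edge to the corresponding horizontal one. Second, on the edges $\eta^{1}\in\{0,1\}$ the dynamics of $\eta^{2}$ reduces exactly to the one-dimensional adjusted replicator of Proposition~\ref{Prop::ExpReplicatorEvoluta} with payoffs $(\Pi^{BB},\Pi^{GB})$ or $(\Pi^{BG},\Pi^{GG})$ respectively. Third, since $F_{1}(0,\cdot)\equiv 0$ and $F_{1}(1,\cdot)\equiv 1$, the Jacobian is diagonal at every vertex, so local stability at vertices is decided by two scalar derivatives.

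I would first compute $\partial_{\eta^{1}}F_{1}(0,\eta^{2})=\exp(-\beta(a\eta^{2}+b+C^{G}))$ and $\partial_{\eta^{1}}F_{1}(1,\eta^{2})=\exp(\beta(a\eta^{2}+b-C^{B}))$ by direct differentiation of \eqref{System::Main}, then evaluate at each vertex. This produces the stability conditions: $(0,0)$ stable iff $b+C^{G}>0$; $(1,1)$ stable iff $a+b-C^{B}<0$; $(1,0)$ and $(0,1)$ stable iff both $b-C^{B}<0$ and $a+b+C^{G}>0$. Translating $a$ and $b$ back through \eqref{a_and_b}, these become the four binary inequalities $C_{1},\ldots,C_{4}$ of the statement. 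A short inclusion--exclusion then shows that $C_{2}$ and $C_{3}$ cannot simultaneously fail (else $C^{B}+C^{G}\le 0$, contradicting the hypothesis), and the same for $C_{1}$ and $C_{4}$; this eliminates $4+4-1=7$ of the $2^{4}=16$ sign patterns and leaves precisely the nine listed scenarios.

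Next, for the edge equilibria, I would invoke Proposition~\ref{Prop::ExpReplicatorEvoluta} on each invariant edge to conclude that $(0,\eta^{*})$ exists iff $C_{1}$ and $C_{4}$ both hold and is then unstable in the edge direction; similarly $(1,\eta^{+})$ exists iff $C_{2}$ and $C_{3}$ both hold. To classify these points as saddles, I would compute the transverse eigenvalue at $(0,\eta^{*})$, namely $\exp(-\beta(a\eta^{*}+b+C^{G}))$, and rewrite its exponent as the convex combination $(1-\eta^{*})(b+C^{G})+\eta^{*}(a+b+C^{G})$, whose sign is then read off directly from $C_{1}$ and $C_{3}$ without needing an explicit formula for $\eta^{*}$. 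A symmetric argument handles $(\eta^{*},0)$, $(1,\eta^{+})$ and $(\eta^{+},1)$, and a case-by-case inspection reproduces the saddle classification in each of the nine scenarios.

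Finally, in scenarios 1, 2, 3 and 6 both corner vertices $(0,0)$ and $(1,1)$ are attractors. By symmetry the diagonal $\{\eta^{1}=\eta^{2}\}$ is invariant, so the restriction of the map to it is a continuous self-map of $[0,1]$ that attracts both endpoints (the cross-partials $\partial_{\eta^{2}}F_{1}$ vanish identically on $\eta^{1}\in\{0,1\}$, so the diagonal derivative at each endpoint coincides with the vertex eigenvalue already computed); the intermediate value theorem applied to $F_{1}(\eta,\eta)-\eta$ then yields at least one inner fixed point $(\bar\eta,\bar\eta)$, whose diagonal instability is immediate from the attractivity of both endpoints. The hard part will be the bookkeeping in the edge step: for each of the nine scenarios one must confirm which of the four potential edge equilibria actually exist and that the convex-combination identity pins down the sign of the transverse eigenvalue consistent with the stated saddle or repeller type, since without this identity one would need explicit expressions for $\eta^{*}$ and $\eta^{+}$, which are intractable scenario by scenario.
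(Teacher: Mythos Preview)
Your approach is essentially the same as the paper's: both rest on the four binary stability conditions at the vertices, the counting argument that eliminates $7$ of the $16$ sign patterns via the incompatibilities $C^{B}+C^{G}>0$, and a deferral to the structural results of Lemma~\ref{Lemma:MainProperties} for the edge and inner equilibria. The paper's own proof is in fact a two-line appeal to that lemma, whereas you spell out the Jacobian computations (which appear in the proof of Lemma~\ref{Lemma:MainProperties}) and the topological argument on the diagonal (which is (P13) there). So your proposal is a correct, more self-contained unpacking of the same argument.

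One small overclaim to watch: you write that the sign of the transverse exponent $a\eta^{*}+b+C^{G}=(1-\eta^{*})(b+C^{G})+\eta^{*}(a+b+C^{G})$ can be ``read off directly from $C_{1}$ and $C_{3}$''. That is only true when the two endpoint quantities have the same sign. In Scenario~6 you have $b+C^{G}>0$ but $a+b+C^{G}<0$, so the convex combination alone does not determine the sign and hence does not by itself force the saddle conclusion for $(0,\eta^{*})$. The paper does not close this gap in its proof either --- it simply cites Lemma~\ref{Lemma:MainProperties}, and (P17) there leaves the saddle/repellor dichotomy conditional on exactly this expression --- so you are not doing worse than the paper, but your wording suggests more than the identity delivers. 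If you want a clean argument in that case you would need to use the explicit formula for $\eta^{*}$ from (P7), not just the convex-combination bound.
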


\medskip

The green transition can only occur in some of the nine scenarios identified in Proposition \ref{Prop:FullModelSce}, namely in Scenarios 1, 2, 3, 6, 7, and 9. In these scenarios, vertex $\eta_{11}$ is stable, but the possibility of undergoing a green transition depends on the initial conditions. A basin-of-attraction analysis allows us to determine the set of initial conditions for which the green transition takes place. This set of initial conditions is depicted in green in the nine state spaces shown in Figure \ref{Figure19}, corresponding to the nine scenarios identified in Proposition \ref{Prop:FullModelSce}. In addition to the green region, representing the basin of attraction of the vertex equilibrium $\eta_{11}$, we can observe three other regions. The brown region is associated with a failed green transition. This is the basin of attraction of the vertex equilibrium $\eta_{00}$. The yellow and blue regions are associated with a partial green transition. These are the basins of attraction of the vertex equilibria $\eta_{10}$ and $\eta_{01}$, respectively. Clearly, these colored regions represent the set of trajectories that converge to a specific equilibrium. With the exception of Scenario 9, where the green transition is certain and $\eta_{11}$ is a global attractor, the green transition is possible in other scenarios, but its realization is conditioned on a starting point in the green region.

In Figure \ref{Figure19}, the parameters are chosen for exploratory purposes. With the help of basins of attraction, we aim to obtain a representation of the possible global dynamics of each possible scenario highlighted in Proposition \ref{Prop:FullModelSce}. The robustness of these global dynamics is confirmed by an extensive numerical investigation, only partially reported in Appendix \ref{AppA} due to space constraints. In light of this, we use the numerical examples of Figure \ref{Figure19} to comment on the implications of the various scenarios and on the economic conditions that characterize them. We follow the order of Proposition \ref{Prop:FullModelSce}, that is, we start with Scenario 1 and end with Scenario 9. At the end of this review, we analyze possible green economic policies to avoid scenarios that hinder the green transition.

\begin{figure}[!t]
    \begin{center}
    \includegraphics[scale=0.55]{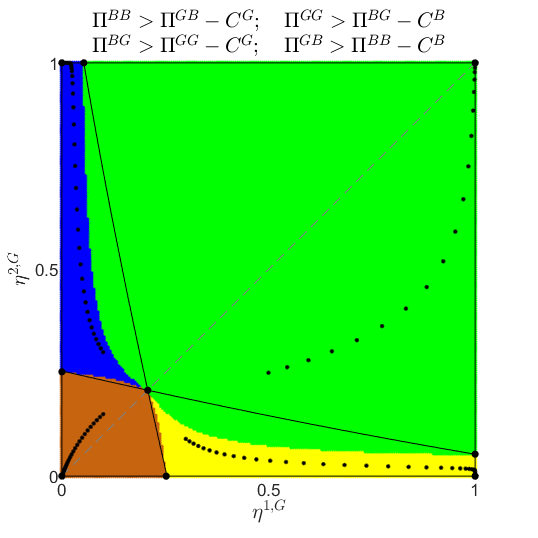}
    \includegraphics[scale=0.55]{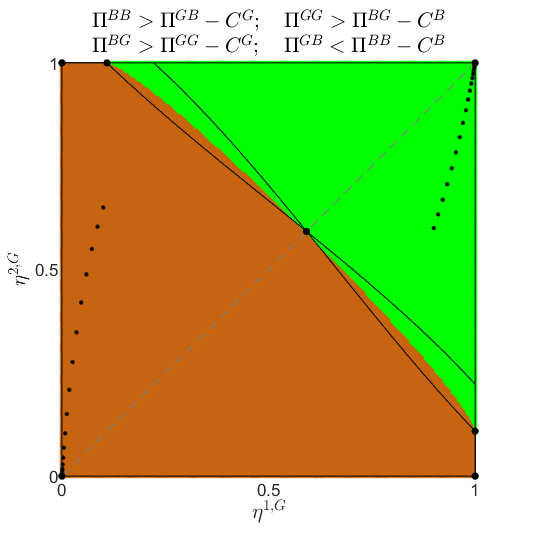}
    \includegraphics[scale=0.55]{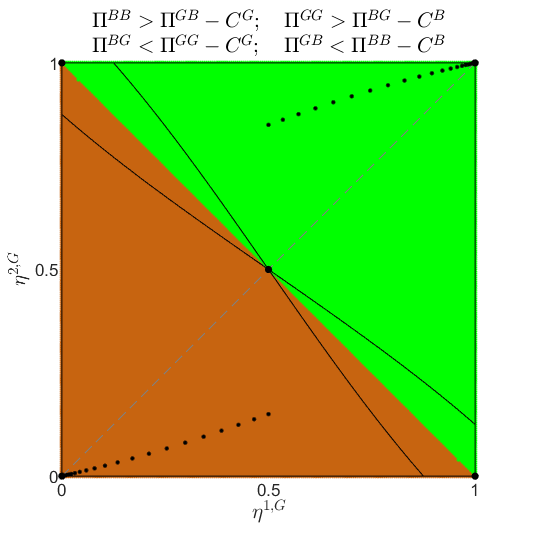}\\
    \includegraphics[scale=0.55]{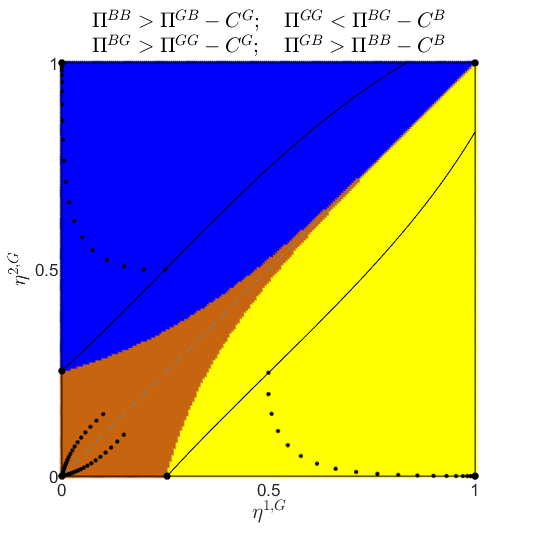}
    \includegraphics[scale=0.55]{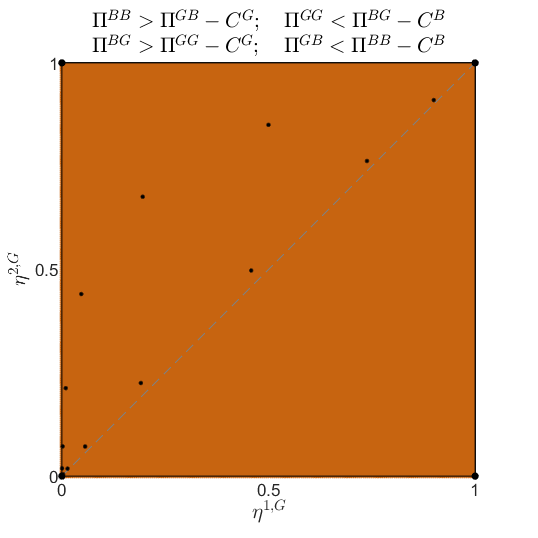}
    \includegraphics[scale=0.55]{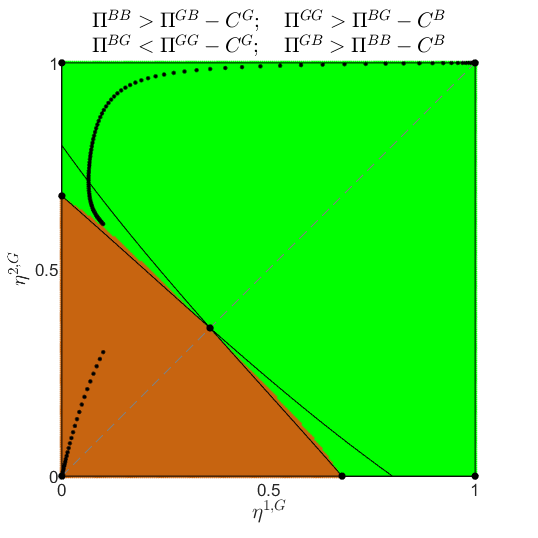}\\
    \includegraphics[scale=0.55]{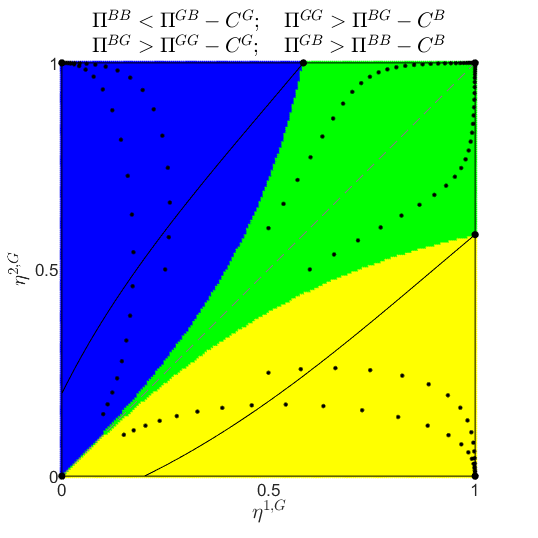}
    \includegraphics[scale=0.55]{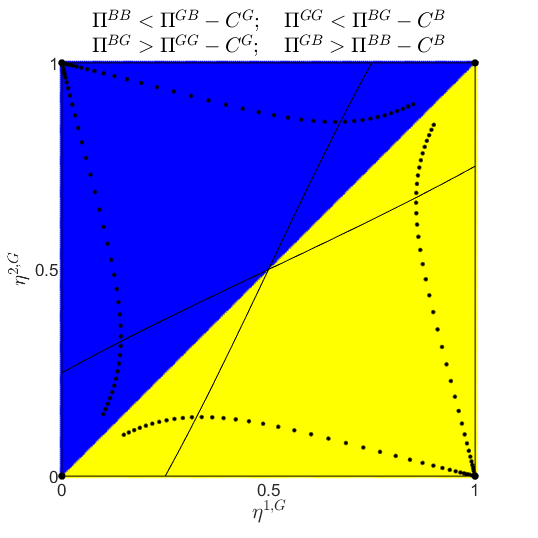}
    \includegraphics[scale=0.55]{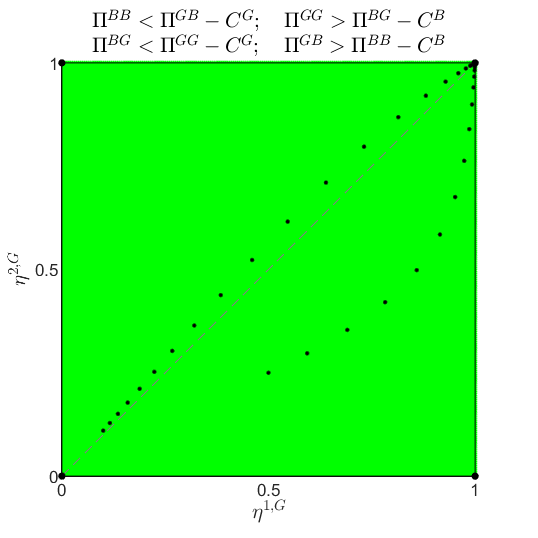}
        \caption{\small Top-left panel (Scenario 1), $\Pi^{BB} = 2.2$; $\Pi^{GB} = 2.3$; $C^{G} = 0.3$;
$\Pi^{GG} = 2.75$; $\Pi^{BG} = 2.5$; $C^{B} = 0.4$. Top-middle panel (Scenario 2), $\Pi^{BB} = 2.4$; $\Pi^{GB} = 2.2$; $C^{G} = 0.3$; $\Pi^{GG} = 2.75$; $\Pi^{BG} = 2.5$; $C^{B} = 0.1$. Top-right panel (Scenario 3), $\Pi^{BB} = 2.2$; $\Pi^{GB} = 2.05$; $C^{G} = 0.2$; $\Pi^{GG} = 2.75$; $\Pi^{BG} = 2.5$; $C^{B} = 0.1$. Middle-left panel (Scenario 4), $\Pi^{BB} = 2.2$; $\Pi^{GB} = 2.3$; $C^{G} = 0.3$; $\Pi^{GG} = 2.0$; $\Pi^{BG} = 2.5$; $C^{B} = 0.4$. Middle-middle panel, (Scenario 5), $\Pi^{BB} = 2$; $\Pi^{GB} = 1$; $C^{G} = 0.5$; $\Pi^{GG} = 1$; $\Pi^{BG} = 2.5$; $C^{B} = 0.4$. Middle-right panel (Scenario 6) $\Pi^{BB} = 2.2$; $\Pi^{GB} = 2.2$; $C^{G} = 0.2$; $\Pi^{GG} = 2.75$; $\Pi^{BG} = 2.5$; $C^{B} = 0.1$. Bottom-left panel (Scenario 7), $\Pi^{BB} = 1.9$; $\Pi^{GB} = 2.3$; $C^{G} = 0.3$; $\Pi^{GG} = 2.4$; $\Pi^{BG} = 2.5$; $C^{B} = 0.4$. Bottom-middle panel (Scenario 8), $\Pi^{BB} = 2.1$; $\Pi^{GB} = 2.3$; $C^{G} = 0.1$; $\Pi^{GG} = 2.3$; $\Pi^{BG} = 2.5$; $C^{B} = 0.1$. Bottom-right panel (Scenario 9), $\Pi^{BB} = 2$; $\Pi^{GB} = 2.3$; $C^{G} = 0.2$; $\Pi^{GG} = 2.75$; $\Pi^{BG} = 2.5$; $C^{B} = 0.4$. Remaining parameters: $\beta=1$.}\label{Figure19}
    \end{center}
\end{figure}

Scenario 1 is characterized by conservatism, since it is never convenient to change technologies. Indeed, $\Pi^{BB} > \Pi^{GB} - C^{G}$ indicates that it is economically convenient to remain brown when the other firm is brown. $\Pi^{GG} > \Pi^{BG}-C^{B}$ indicates that it is better to remain green when the other firm is green. $\Pi^{BG}>\Pi^{GG}-C^{G}$ indicates that it is profitable to remain brown when the other firm is green. Finally, $\Pi^{GB}>\Pi^{BB} -C^{B}$ indicates that it is economically convenient to remain green when the other firm is brown. Under these economic conditions, all the vertex equilibria are stable, i.e., we can expect green vs green, green vs brown, brown vs green, and brown vs brown outcomes. This is a kind of paradox, since conservatism generates scenarios characterized by production uncertainty. All in all, this scenario foresees three possible developments: A green transition if the orbit converges to equilibrium $\eta_{11}$; a partial green transition if the orbit converges to either equilibrium $\eta_{01}$ or equilibrium $\eta_{10}$; and a failed green transition if the orbit converges to equilibrium $\eta_{00}$. To achieve a green transition in such a scenario, the starting conditions are therefore important. Numerical simulations, not reported here, indicate the following. Increasing the profit of the green firm, but remaining in the parameter configurations of Scenario 1, increases the set of trajectories leading to the green transition, while decreasing the set of trajectories leading to a missing green transition and the set of trajectories leading to a partial green transition. In the numerical example of Figure \ref{Figure19}-(a), the set of trajectories that lead to a green transition is already relatively large compared to the set of trajectories that lead to a partial or missing green transition. Hence, the risk of a failed green transition is low, albeit present.

Scenario 2 is characterized by the same conservatism as Scenario 1, with the only exception that it is economically convenient to switch to a brown technology when the other firm is brown, as indicated by the condition $\Pi^{BB} -C^{B}>\Pi^{GB}$. Obviously, this modified profit condition excludes the partial green transition, where one firm adopts a green technology and the other adopts a brown technology. The remaining possibilities are therefore a green transition and a failed green transition. In the numerical example of Figure \ref{Figure19}-(b), the set of orbits leading to a failed green transition (brown region) is wider than the set of orbits leading to a green transition (green region). Hence, the risk of a failed green transition is high. At least for a large variety of parameter configurations characterizing Scenario 2, we expect a similar situation, i.e., the probability of a failed green transition is higher than the probability of a green transition. Technically speaking, this is due to the position of the equilibria at the edges of the box $\left[0,1\right]$.\footnote{In Scenario 2, the stable manifolds of $\left(1,\eta^{+}\right)$ and $\left(\eta^{+},1\right)$ are the border of the green and brown regions. These saddle equilibria lie at the edges where $\eta_{11}$ lies. Hence, they are closer to $\eta_{11}$ than to $\eta_{00}$. It follows that we expect the border of the brown/green region to be closer to $\eta_{11}$ than to $\eta_{00}$, i.e., the brown region is wider.} Economically speaking, this is due to the fact that it is economically convenient to switch to the brown technology when the other firm is brown, while it is never convenient to switch technologies in all other cases.

The profit conditions that characterize Scenario 3 are similar to those of Scenario 2, with the only exception that it is economically convenient to switch to the green technology when the other firm is green ($\Pi^{GG}-C^{G}>\Pi^{BG}$). As a result, Scenario 3 is qualitatively equivalent to Scenario 2. Hence, a green transition and a failed green transition are the two possibilities. Quantitatively speaking, the green region is wider in Scenario 3 than in Scenario 2. Hence, the likelihood of a green transition is also higher.\footnote{This is due to the saddle equilibria at the edges, whose stable manifolds are the boundaries of the green and brown regions. Excluding the vertices, the edge equilibria are present in Scenario 2 and absent in Scenario 3. In Scenario 3, the boundary of the green and brown regions is given by the stable manifold of the saddle $\left(\bar{\eta},\bar{\eta}\right)$ connected to vertices $\eta_{10}$ and $\eta_{0,1}$, dividing the state space into two regions of equal width, the brown and the green. In Scenario 2, instead, there are two saddle equilibria at the edges, i.e., $\left(1,\eta^{+}\right)$ and $\left(\eta^{+},1\right)$, the stable manifolds of which are the boundaries of the green and brown regions. These saddle equilibria lie at the edges where $\eta_{11}$ lies. Hence, they are closer to $\eta_{11}$ than to $\eta_{00}$. It follows that we expect that the boundary of the brown/green regions to be closer to $\eta_{11}$ than to $\eta_{00}$, i.e., the brown region is wider. For more details, see Lemma \ref{Lemma:MainProperties} in Appendix \ref{AppA}.} For instance, compare the green and brown regions in Figure \ref{Figure19}-(c), where the size of the green region coincides with the size of the brown region. This implies that the probability of a green transition is equal to the probability of a failed green transition.\footnote{The equal size of the green and brown regions follows from the fact that the borders of these two regions are the stable manifold of the inner equilibrium, which in the particular case of Figure \ref{Figure19}-(c) is $\left(0.5,0.5\right)$, and from the symmetry of the model with respect to the diagonal $\eta^{1,G}=\eta^{2,G}$ of the square box $\left[0,1\right]^{2}$. See Lemma\ref{Lemma:MainProperties}-(P10) in Appendix \ref{AppA}, for the analytical conditions required to have the inner equilibrium $\left(0.5,0.5\right)$.}

Scenario 4 differs from the conservatism of Scenario 1 and from Scenarios 2 and 3 because it is economically convenient to switch to the brown technology when the other firm is green, as indicated by the condition $\Pi^{BG}-C^{B}>\Pi^{GG}$. As a result, the green transition cannot take place in Scenario 4, while it is possible in Scenarios 1, 2, and 3. Moreover, Scenario 4 differs from Scenarios 2 and 3 because it is economically convenient to remain green when the other firm is brown ($\Pi^{GB}>\Pi^{BB}-C^{B}$), and it differs from Scenario 3, because it is economically convenient to remain brown when the other firm is green ($\Pi^{BG}>\Pi^{GG}-C^{G}$). Consequently, a partial green transition is possible in Scenario 4, as it is in Scenario 1, while it is not possible in Scenarios 2 and 3. The likelihood of a failed green transition is a common element of Scenarios 1, 2, 3, and 4. As can be seen in Figure \ref{Figure19}-(d), the brown region yields trajectories responsible for a failed green transition, while the blue and yellow regions yield trajectories responsible for a partial green transition.

In Scenario 5, a green transition is not even partially feasible. In this scenario, it is always economically convenient to remain brown, both when the other firm is brown ($\Pi^{BB} > \Pi^{GB} - C^{G}$) and when it is green ($\Pi^{BG}>\Pi^{GG}-C^{G}$). Moreover, it is always economically convenient to switch to a brown technology, both when the other firm is brown ($\Pi^{BG}-C^{B}>\Pi^{GG}$) and when it is green ($\Pi^{BB} -C^{B}>\Pi^{GB}$). As shown in Figure \ref{Figure19}-(e), all trajectories end up in the equilibrium where both firms adopt a brown technology. Indeed, the brown region covers the entire state space except for subsets of zero measure.\footnote{By an abuse of notation, we denote as global attractors those invariant sets that attract the entire state space, with the possible exception of subsets of zero measure.} This is the worst possible scenario.

Apart from the fact that it is economically convenient to remain brown when the other firm is brown, i.e. $\Pi^{BB} > \Pi^{GB} - C^{G}$, Scenario 6 is the opposite of Scenario 5. Indeed, it is economically convenient to remain green when the other firm is green ($\Pi^{GG} > \Pi^{BG}-C^{B}$), it is economically convenient to switch to green when the other firm is green ($\Pi^{GG}-C^{G}>\Pi^{BG}$), and it is economically convenient to remain green when the other firm is brown ($\Pi^{GB}>\Pi^{BB} -C^{B}$). Qualitatively speaking, Scenario 6 is similar to Scenarios 2 and 3, where a green transition and a failed green transition are the only possible long-run outcomes. Quantitatively speaking, we conjecture that the basin of attraction of the green transition is wider in Scenario 6, which is also confirmed by Figure \ref{Figure19}-(f). Hence, the probability of a green transition is likely to be higher in Scenario 6 compared to Scenarios 2 and 3.\footnote{This depends on the saddle equilibria at the edges, whose stable manifold separates the green and brown regions. Being located at the edges where equilibrium $\eta_{00}$ is also located, in Scenario 6, these saddle equilibria are closer to $\eta_{00}$ than to $\eta_{11}$. Hence, we expect the boundary of the green/brown regions to be closer to $\eta_{00}$ than to $\eta_{11}$, and the basin of attraction of $\eta_{11}$ to be wider than that of the brown region. For more details, see Lemma \ref{Lemma:MainProperties} in Appendix \ref{AppA}.}

As in Scenario 6, in Scenario 7 it is economically convenient to remain green when the other firm is green ($\Pi^{GG} > \Pi^{BG}-C^{B}$), and it is economically convenient to remain green when the other firm is brown ($\Pi^{GB}>\Pi^{BB} -C^{B}$). Unlike Scenario 6, in Scenario 7, it is economically convenient to switch to green when the other firm is brown ($ \Pi^{GB} - C^{G}>\Pi^{BB}$) and it is more profitable to remain brown when the other firm is green ($\Pi^{BG}>\Pi^{GG}-C^{G}$). As a result, the green transition remains a possible outcome, but the partial green transition substitutes the failed green transition. See Figure \ref{Figure19}-(g).

Scenario 8 is similar to Scenario 7, except that it is economically convenient to switch to the brown technology when the other firm adopts a green technology ($\Pi^{BG}-C^{B}>\Pi^{GG}$). The effect of this is a partial green transition, i.e., the green transition is no longer possible. See Figure \ref{Figure19}-(h).

Finally, Scenario 9 is the desirable scenario in which the green transition is the unique possible outcome. In fact, in Figure \ref{Figure19}-(i), the green region covers the entire state space, except for subsets of zero measure. In Scenario 9, it is always economically convenient to remain green, both when the other firm is brown ($\Pi^{GG} > \Pi^{BG}-C^{B}$) and when it is green ($\Pi^{GB}>\Pi^{BB} -C^{B}$). Moreover, it is always economically convenient to switch to green, both when the other firm is brown ($\Pi^{GB} - C^{G}>\Pi^{BB}$) and when it is green ($\Pi^{GG}-C^{G}>\Pi^{BG}$). These are the unique (economic) conditions that ensure the green transition. Indeed, the green transition is only certain in Scenario 9, as can be seen in Figure \ref{Figure19}.

\subsection{Effects of dynamically adjusted transition risk and green technology progress}

Having analyzed all nine possible scenarios identified in Proposition \ref{Prop:FullModelSce}, our next goal is to isolate the negative effects of a dynamically adjusted transition risk on the green transition and the positive effects of green technology progress. To achieve this goal, the next corollary identifies the scenarios that emerge due to a dynamically adjusted transition risk and those that emerge due to green technology progress.

\medskip

\begin{corollary}\label{Corr2}
Note that:
\begin{itemize}
\item[(1)] if $\Pi^{GG}=\Pi^{GB} = \Pi^{G}$ and $\Pi^{BG}=\Pi^{BB} = \Pi^{B}$, then only Scenarios 1, 5,  and 9 are possible (neither green technology progress nor dynamically adjusted transition risk);
\item[(2)] if $\Pi^{GG}= \Pi^{GB} = \Pi^{G}$ and $\Pi^{BB}<\Pi^{BG}$, then Scenarios 2, 3, and 6 are not possible (dynamically adjusted transition risk but no green technology progress);
\item[(3)] if $\Pi^{BB} = \Pi^{BG} = \Pi^{B}$ and $\Pi^{GG}>\Pi^{GB}$, then Scenarios 4, 7, and 8 are not possible (green technology progress but no dynamically adjusted transition risk);
\item[(4)] if $\Pi^{BG}>\Pi^{BB}$ and $\Pi^{GG}>\Pi^{GB}$, then all nine scenarios of Proposition \ref{Prop:FullModelSce} are possible.
\end{itemize}
\end{corollary}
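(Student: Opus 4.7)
The plan is to treat Corollary~\ref{Corr2} as case analysis over the four defining inequalities of Proposition~\ref{Prop:FullModelSce}, namely (i) $\Pi^{BB}\gtrless\Pi^{GB}-C^{G}$, (ii) $\Pi^{GG}\gtrless\Pi^{BG}-C^{B}$, (iii) $\Pi^{BG}\gtrless\Pi^{GG}-C^{G}$, and (iv) $\Pi^{GB}\gtrless\Pi^{BB}-C^{B}$. Each of the nine scenarios listed in Proposition~\ref{Prop:FullModelSce} is labeled by a sign pattern in $\{+,-\}^{4}$, and each claim of the corollary translates into an assertion about which sign patterns remain compatible with the stated payoff restriction.

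For item (1), substituting $\Pi^{GG}=\Pi^{GB}=\Pi^{G}$ and $\Pi^{BG}=\Pi^{BB}=\Pi^{B}$ collapses (i) and (iii) into the single inequality $\Pi^{B}\gtrless\Pi^{G}-C^{G}$, and (ii) and (iv) into $\Pi^{G}\gtrless\Pi^{B}-C^{B}$, leaving only four sign combinations. The combination with both merged inequalities reversed would require simultaneously $\Pi^{B}+C^{G}<\Pi^{G}$ and $\Pi^{G}+C^{B}<\Pi^{B}$, which contradicts the standing assumption $C^{B}+C^{G}>0$; the three surviving combinations match Scenarios 1, 5, and 9 respectively.

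For items (2) and (3) I would derive explicit contradictions by chaining two of the defining inequalities of each excluded scenario together with the stated restriction. Under the hypothesis of (2): Scenario 2 gives via (iv) $\Pi^{GG}=\Pi^{GB}<\Pi^{BB}-C^{B}<\Pi^{BG}-C^{B}$ (using $\Pi^{BB}<\Pi^{BG}$), contradicting (ii); Scenarios 3 and 6 both require (iii) $\Pi^{BG}<\Pi^{GG}-C^{G}=\Pi^{GB}-C^{G}$, which combined with (i) $\Pi^{BB}>\Pi^{GB}-C^{G}$ yields $\Pi^{BG}<\Pi^{BB}$, contradicting $\Pi^{BG}>\Pi^{BB}$. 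The treatment of (3) is perfectly symmetric: under $\Pi^{BB}=\Pi^{BG}$ and $\Pi^{GG}>\Pi^{GB}$, Scenario 4 is killed by combining (ii) and (iv) to force $\Pi^{GB}>\Pi^{BB}-C^{B}=\Pi^{BG}-C^{B}>\Pi^{GG}$; Scenario 7 by combining (i) with $\Pi^{GG}>\Pi^{GB}$ to contradict (iii); and Scenario 8 by combining (i) and (ii) to force $\Pi^{GG}+C^{B}+C^{G}<\Pi^{GB}$, again contradicting $\Pi^{GG}>\Pi^{GB}$ since $C^{B}+C^{G}>0$.

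For item (4) the task is purely constructive: for each of the nine sign patterns, exhibit a payoff configuration satisfying both $\Pi^{BG}>\Pi^{BB}$ and $\Pi^{GG}>\Pi^{GB}$. Most of the parameter sets underlying Figure~\ref{Figure19} already serve as witnesses; the few scenarios whose figure-parameters collapse one of the strict inequalities (for visual clarity) are handled by a small perturbation of those values, for example taking $\Pi^{BB}=2.1$, $\Pi^{GB}=2.0$, $\Pi^{GG}=2.3$, $\Pi^{BG}=3.0$, $C^{B}=0.5$, $C^{G}=0.4$ satisfies both constraints while realizing the sign pattern $(+,-,+,+)$ of Scenario~4, and analogous minor tweaks cover any other scenario whose figure-entry is borderline. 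The main obstacle is not conceptual but bookkeeping: nine scenarios with four inequalities each must be verified, and the argument reads best when the contradictions in items (2) and (3) are displayed in parallel so that the $B$/$G$ symmetry between them is transparent.
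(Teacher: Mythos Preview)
Your proposal is correct and follows essentially the same route as the paper: both proofs reduce the problem to case analysis over the four sign conditions of Proposition~\ref{Prop:FullModelSce} and eliminate the excluded scenarios by chaining the relevant inequalities with the payoff restriction. The only notable difference is presentational---the paper packages the elimination step as a short list of one-line implications and then reads off which scenarios are killed, whereas you argue scenario by scenario; and for item~(4) you are actually more careful than the paper, which simply asserts that no scenario is excluded without exhibiting witnesses, while you point to the Figure~\ref{Figure19} parameter sets (and perturbations thereof) as explicit examples.
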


\medskip

Excluding any form of dynamically adjusted transition risk and green technology progress, Corollary \ref{Corr2}-(1) indicates that there are only three possible scenarios. First, Scenario 5, in which the green transition does not occur at all. The dynamics along the diagonal $\eta^{1,G}=\eta^{2,G}$ of the unit box $\mathcal{D}$ is consistent with the dynamics of the one-dimensional model \eqref{MAINMAPNested1} in the case of Proposition \ref{Prop::ExpReplicatorEvoluta}-(1). Second, Scenario 1, in which all outcomes are possible: a green transition, a partial green transition, and a failed green transition. The dynamics along the diagonal $\eta^{1,G}=\eta^{2,G}$ of the unit box $\mathcal{D}$ is consistent with the dynamics of the one-dimensional model \eqref{MAINMAPNested1} in the case of Proposition \ref{Prop::ExpReplicatorEvoluta}-(2). Third, Scenario 9, in which the green transition is achieved. The dynamics along the diagonal $\eta^{1,G}=\eta^{2,G}$ of the unit box $\mathcal{D}$ is consistent with the dynamics of the one-dimensional model \eqref{MAINMAPNested1} in the case of Proposition \ref{Prop::ExpReplicatorEvoluta}-(3). Note that by increasing transition risk, i.e., lowering the expected profit of a brown firm by, for example, threatening to impose a lump-sum tax on brown production, Scenario 5 can be transferred to Scenario 1 and Scenario 1 to Scenario 9. Hence, by increasing transition risk, it is possible to achieve a green transition, at least in the absence of green technology progress and when the transition risk is not dynamically adjusted based on the level of the greenness of the industry.

With respect to a dynamically adjusted transition risk, Corollary \ref{Corr2}-(2) indicates that it is consistent with Scenarios 1, 5, and 9. Therefore, despite the negative impact of the dynamic adjustment of the transition risk, it does not prevent the green transition and the most favorable scenario, i.e., Scenario 9, in which the green transition is certain. Moreover, Corollary \ref{Corr2}-(2) indicates that a dynamically adjusted transition risk is responsible for Scenarios 4, 7, and 8. Considering $\Pi^{BB}$ fixed, we enhance the dynamic adjustment of the transition risk, i.e., we increase the dependence of the transition risk on the number of brown firms. Hence, $\Pi^{BG}$ increases and we can only move from Scenario 1 to Scenario 4 and from Scenario 9 to either Scenario 7 or Scenario 8. Scenario 5 remains Scenario 5. Assuming green technology progress, we observe that if we dynamically adjust the transition risk by reducing it as the level of greenness of the industry increases, we can also move from Scenarios 2 and 3 to Scenario 5 and from Scenario 6 first to Scenario 1 and then to Scenario 4. Comparing these possible changes of scenarios using Figure \ref{Figure19}, we can observe that shifting from a transition risk to a dynamically adjusted transition risk reduces the chances of a green transition.

Transition risk is a policy tool to foster the green transition, but the research conducted so far suggests that an intensity of the transition risk that is inversely proportional to the level of greenness of the industry can be an obstacle. To further emphasize this aspect, we note that in the case of a transition risk that is not dynamically adjusted, i.e., $\Pi^{BB}=\Pi^{BG}$, it cannot be that it is both economically convenient to remain brown when the other firm is brown ($\Pi^{BB}<\Pi^{GB}-C^{G}$), and economically convenient to remain brown when the other firm is green ($\Pi^{BG}>\Pi^{GG}-C^{G}$). For example, consider Scenario 7, in which $\Pi^{BB}<\Pi^{BG}$, the green transition is either partial or total. Assume that transition risk is increased only when one firm is brown in such a way that $\Pi^{BB}=\Pi^{BG}<\Pi^{GG}-C^{G}<\Pi^{GB}-C^{G}$. Then, from Scenario 7, we move to Scenario 9, in which the green transition occurs. In contrast, by reducing the transition risk when both firms are brown, such that $\Pi^{BB}=\Pi^{BG}<\Pi^{GG}-C^{G}$, we move from Scenario 7 to Scenario 1. In this way, policymakers run the risk of making a failed green transition possible.

It is also worth noting that in the case of a transition risk that is not dynamically adjusted, it cannot be economically convenient to switch to a brown technology when the other firm is green ($\Pi^{GG}<\Pi^{BG}-C^{B}$) and, at the same time, it cannot be economically convenient to remain green when the other firm is brown ($\Pi^{GB}>\Pi^{BB}-C^{B}$). Consider Scenario 4, in which $\Pi^{BB}<\Pi^{BG}$ is required and transition risk is either absent or partially implemented. Assume that transition risk is increased when only one firm is brown, and is equal to the transition risk when both firms are brown. Then, $\Pi^{GG}>\Pi^{GB}>\Pi^{BG}-C^{B}=\Pi^{BB}-C^{B}$ and from Scenario 4 we move to Scenario 1, in which the green transition becomes a possibility. If we consider that we are in Scenario 8 and do the same, then we move to Scenario 9. Indeed, we will have $\Pi^{GG}>\Pi^{GB}>\Pi^{BG}-C^{B}=\Pi^{BB}-C^{B}$, but also $\Pi^{BB}=\Pi^{BG}<\Pi^{GB}-C^{G}<\Pi^{GG}-C^{G}$. That is, we move from a partial green transition (Scenario 8) to a fully green transition (Scenario 9).

In summary, if we remove the dynamic adjustment of transition risk by increasing it when there is only one brown firm in the industry, we move from Scenario 4 to Scenario 1 and from Scenarios 7 and 8 to Scenario 9. That is, the possibilities of a green transition increase. Instead, if we reduce the transition risk in the case of two brown firms to the level of transition risk in the case of only one brown firm, we move from Scenarios 4 and 8 to Scenario 5 and from Scenario 7 to Scenario 1. Hence, the risk of a failed green transition is either certain or increases.

These results emphasize that increasing the threat of introducing restrictions on brown production, e.g., in the form of a lump-sum tax, even if not all firms are brown, helps to achieve the green transition. In contrast, reducing transition risk hinders the green transition. To foster the green transition, transition risk should be kept at a high level, which we quantify in $\max\left\{\hat{\Pi}^{B}-\Pi^{GB}+C^{G};0\right\}$, where $\hat{\Pi}^{B}$ is the profit of a brown firm in the absence of transition risk. However, this is not enough. It should be kept at a high level regardless the number of brown firms in the industry, i.e., $\Pi^{BB} = \Pi^{GB} < \Pi^{GB} -C^{G}$.

With respect to green technology progress, Corollary \ref{Corr2}-(3) indicates that it is consistent with Scenarios 1, 5, and 9. Therefore, despite the positive impact of green technology progress on the green transition, it does not prevent the failure of the green transition and the worst scenario in which a failed green transition is certain. Moreover, Corollary \ref{Corr2}-(3) indicates that green technology progress is responsible for Scenarios 2, 3, and 6. Assuming that $\Pi^{GB}$ is fixed and that green technology progress increases, the only effect is to increase $\Pi^{GG}$. Hence, we can only move from Scenario 1 to Scenario 6 and from Scenario 5 to either Scenario 2 or Scenario 3, while from Scenario 9 we remain in Scenario 9. Assuming a transition risk that is dynamically adjusted, we observe that with increasing green technology progress, we can also move from Scenario 4 to one among Scenarios 1 and 6. Moreover, a transfer from Scenario 7 to Scenario 9 and from Scenario 8 to either Scenario 1 or Scenario 9 is possible. Comparing these possible changes of scenario using Figure \ref{Figure19}, we can observe that the chances of a green transition increase as green technology progresses.

With respect to the combined effects of green technology progress and a dynamically adjusted transition risk, Corollary \ref{Corr2}-(4) indicates that each of the nine scenarios of Proposition \ref{Prop:FullModelSce} are consistent with any level of benefits from green technology progress and any level of disadvantages caused by dynamically adjusting transition risk. This emphasizes that it is not possible to exclude some scenarios when $a>0$, i.e., when the disadvantages associated with a dynamically adjusted transition risk ($\Pi^{BG} - \Pi^{BB}>0$) are greater than the benefits of green technology progress ($\Pi^{GG} - \Pi^{GB}>0$) or when $a<0$. This is because the magnitude of the benefits due to green technology progress and the magnitude of the cost of a dynamically adjusted transition risk do not directly affect the relative benefits, which determine the evolutionary selection process. In fact, the decision on which the technology to adopt is based on the comparison between $\Pi^{BG}$ and either $\Pi^{GG}+C^{B}$ or $\Pi^{GG}-C^{G}$, or the comparison between $\Pi^{GB}$ and either $\Pi^{BB}-C^{B}$ or $\Pi^{BB}+C^{G}$. Indirectly, however, the sign of $a$ has an impact on these relative performances when combined with some green economic policies. For example, $a>0$ implies that it cannot be $\Pi^{GG}>\Pi^{BG}$ and at the same time $\Pi^{BB}>\Pi^{GB}$. Subsidizing the adoption of the green technology, i.e., $C^{G}=0$, Scenarios 3 and 6 are then excluded if $a>0$, while Scenarios 7 and 8 are excluded if $a<0$.

All in all, both transition risk and green technology progress help to avoid the risk of a failed green transition. Despite green technology progress, transition risk is the result of a green economic policy. A well-known and widely used way to increase transition risk and put pressure on brown firms is the threat of introducing a tax on brown production. Adding a tax on the use of a brown technology, we have the following results: 1)  With a tax of the type $\Pi^{BB}-\tau^{B}$ and $\Pi^{BG}-\tau^{B}$, we can exclude all scenarios except Scenario 9 as long as $\tau^{B}$ is sufficiently high, i.e., we can ensure the green transition. 2) With a tax of the type $\Pi^{BB}-\tau^{B}$, we can exclude Scenarios 1 to 6 as long as $\tau^{B}$ is sufficiently high, i.e., we can avoid the situation where both firms adopt a brown technology.

These results confirm that an increase in transition risk reduces the risk of a failed green transition, and to ensure the green transition, the intensity of transition risk must be high regardless of the number of brown firms that populate the industry, i.e., transition risk should not be dynamically adjusted. Moreover, a transition risk where the threat only forces a firm to switch from a brown to a green technology and thus pay the adjustment cost $C^{G}$ does not prevent the failed green transition as long as the brown technology is sufficiently profitable, i.e., as long as $\Pi^{BB}>\Pi^{GB}-C^{G}$. In this case, the risk of a failed green transition can be avoided either by imposing greenness through regulation or by imposing a tax on the brown technology that makes it unprofitable after deducting the cost of the green transition. In conclusion, reinforcing the transition risk by threatening to tax brown production is the only green economic policy to ensure the green transition. This is especially true when the brown technology is more profitable than the green technology.

\subsection{Effects of the adjustment costs}

With regard to the adjustment costs, the following corollary shows that they may hinder the adoption of the most profitable technology.

\medskip

\begin{corollary}\label{Corr3}
Note that:
\begin{itemize}
\item[(1)] if $\Pi^{GG}>\Pi^{GB} > \Pi^{BG}> \Pi^{BB}$, then only Scenarios 1, 6,  and 9 are possible. In this case, if the cost of adapting the green technology is subsidized, only Scenario 9 is possible; if the cost of adapting the green technology is sufficiently high, only Scenario 1 is possible;
\item[(2)] if $\Pi^{BG}> \Pi^{BB}>\Pi^{GG}>\Pi^{GB}$, then only Scenarios 1, 2, 4, and 5 are possible. In this case, a zero-cost of adjustment to the brown technology implies Scenario 5, while a sufficiently high cost of adjustment to the brown technology implies Scenario 1.
\end{itemize}
\end{corollary}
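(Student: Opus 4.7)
The plan is to apply Proposition \ref{Prop:FullModelSce} directly: each of its nine scenarios is specified by the signs of four inequalities of the form $\Pi^{BB}\gtrless\Pi^{GB}-C^{G}$, $\Pi^{GG}\gtrless\Pi^{BG}-C^{B}$, $\Pi^{BG}\gtrless\Pi^{GG}-C^{G}$, and $\Pi^{GB}\gtrless\Pi^{BB}-C^{B}$. For each profit ordering stated in the corollary, I will first identify which of these four sign conditions are locked by the ordering alone (using $C^{G},C^{B}\geq 0$) and which remain free to vary with the adjustment costs. The list of admissible scenarios is then obtained by enumeration, and the two extreme assertions about subsidized versus large adjustment costs follow by specializing the free condition at its two boundary regimes.

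For case (1), the hypothesis $\Pi^{GG}>\Pi^{GB}>\Pi^{BG}>\Pi^{BB}$ gives $\Pi^{GG}>\Pi^{BG}$ and $\Pi^{GB}>\Pi^{BB}$, so together with $C^{B}\geq 0$ the inequalities $\Pi^{GG}>\Pi^{BG}-C^{B}$ and $\Pi^{GB}>\Pi^{BB}-C^{B}$ always hold, locking the second and fourth sign conditions into their first form. The only scenarios of Proposition \ref{Prop:FullModelSce} compatible with this pair of locks are 1, 6, 7, and 9. The two remaining conditions involve only $C^{G}$: the first flips at $C^{G}=\Pi^{GB}-\Pi^{BB}$ and the third flips at $C^{G}=\Pi^{GG}-\Pi^{BG}$. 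Setting $C^{G}=0$ forces $\Pi^{BB}<\Pi^{GB}$ and $\Pi^{BG}<\Pi^{GG}$, so both are in reversed form, which pinpoints Scenario 9; taking $C^{G}>\max\{\Pi^{GB}-\Pi^{BB},\Pi^{GG}-\Pi^{BG}\}$ flips both to first form and yields Scenario 1. For intermediate $C^{G}$ the configuration depends on which threshold is crossed first, which is controlled by the sign of $a=\Pi^{BG}+\Pi^{GB}-\Pi^{GG}-\Pi^{BB}$ introduced in \eqref{a_and_b}: when $a\leq 0$ one has $\Pi^{GB}-\Pi^{BB}\leq\Pi^{GG}-\Pi^{BG}$ and the intermediate regime is exactly Scenario 6, completing the list $\{1,6,9\}$.

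For case (2), the ordering $\Pi^{BG}>\Pi^{BB}>\Pi^{GG}>\Pi^{GB}$ implies $\Pi^{BB}>\Pi^{GB}$ and $\Pi^{BG}>\Pi^{GG}$, so via $C^{G}\geq 0$ the first and third sign conditions are locked in first form as $\Pi^{BB}>\Pi^{GB}-C^{G}$ and $\Pi^{BG}>\Pi^{GG}-C^{G}$. The scenarios of Proposition \ref{Prop:FullModelSce} consistent with these locks are 1, 2, 4, and 5, and they are distinguished solely by the signs of the two remaining inequalities, which depend only on $C^{B}$. At $C^{B}=0$ the inequalities $\Pi^{GG}<\Pi^{BG}$ and $\Pi^{GB}<\Pi^{BB}$ force both residual conditions into reversed form, giving Scenario 5; for $C^{B}>\max\{\Pi^{BG}-\Pi^{GG},\Pi^{BB}-\Pi^{GB}\}$ both flip to first form, yielding Scenario 1, with Scenarios 2 and 4 realized for intermediate ranges of $C^{B}$.

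The work is predominantly bookkeeping against the nine scenario templates, so I will finish by tabulating the sign pattern of the four inequalities against the scenario labels to make the enumeration transparent. The main technical point to watch is the tacit use of the sign of $a$ in case (1) to exclude Scenario 7: without the additional information that green technology progress is strong enough for $a\leq 0$, the intermediate regime could alternatively realize Scenario 7, and I will flag this explicitly so that the exclusion is argued rather than assumed. The rest reduces to routine comparisons of linear expressions in $C^{G}$ and $C^{B}$.
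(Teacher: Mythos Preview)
Your approach mirrors the paper's exactly: lock the two sign conditions that are forced by the profit ordering together with nonnegativity of the relevant adjustment cost, enumerate the scenarios of Proposition~\ref{Prop:FullModelSce} compatible with those locks, and then specialize the free adjustment cost at its two extremes. For case~(2) your enumeration $\{1,2,4,5\}$ and the endpoint analysis coincide with the paper's argument verbatim.

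Your observation about Scenario~7 in case~(1) is sharper than the paper's own proof. After locking $\Pi^{GG}>\Pi^{BG}-C^{B}$ and $\Pi^{GB}>\Pi^{BB}-C^{B}$, the paper simply asserts ``Hence, only Scenarios 1, 6, and 9 are possible'' without explaining why Scenario~7 is excluded, even though Scenario~7 is equally compatible with those two locks. Your reasoning is correct: under $\Pi^{GG}>\Pi^{GB}>\Pi^{BG}>\Pi^{BB}$ the sign of $a$ is not determined, and when $a>0$ (for instance $\Pi^{GG}=10$, $\Pi^{GB}=9$, $\Pi^{BG}=8$, $\Pi^{BB}=1$, $C^{G}=5$, any $C^{B}>0$) all four conditions of Scenario~7 hold. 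Thus the corollary as stated requires the additional hypothesis $a\le 0$ that you identified, and the paper's proof does not supply this step. Your plan to flag the condition explicitly is the right course; the gap is in the paper, not in your argument.
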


\medskip

Corollary \ref{Corr3}-(1) indicates that when the green technology is the most profitable the green transition is always possible. However, the certainty of the green transition is only achieved when the cost of adopting the green transition is subsidized. Corollary \ref{Corr3}-(2) indicates that when the brown technology is more profitable, the green transition may be a possibility. However, it depends on the cost of adopting a brown technology, which must be sufficiently high. Moreover, it states that a failed green transition is always a possibility when the brown technology is more profitable. All in all, the adjustment costs hinder the adoption of the most profitable technology, impeding the green transition when the most profitable technology is the green one, and favor the chances of a green transition when the most profitable technology is the brown one.

In conclusion, subsidizing the green technology ensures the green transition only if the green technology is more profitable than the brown one. Otherwise, in addition to subsidizing the green technology, an increase in the transition risk that makes the brown technology less profitable is also required to ensure the green transition. As already discussed, increasing transition risk can be achieved by threatening to impose higher taxes on the brown technology.

\section{Conclusions}\label{Conc}

In this paper, we propose an evolutionary competition model to study how policymakers may promote a green transition and thereby reduce the physical risk. Our setup allows for adjustment costs due to the adoption of green technologies, endogenous technological progress that generates increasing returns in green technologies, and a dynamically adjusted transition risk, i.e., a transition risk that is inversely related to the level of greenness of the industry and is the result of a threatened adaptive climate policy. The investigation of the global dynamics of the model suggests that the threat of a green tax is an appropriate policy to promote the green transition. A minimum level of such a tax is estimated, with the policy recommendation that the threat of imposing a green tax remains high regardless of the level of greenness of the industry. The effect of this environmental policy is enforced by technological progresses, which increases the returns to adopting a low-carbon technology. Finally, even if green technologies are more profitable than brown technologies, the adjustment cost of adopting low-carbon technologies may impede the green transition.

Our model setup is innovative and suitable for a wide range of applications where switching strategies entails costs. Nevertheless, the setup can be further extended to account for an endogenous variable measuring the environmental quality or to account for economies of scope, i.e., complementarities that exist between environmental activities and conventional, non-environmental production, see, e.g., \citet{BaileyFriedlaender1982}. Another behavioral aspect that is well-suited to investigate with an evolutionary approach such as ours is the so-called rebound effect: energy efficiency improvements can cause a reduction in the cost of energy services, which may lead to some increase in the demand for these services, see, e.g., \citet{Sorrell2007}. For example, improvements in the fuel efficiency of vehicles may effectively reduce the cost per mile travelled, which may lead to users to choose to travel further, thereby offsetting some of the expected emissions reductions. Similar arguments have been advanced in the case of efficiency improvements in lighting, see \citet{FouquetPearson2011}. Moreover, some have argued that there may also be macro-economic rebound effects from the widespread diffusion of efficiency improvements and technological change, as they free up economic resources that can be invested in creating new ways to meet growing end-user demand through increased output.

It is also important to note that our analysis abstracts from two additional avenues through which climate policy risk can potentially affect the green transition. First, we do not capture interactions between climate policy risk and climate damages. Second, our model abstracts from endogenous innovation. Intuitively, the possibility of a future carbon tax could increase the expected returns to innovation in clean technologies relative to fossil technologies. This innovative response to climate policy risk could further reduce the ratio of fossil to clean capital, magnifying the composition effect and the resulting emissions reductions. Thus, our results should be viewed as a lower bound on the effectiveness of climate policy transition risk in fostering the green transition.

\bigskip

\textbf{Acknowledgements:} This work was funded by the European Union - Next Generation EU, Mission 4: ``Education and Research" - Component 2: ``From research to business", through PRIN 2022 under the Italian Ministry of University and Research (MUR). Project: 2022JRY7EF - Qnt4Green - Quantitative Approaches for Green Bond Market: Risk Assessment, Agency Problems and Policy Incentives - CUP: J53D23004700008.

\bigskip

\textbf{Compliance with Ethical Standards:}
The authors declare no conflict of interest.

\appendix

\section{General properties of model \eqref{System::Main}.}\label{AppA}

Model \eqref{System::Main} is a pair of two exponential replicator dynamics adjusted to account for the costs of switching production technologies and the following results hold.

\medskip

\begin{lemma}\label{Lemma:MainProperties}
Assume $C^{B},C^{G}\geq0$, either $C^{B}>0$ or $C^{G}>0$, $a\neq 0$ and $\beta>0$. With regard to model \eqref{System::Main}, we have that:
\begin{itemize}
\item[(P1)] Box $\mathcal{D}=\left[0,1\right]^{2}$ and segment $\mathcal{A}= \left\{\left.\left(\eta^{1,G},\eta^{2,G}\right)\right|\eta^{1,G}=\eta^{2,G}\text{, }\eta^{1,G}\in\left[0,1\right]\right\}$ are invariant sets;
\item[(P2)] The model is symmetric with respect to $\mathcal{A}$;
\item[(P3)] Any invariant set (e.g., attractor) is either symmetric with respect to $\mathcal{A}$ or an invariant set symmetric to it with respect to $\mathcal{A}$ exists;
\item[(P4)] Two invariant sets symmetric with respect to $\mathcal{A}$ are both either attracting or repelling.
\item[(P5)] The four vertices of $\mathcal{D}$, $\eta_{00}=\left(0,0\right)$, $\eta_{10}=\left(1,0\right)$, $\eta_{11}=\left(1,1\right)$, and $\eta_{01}=\left(0,1\right)$ are equilibria of the model.
\item[(P6)] Excluding the vertex, at each edge of $\mathcal{D}$ there is at most one equilibrium.
\item[(P7)] In addition to the vertex, the possible equilibria at the edges of $\mathcal{D}$ are $\left(0,\eta^{*}\right)$, $\left(\eta^{*},0\right)$, $\left(1,\eta^{+}\right)$, and $\left(\eta^{+},1\right)$, with
\begin{equation}
\eta^{*} =  \resizebox{0.4\hsize}{!}{$\frac{1 - \exp\left(\beta\left(\Pi^{BB} - \Pi^{GB}+C^{G}\right)\right)}{1-  \exp\left(\beta\left(\Pi^{BB} - \Pi^{GB}+C^{G}\right)\right) + 1 -  \exp\left(\beta \left(\Pi^{GB} - \Pi^{BB} + C^{B}\right)\right)}$} \quad \text{and}\quad \eta^{+} = \resizebox{0.4\hsize}{!}{$\frac{1 - \exp\left(\beta\left(\Pi^{BG}-\Pi^{GG}+C^{G}\right)\right)}{1 - \exp\left(\beta\left(\Pi^{BG}-\Pi^{GG}+C^{G}\right)\right)+ 1 - \exp\left(\beta \left(\Pi^{GG}-\Pi^{BG}+C^{B}\right)\right)}$}\text{.}
\end{equation}
\item[(P8)] $\left(0,\eta^{*}\right)$ and $\left(\eta^{*},0\right)$ are equilibria if and only if $\Pi^{GB} + C^{B}>\Pi^{BB}> \Pi^{GB}-C^{G}$;
\item[(P9)] $\left(1,\eta^{+}\right)$ and $\left(\eta^{+},1\right)$ are equilibria if and only if $\Pi^{BG}+C^{G}>\Pi^{GG}> \Pi^{BG} - C^{B}$;
\item[(P10)] The inner equilibria are the intersection points of the following two curves in $\mathcal{D}$:
\begin{equation}\label{eqconinnereq}
\begin{array}{lll}
\eta^{1,G}  & = & \frac{1-\exp\left(\beta\left(a\eta^{2,G}+b+C^{G}\right)\right)}{2 - \exp\left(\beta\left(a\eta^{2,G}+b+C^{G}\right)\right) - \exp\left(\beta \left(-a\eta^{2,G}-b+C^{B}\right)\right)}  \\
\\
\eta^{2,G} & = & \frac{1-\exp\left(\beta\left(a\eta^{1,G}+b+C^{G}\right)\right)}{2 - \exp\left(\beta\left(a\eta^{1,G}+b+C^{G}\right)\right)- \exp\left(\beta \left(-a\eta^{1,G}-b+C^{B}\right)\right)}
\end{array}
\end{equation}
and $\left(\bar{\eta},\bar{\eta}\right)\in\mathcal{A}$, with $\bar{\eta}=0.5$, is an inner equilibrium if and only if $\Pi^{GG}-C^{G}-\Pi^{BG}=\Pi^{BB}-C^{B}-\Pi^{GB}$.
\item[(P11)] $\eta_{10}$ and $\eta_{01}$ are either both attractors, or both repellors, or both saddles.
\item[(P12)] $\eta_{00}$ and $\eta_{11}$ are either attractors or repellors, that is, they cannot be saddles.
\item[(P13)] When $\eta_{00}$ and $\eta_{11}$ are both attractors or both repellors, an inner equilibrium exists that belongs to $\mathcal{A}$.
\item[(P14)]For $\Pi^{BB} > \Pi^{GB} - C^{G}$, equilibrium $\eta_{00}$ is locally asymptotically stable. It is unstable otherwise.
\item[(P15)] For $\Pi^{GG} > \Pi^{BG}-C^{B}$, equilibrium $\eta_{11}$ is locally asymptotically stable. It is unstable otherwise.
\item[(P16)] For $\Pi^{BG}>\Pi^{GG}-C^{G}$ and $\Pi^{GB}>\Pi^{BB} -C^{B}$, equilibria $\eta_{10}$ and $\eta_{01}$ are locally asymptotically stable. They are unstable otherwise.
\item[(P17)] If they exist, equilibria $\left(0,\eta^{*}\right)$ and $\left(\eta^{*},0\right)$ are unstable: they are saddles for $\left(\Pi^{BG}-\Pi^{GG}\right)\eta^{*} + \left(\Pi^{BB} - \Pi^{GB}\right)\left(1-\eta^{*}\right)> - C^{G}$ and repellors otherwise.
\item[(P18)] If they exist, equilibria $\left(1,\eta^{+}\right)$ and $\left(\eta^{+},1\right)$ are unstable: they are saddles for $\left(\Pi^{BG}-\Pi^{GG}\right)\eta^{+} + \left(\Pi^{BB} - \Pi^{GB}\right)\left(1-\eta^{+}\right) <C^{B}$ and repellors otherwise.
\item[(P19)] If $\left(\bar{\eta},\bar{\eta}\right)$ is an inner equilibrium, its real and distinct eigenvalues are given by
\begin{equation}\label{Eigenvaluesetaeta}
\lambda^{\left(\bar{\eta},\bar{\eta}\right)}_{1,2} = \frac{\left(\bar{\eta}\right)^{2}+\bar{\eta}\left(\left(2-\bar{\eta}\right)\mp \beta a\bar{\eta}\left(1-\bar{\eta}\right) \right)\exp\left(\beta \left(a\bar{\eta}+b-C^{B}\right)\right)}{\left[\bar{\eta}+\left(1-\bar{\eta}\right)\exp\left(\beta \left(a\bar{\eta}+b-C^{B}\right)\right)\right]^{2}}  +    \frac{\left(1\mp \beta a\bar{\eta}\right)\left(1-\bar{\eta}\right)^{2}\exp\left(\beta\left(a\bar{\eta}+b+C^{G}\right)\right)  
  -\left(\bar{\eta}\right)^{2}}{\left[\bar{\eta}+\left(1-\bar{\eta}\right)\exp\left(\beta\left(a\bar{\eta}+b+C^{G}\right)\right)\right]^{2}} \text{.}
\end{equation}
\item[(P20)] If equilibrium $\left(\bar{\eta},\bar{\eta}\right)$ exists and is stable, it loses stability through either a fold bifurcation or a  transcritical bifurcation. This bifurcation takes place along the manifold given by the invariant set $\mathcal{A}$ when $a<0$, or along the manifold transverse to the invariant set $\mathcal{A}$ when $a>0$.
\end{itemize}
\end{lemma}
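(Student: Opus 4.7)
The plan is to establish the twenty properties sequentially, exploiting the block structure of \eqref{System::Main} and its symmetry under coordinate swap.

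For (P1), each update has the form $\eta^{i,G}_{t+1} = \eta^{i,G}_t\, p_1 + (1-\eta^{i,G}_t)\, p_2$ with $p_1,p_2 \in [0,1]$ of the form $x/[x+(1-x)e^y]$; as a convex combination, the image lies in $[0,1]$, so $\mathcal{D}$ is invariant, while inspection of the two equations shows that setting $\eta^{1,G}_t = \eta^{2,G}_t$ makes the two right-hand sides coincide, hence $\mathcal{A}$ is also invariant. Properties (P2)--(P4) then follow from the observation that the swap $(\eta^{1,G},\eta^{2,G})\leftrightarrow (\eta^{2,G},\eta^{1,G})$ is an equivariance of the map, so invariant sets are either fixed by or mapped by the swap, with matching stability. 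For (P5), direct substitution at each vertex collapses every fraction $\eta^{i,G}/[\eta^{i,G}+(1-\eta^{i,G})e^{(\cdot)}]$ to $0$ or $1$, leaving the vertex fixed.

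For the edge equilibria (P6)--(P9), I would restrict to each edge of $\mathcal{D}$, which is invariant because on, say, $\eta^{1,G}=0$ the first update is identically $0$. The remaining one-dimensional map for $\eta^{2,G}$ has exactly the form of \eqref{MAINMAPNested1} with effective profit gap $b = \Pi^{BB}-\Pi^{GB}$; Proposition~\ref{Prop::ExpReplicatorEvoluta} then delivers both the existence and the uniqueness of $\eta^{*}$ under the inequalities in (P8). An analogous argument on the edges $\eta^{1,G}=1$ (with effective gap $a+b = \Pi^{BG}-\Pi^{GG}$) yields $\eta^{+}$, and the remaining two edges are handled by the symmetry (P2). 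For the inner equilibria (P10), I would set both components of \eqref{System::Main} equal to themselves, cancel the common factor $\eta^{i,G}(1-\eta^{i,G})$, and rearrange to isolate $\eta^{1,G}$ and $\eta^{2,G}$ in the two curves \eqref{eqconinnereq}; the specific condition for $(0.5,0.5)$ follows by algebraic substitution.

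For (P11)--(P18) I would compute the Jacobian of \eqref{System::Main} at each candidate equilibrium. At a vertex, the cross-derivatives $\partial F_1/\partial\eta^{2,G}$ and $\partial F_2/\partial\eta^{1,G}$ vanish because along the invariant edges $F_i$ is independent of $\eta^{-i,G}$, so the Jacobian is diagonal with entries that are single exponentials of $\beta$ times the relevant relative payoff differentials; these equal $1$ precisely at the thresholds reported in (P14)--(P16). At $\eta_{00}$ and $\eta_{11}$ the two diagonal entries coincide, which gives (P12), while at $\eta_{10}$ and $\eta_{01}$ they are swapped between the two points by symmetry, giving (P11). For (P17)--(P18) the Jacobian at an edge equilibrium is triangular (the edge is an eigendirection); the tangential eigenvalue exceeds $1$ by Proposition~\ref{Prop::ExpReplicatorEvoluta} (such an equilibrium is the unstable inner fixed point of the restricted 1D map), and the transverse eigenvalue, after plugging in the closed form of $\eta^{*}$ or $\eta^{+}$, reduces to the sign condition stated. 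Property (P13) follows from an intermediate value argument applied to $h(\eta) := g(\eta)-\eta$ on $\mathcal{A}$, where $g$ is the restriction of \eqref{System::Main} to $\mathcal{A}$: if both endpoints are attractors then $h'(0)<0$ and $h'(1)<0$ force a sign change of $h$ in $(0,1)$; the repelling case is analogous.

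For (P19), I would use the fact that $(\bar{\eta},\bar{\eta})\in\mathcal{A}$ together with the symmetry (P2) forces the Jacobian there to have the form $\bigl(\begin{smallmatrix}A & B\\ B & A\end{smallmatrix}\bigr)$ with $A=\partial F_1/\partial \eta^{1,G}$ and $B=\partial F_1/\partial \eta^{2,G}$ evaluated at $(\bar{\eta},\bar{\eta})$; its eigenvalues are $A\pm B$ with eigenvectors $(1,\pm 1)$, and direct differentiation of \eqref{System::Main} reproduces \eqref{Eigenvaluesetaeta}. For (P20), $A+B$ governs the dynamics along the invariant manifold $\mathcal{A}$, so its crossing $+1$ produces either a fold or a transcritical bifurcation on $\mathcal{A}$; $A-B$ governs the transverse dynamics and produces an off-diagonal bifurcation; the sign of $a$ decides which eigenvalue reaches the unit circle first, because $a$ enters the two directional derivatives with opposite signs. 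The main obstacle will be the algebra in (P17)--(P18) and (P19)--(P20): after substituting the explicit formulas for $\eta^{*}$, $\eta^{+}$, and the derivatives at $(\bar{\eta},\bar{\eta})$, the numerous exponential factors must be repackaged into the compact conditions announced in the statement, which is lengthy but mechanical and is where a careful bookkeeping of signs is essential.
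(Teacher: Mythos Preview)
Your proposal is correct and follows essentially the same route as the paper: the symmetry under coordinate swap for (P1)--(P4), direct substitution for (P5), reduction to the one-dimensional map \eqref{MAINMAPNested1} on each invariant edge together with Proposition~\ref{Prop::ExpReplicatorEvoluta} for (P6)--(P9) and (P17)--(P18), the diagonal/triangular Jacobian at vertex and edge equilibria for (P11)--(P16), the $\bigl(\begin{smallmatrix}A & B\\ B & A\end{smallmatrix}\bigr)$ structure at diagonal equilibria for (P19), and the sign of $B$ (controlled by the sign of $a$) to order the eigenvalues for (P20). The only point to tighten in (P20) is that the paper also checks $A>0$, so the dominant eigenvalue is positive and can only exit the unit circle through $+1$, which is what rules out a flip bifurcation and leaves only the fold/transcritical alternative.
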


\medskip

The results in Lemma \ref{Lemma:MainProperties} characterize the equilibria and the dynamics at the edges of the square box $\mathcal{D} = \left[0,1\right]^{2}$. The existence of an inner equilibrium is also proven in certain cases, that is, when $\eta_{00}$ and $\eta_{11}$ are both either stable or unstable. Numerical simulations have shown that this inner equilibrium can be unstable, see Figure \ref{Figure19}. However, it remains an open question whether an inner equilibrium can be stable. Instead, the existence of several unstable inner equilibria is confirmed by the numerical simulations of Figure \ref{FigureCasoExtra}. Specifically, Figure \ref{FigureCasoExtra}-(a) shows an example of Scenario 1 of Proposition \ref{Prop:FullModelSce}, where there are three unstable inner equilibria, one of which is on the invariant diagonal $\mathcal{A}$, while the other two are outside this diagonal and are symmetric to each other with respect to this diagonal. Despite the presence of multiple unstable inner equilibria, Figure \ref{FigureCasoExtra}-(a) reports an example of the dynamics of Scenario 1 that is qualitatively equivalent to the one of Figure \ref{Figure19}-(a) in the sense that the only alternatives are a green transition, a partial green transition, or a failed green transition. Another example of three unstable inner equilibria, but along the diagonal $\mathcal{A}$, can be seen in Figure \ref{FigureCasoExtra}-(b). We are again in the case of a Scenario 1 and, except for the three inner equilibria on the diagonal, the dynamics is still qualitatively equivalent to the one of Figure \ref{Figure19}-(a). A numerical example with an inner unstable equilibrium and a saddle 2-cycle on the diagonal $\mathcal{A}$ is reported in Figure \ref{FigureCasoExtra}-(c). Except for the presence of the saddle 2-cycle, this numerical example confirms the global dynamics of Scenario 8 shown in Figure \ref{Figure19}-(h). An example of two inner equilibria in the diagonal $\mathcal{A}$ is reported in Figure \ref{FigureCasoExtra}-(d). Even in this case, except for the presence of the two inner equilibria, the dynamics is that of Scenario 4 described and shown in Figure \ref{Figure19}-(d).

These numerical examples emphasize that, despite the presence of multiple inner equilibria, the description of the possible outcomes of the model is consistent with those outcomes observed and discussed in Section \ref{fullfledgedmodel}. An extensive numerical investigation, which is omitted for reasons of space, confirms this, and hence the robustness of the results in Section \ref{fullfledgedmodel}.

\begin{figure}[!t]
    \begin{center}
    \includegraphics[scale=0.75]{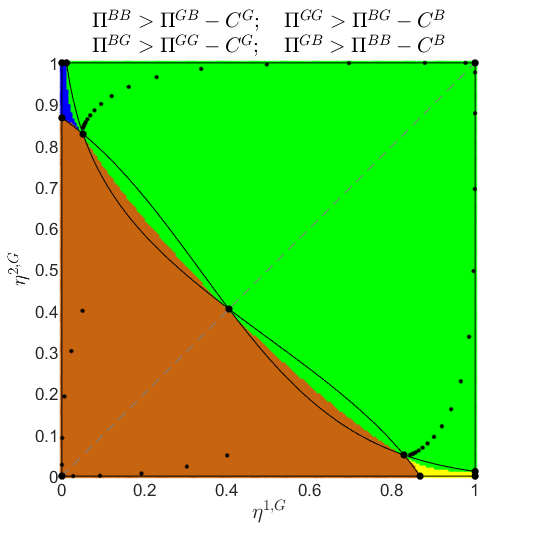}
    \includegraphics[scale=0.75]{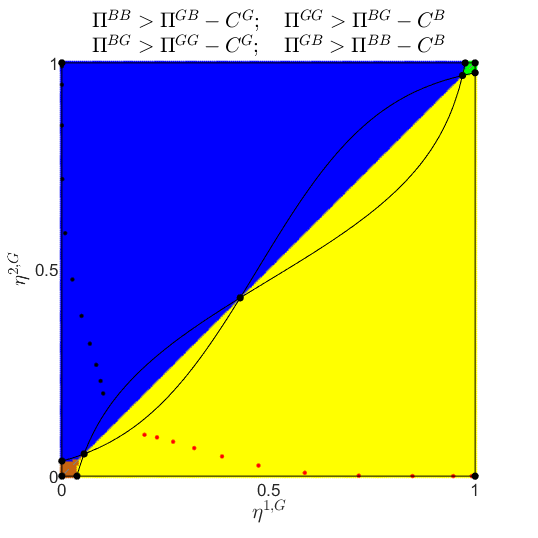}\\
    \includegraphics[scale=0.75]{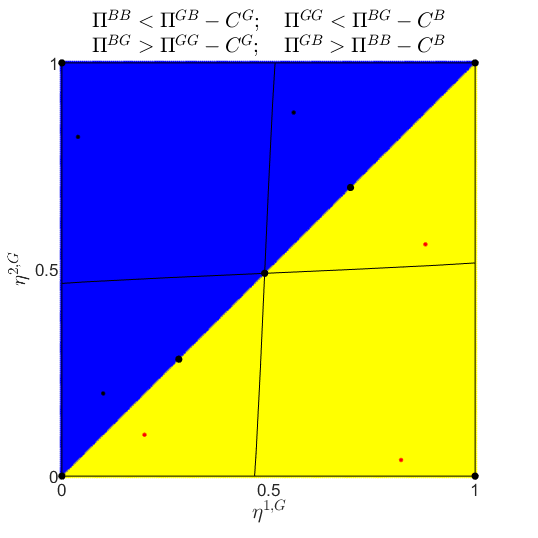}
    \includegraphics[scale=0.75]{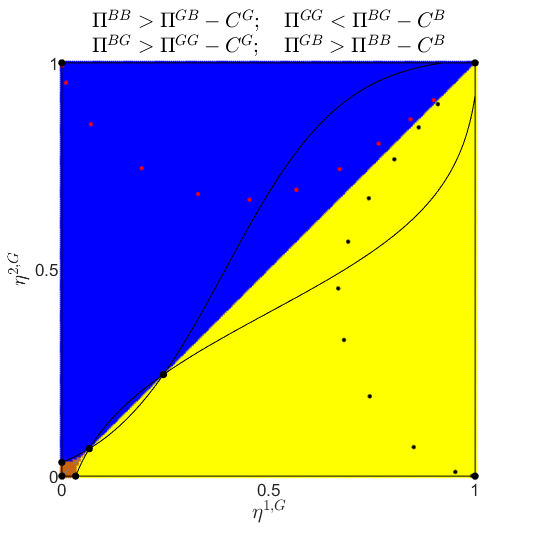}\\
        \caption{\small Left-top panel, from Scenario 1 an example of multiple inner and unstable equilibria: $\beta = 5$; 
$\Pi^{BB} = 1.9$; $\Pi^{GB} = 1.7$; $C^{G} = 0.3$; $\Pi^{GG} = 2.75$; $\Pi^{BG} = 2.5$; $C^{B} = 0.4$; $\beta = 5$. Right-top panel, from Scenario 1 an example of three inner equilibria along the diagonal, two repellors and a saddle: $\Pi^{BB} = 1$; $\Pi^{GB} = 1.95$; $C^{G} = 1.2$; $\Pi^{GG} = 5.3$; $\Pi^{BG} = 5.1$; $C^{B} = 0.01$; $\beta = 4$. Left-bottom panel, from Scenario 8 an example of a repellor inner equilibrium and a saddle 2-cycle along the diagonal: $\Pi^{BB} = 0.8$; $\Pi^{GB} = 1.95$; $C^{G} = 0.1$; $\Pi^{GG} = 4$; $\Pi^{BG} = 5.1$; $C^{B} = 0.01$; $\beta = 4$. Right-bottom panel, from Scenario 4 an example of two inner equilibria along the diagonal, a repellor and a saddle: $\Pi^{BB} = 0.9$; $\Pi^{GB} = 1.95$; $C^{G} = 1.2$; $\Pi^{GG} = 2$; $\Pi^{BG} = 2.1$; $C^{B} = 0.01$.}\label{FigureCasoExtra}
    \end{center}
\end{figure}

\section{Technical results and proofs}\label{AppB}

\begin{proof}[Proof of Proposition \ref{Prop::ExpReplicatorStandard}]
The model \eqref{MAINMAPNested2} can be rewritten as
\begin{equation}
\eta^{i,G}_{t+1} = \frac{\eta^{i,G}_{t}}{\eta^{i,G}_{t} +\left(1-\eta^{i,G}_{t}\right)\exp\left(\beta \left(\Pi^{B}-\Pi^{G}\right)\right)}\text{.}  
\end{equation}
It follows that $\eta^{i,G}_{t+1}\in\left[0,1\right]$ for $\eta^{i,G}_{t}\in\left[0,1\right]$, that is, $\left[0,1\right]$ is invariant for the model. Moreover, note that for each $\eta^{i,G}_{t}\in\left(0,1\right)$, we have that
\begin{equation}
\frac{\eta^{i,G}_{t}}{\eta^{i,G}_{t} +\left(1-\eta^{i,G}_{t}\right)\exp\left(\beta \left(\Pi^{B}-\Pi^{G}\right)\right)}>\eta^{i,G}_{t} 
\end{equation}
if and only if $\Pi^{B}-\Pi^{G}<0$. Therefore, $\eta^{i,G}_{t+1}>\eta^{i,G}_{t}$ for all $\eta^{i,G}_{t}\in\left(0,1\right)$. Since $\left[0,1\right]$ is an invariant set, Proposition \ref{Prop::ExpReplicatorStandard}-(1) follows. In a similar way, we can prove Proposition \ref{Prop::ExpReplicatorStandard}-(2). For $\Pi_{G}=\Pi_{B}$, note that $\eta^{i,G}_{t+1}=\eta^{i,G}_{t}$ for all $\eta^{i,G}_{t}$. This proves Proposition \ref{Prop::ExpReplicatorStandard}-(3).
\end{proof}

\medskip

\begin{proof}[Proof of Proposition \ref{Prop::ExpReplicatorEvoluta}]
Imposing $\bar{\eta}^{i,G}_{t+1}=\bar{\eta}^{i,G}_{t}=\bar{\eta}^{i,G}$ in model \eqref{MAINMAPNested1}, we obtain the equilibrium equation
\begin{equation}
\bar{\eta}^{i,G}  =  \bar{\eta}^{i,G}\frac{\bar{\eta}^{i,G} \exp\left(\beta \Pi^{G}\right)}{\bar{\eta}^{i,G} \exp\left(\beta \Pi^{G}\right)+\left(1-\bar{\eta}^{i,G}\right)\exp\left(\beta \left(\Pi^{B}-C^{B}\right)\right)} + \left(1-\bar{\eta}^{i,G}\right) \frac{\bar{\eta}^{i,G} \exp\left(\beta\left(\Pi^{G}-C^{G}\right)\right)}{\bar{\eta}^{i,G} \exp\left(\beta \left(\Pi^{G}-C^{G}\right)\right)+\left(1-\bar{\eta}^{i,G}\right)\exp\left(\beta \Pi^{B}\right)}\text{,} 
\end{equation}
which can be rewritten as
\begin{equation}\label{EqEquation}
\frac{\bar{\eta}^{i,G}\left(1-\bar{\eta}^{i,G}\right)\exp\left(\beta \left(\Pi^{B}-C^{B}\right)\right)}{\bar{\eta}^{i,G} \exp\left(\beta \Pi^{G}\right)+\left(1-\bar{\eta}^{i,G}\right)\exp\left(\beta \left(\Pi^{B}-C^{B}\right)\right)}  =   \frac{\bar{\eta}^{i,G}\left(1-\bar{\eta}^{i,G}\right) \exp\left(\beta\left(\Pi^{G}-C^{G}\right)\right)}{\bar{\eta}^{i,G} \exp\left(\beta \left(\Pi^{G}-C^{G}\right)\right)+\left(1-\bar{\eta}^{i,G}\right)\exp\left(\beta \Pi^{B}\right)}\text{.}  
\end{equation}
It is straightforward to verify that $\bar{\eta}^{i,G}=0$ and $\bar{\eta}^{i,G}=1$ are always solutions and therefore equilibria of the model. For $\bar{\eta}^{i,G}\in\left(0,1\right)$, the equilibrium equation \eqref{EqEquation} is equivalent to
\begin{equation}\label{inneqcond1d}
\bar{\eta}^{i,G}+\left(1-\bar{\eta}^{i,G}\right)\exp\left(\beta \left(\Pi^{B} - \Pi^{G} + C^{G}\right)\right) = \bar{\eta}^{i,G} \exp\left(\beta \left(\Pi^{G}-\Pi^{B}+C^{B}\right)\right)+\left(1-\bar{\eta}^{i,G}\right)\text{,} 
\end{equation}
which is an identity for $\Pi^{B} = \Pi^{G} - C^{G} = \Pi^{G}+C^{B}$ and a linear equation otherwise. However, $\Pi^{B} = \Pi^{G} - C^{G} = \Pi^{G}+C^{B}$ is excluded by the assumptions $C^{B},C^{G}\geq0$ and either $C^{B}>0$ or $C^{G}>0$. Therefore, \eqref{inneqcond1d} has a unique solution given by
\begin{equation}\label{innereta}
\bar{\eta}^{i,G}_{in} = \frac{\exp\left(\beta\left(\Pi_{B}-\Pi_{G}+C^{G}\right)\right)-1}{\exp\left(\beta\left(\Pi_{B}-\Pi_{G}+C^{G}\right)\right)-1+\exp\left(\beta\left(\Pi_{G}-\Pi_{B}+C^{B}\right)\right)-1}\text{.} 
\end{equation}
This solution is the unique inner equilibrium of model \eqref{MAINMAPNested1} when $\bar{\eta}^{i,G}_{in}\in\left(0,1\right)$ and no inner equilibria exist otherwise. Let us discuss when $\bar{\eta}^{i,G}_{in}\in\left(0,1\right)$. There are four possible cases:
\begin{itemize}
\item[] Case 1: $\Pi_{B}-\Pi_{G}+C^{G}>0$ and $\Pi_{G}-\Pi_{B}+C^{B}>0$ (or equivalently $\Pi_{G}+C^{B}>\Pi_{B}>\Pi_{G}-C^{G}$). Hence, $0<\exp\left(\beta\left(\Pi_{B}-\Pi_{G}+C^{G}\right)\right)-1<\exp\left(\beta\left(\Pi_{B}-\Pi_{G}+C^{G}\right)\right)+\exp\left(\beta\left(\Pi_{G}-\Pi_{B}+C^{B}\right)\right)-2$. Then, we have that $\bar{\eta}^{i,G}\in\left(0,1\right)$.
\item[] Case 2: $\Pi_{B}-\Pi_{G}+C^{G}>0>\Pi_{G}-\Pi_{B}+C^{B}$. Hence, $\exp\left(\beta\left(\Pi_{B}-\Pi_{G}+C^{G}\right)\right)-1>0$ and $\exp\left(\beta\left(\Pi_{B}-\Pi_{G}+C^{G}\right)\right)-1>\exp\left(\beta\left(\Pi_{B}-\Pi_{G}+C^{G}\right)\right)+\exp\left(\beta\left(\Pi_{G}-\Pi_{B}+C^{B}\right)\right)-2$. Then, imposing $\bar{\eta}^{i,G}>0$ implies having $\exp\left(\beta\left(\Pi_{B}-\Pi_{G}+C^{G}\right)\right)-1>\exp\left(\beta\left(\Pi_{B}-\Pi_{G}+C^{G}\right)\right)+\exp\left(\beta\left(\Pi_{G}-\Pi_{B}+C^{B}\right)\right)-2>0$. Hence, $\bar{\eta}^{i,G}>1$ and an inner equilibrium cannot exist.
\item[] Case 3: $\Pi_{B}-\Pi_{G}+C^{G}<0<\Pi_{G}-\Pi_{B}+C^{B}$. Hence, $\exp\left(\beta\left(\Pi_{B}-\Pi_{G}+C^{G}\right)\right)-1<0$ and $\exp\left(\beta\left(\Pi_{B}-\Pi_{G}+C^{G}\right)\right)-1<\exp\left(\beta\left(\Pi_{B}-\Pi_{G}+C^{G}\right)\right)+\exp\left(\beta\left(\Pi_{G}-\Pi_{B}+C^{B}\right)\right)-2$. Then, $\bar{\eta}^{i,G}>0$ implies $\exp\left(\beta\left(\Pi_{B}-\Pi_{G}+C^{G}\right)\right)-1<\exp\left(\beta\left(\Pi_{B}-\Pi_{G}+C^{G}\right)\right)+\exp\left(\beta\left(\Pi_{G}-\Pi_{B}+C^{B}\right)\right)-2<0$. Hence, $\bar{\eta}^{i,G}>1$ and an inner equilibrium cannot exist.
\item[] Case 4: $\Pi_{B}-\Pi_{G}+C^{G}<0$ and $\Pi_{G}-\Pi_{B}+C^{B}<0$, or equivalently $\Pi_{G}+C^{B}<\Pi_{B}<\Pi_{G}-C^{G}$, which is not possible since $C^{G},C^{B}\geq0$, and either $C^{G}>0$ or $C^{B}>0$, by assumption.
\end{itemize}
It follows that the inner equilibrium exists if and only if $\Pi_{G}+C^{B}>\Pi_{B}>\Pi_{G}-C^{G}$.
Regarding the stability of the equilibria, we have that
\begin{equation}
\begin{array}{lll}
\frac{\partial \eta^{i,G}_{t+1} }{\partial \eta^{i,G}_{t}} & = &   \frac{\eta^{i,G}_{t} \exp\left(\beta \Pi^{G}\right)}{\eta^{i,G}_{t} \exp\left(\beta \Pi^{G}\right)+\left(1-\eta^{i,G}_{t}\right)\exp\left(\beta \left(\Pi^{B}-C^{B}\right)\right)}\\
\\
&& + \frac{\eta^{i,G}_{t}\left(\exp\left(\beta \Pi^{G}\right)\left[\eta^{i,G}_{t} \exp\left(\beta \Pi^{G}\right)+\left(1-\eta^{i,G}_{t}\right)\exp\left(\beta \left(\Pi^{B}-C^{B}\right)\right)\right] - \eta^{i,G}_{t} \exp\left(\beta \Pi^{G}\right)\left[\exp\left(\beta \Pi^{G}\right)-\exp\left(\beta \left(\Pi^{B}-C^{B}\right)\right)\right]\right)}{\left[\eta^{i,G}_{t}\exp\left(\beta \Pi^{G}\right)+\left(1-\eta^{i,G}_{t}\right)\exp\left(\beta \left(\Pi^{B}-C^{B}\right)\right)\right]^{2}}  \\
\\
&& - \frac{\eta^{i,G}_{t} \exp\left(\beta\left(\Pi^{G}-C^{G}\right)\right)}{\eta^{i,G}_{t} \exp\left(\beta \left(\Pi^{G}-C^{G}\right)\right)+\left(1-\eta^{i,G}_{t}\right)\exp\left(\beta \Pi^{B}\right)} \\
\\
&& +  \frac{\left(1-\eta^{i,G}_{t}\right)\left(\exp\left(\beta\left(\Pi^{G}-C^{G}\right)\right)\left[\eta^{i,G}_{t} \exp\left(\beta \left(\Pi^{G}-C^{G}\right)\right)+\left(1-\eta^{i,G}_{t}\right)\exp\left(\beta \Pi^{B}\right)\right] - 
\eta^{i,G}_{t} \exp\left(\beta\left(\Pi^{G}-C^{G}\right)\right)\left[\exp\left(\beta \left(\Pi^{G}-C^{G}\right)\right)-\exp\left(\beta \Pi^{B}\right)\right]\right)}{\left[\eta^{i,G}_{t} \exp\left(\beta \left(\Pi^{G}-C^{G}\right)\right)+\left(1-\eta^{i,G}_{t}\right)\exp\left(\beta \Pi^{B}\right)\right]^{2}}\text{.}
\end{array}
\end{equation}
Therefore,
\begin{equation}
\begin{array}{lll}
\left. \frac{\partial \eta^{i,G}_{t+1} }{\partial \eta^{i,G}_{t}}\right|_{0} & = &   \exp\left(\beta\left(\Pi^{G}-C^{G}-\Pi^{B}\right)\right)
\end{array}
\end{equation}
and by standard eigenvalue analysis, it follows that the equilibrium is locally asymptotically stable if and only if $\Pi^{B}>\Pi^{G}-C^{G}$. Moreover,
\begin{equation}
\left. \frac{\partial \eta^{i,G}_{t+1} }{\partial \eta^{i,G}_{t}}\right|_{1} =  \exp\left(\beta \left(\Pi^{B}-C^{B}-\Pi^{G}\right)\right)
\end{equation}
and by standard eigenvalue analysis we can conclude that the equilibrium is locally asymptotically stable if and only if $\Pi^{B}<\Pi^{G} + C^{B}$. We have proved that the inner equilibrium exists if and only if $\bar{\eta}^{i,G}_{0}$ and $\bar{\eta}^{i,G}_{1}$ are both locally asymptotically stables, that is, if and only if $\Pi_{G}+C^{B}>\Pi_{B}>\Pi_{G}-C^{G}$. Since the model is smooth and there is a unique equilibrium for $\Pi_{G}+C^{B}>\Pi_{B}>\Pi_{G}-C^{G}$, we have that stability of $\bar{\eta}^{i,G}_{0}$ implies $\bar{\eta}^{i,G}_{t}>\bar{\eta}^{i,G}_{t+1}$ for all $\bar{\eta}^{i,G}_{t}\in\left(0,\bar{\eta}^{i,G}_{in}\right)$ and stability of $\bar{\eta}^{i,G}_{1}$ implies $\bar{\eta}^{i,G}_{t+1}>\bar{\eta}^{i,G}_{t}$ for all $\bar{\eta}^{i,G}_{t}\in\left(\bar{\eta}^{i,G}_{in},1\right)$. Therefore, a monotonic convergence to $\bar{\eta}^{i,G}_{0}$ for any trajectory starting in $\mathcal{B}\left(\bar{\eta}_{0}^{i,G}\right)=\left[0,\bar{\eta}^{i,G}_{in}\right)$ and a monotonic convergence to $\bar{\eta}^{i,G}_{0}$ for any trajectory starting in $\mathcal{B}\left(\bar{\eta}_{1}^{i,G}\right)=\left(\bar{\eta}^{i,G}_{in},1\right]$ occurs. Let us further note that for $\Pi_{G}+C^{B}=\Pi_{B}$, the unique potential inner equilibrium $\bar{\eta}^{i,G}_{in}$ defined in \eqref{innereta} merges with equilibrium $\bar{\eta}_{1}^{i,G}=1$ and their eigenvalues are equal to $1$. In this case, we obtain
\begin{equation}
\eta^{i,G}_{t+1}=\eta^{i,G}_{t} \left[\eta^{i,G}_{t}  + \left(1-\eta^{i,G}_{t}\right) \frac{1}{\eta^{i,G}_{t}  +\left(1-\eta^{i,G}_{t}\right)\exp\left(\beta \left(\Pi^{B} - \Pi^{G} 
 + C^{G}\right)\right)}\right]\text{.}
\end{equation}
Let us further note that $\Pi_{G}+C^{B}=\Pi_{B}$ implies $\Pi_{B}>\Pi_{G}-C^{G}$. Therefore, $ \eta^{i,G}_{t} > \eta^{i,G}_{t+1}$. This proves $\mathcal{B}\left(\bar{\eta}_{0}^{i,G}\right)=\left[0,1\right)$ when $\Pi_{G}+C^{B}=\Pi_{B}$. Similarly, we can prove that $\bar{\eta}^{i,G}_{in}$ merges with $\bar{\eta}_{0}^{i,G}=0$ and  $\mathcal{B}\left(\bar{\eta}_{1}^{i,G}\right)=\left(0,1\right]$ when $\Pi_{B}=\Pi_{G}-C^{G}$. For $\Pi_{B}>\Pi_{G}+C^{B}$, we have proven that $\bar{\eta}_{0}^{i,G}$ is asymptotically stable and there are no equilibria in $\left(0,1\right)$. Asymptotic stability of $\bar{\eta}_{0}^{i,G}$ implies $\eta^{i,G}_{t+1}<\eta^{i,G}_{t}$ in a neighborhood of $\bar{\eta}_{0}^{i,G}$. Since model \eqref{MAINMAPNested1} is continuous, no equilibria in $\left(0,1\right)$ and $\eta^{i,G}_{t+1}<\eta^{i,G}_{t}$ in a neighborhood of $\bar{\eta}_{0}^{i,G}$ imply $\eta^{i,G}_{t+1}<\eta^{i,G}_{t}$ in $\left(0,1\right)$. Hence, $\mathcal{B}\left(\bar{\eta}_{0}^{i,G}\right)=\left[0,1\right)$. With similar arguments, we can prove $\mathcal{B}\left(\bar{\eta}_{1}^{i,G}\right)=\left(0,1\right]$ when $\Pi_{G}-C^{G}>\Pi_{B}$.
\end{proof}

\medskip

\begin{proof}[Proof of Corollary \ref{Corr1}]
Standard calculus reveals that
\begin{equation}
\begin{array}{lll}
\lim_{\beta\rightarrow 0^{+}}  \bar{\eta}^{i,G}_{in}  & = &     \lim_{\beta\rightarrow 0^{+}}   \frac{\exp\left(\beta\left(\Pi_{B}-\Pi_{G}+C^{G}\right)\right)-1}{\exp\left(\beta\left(\Pi_{B}-\Pi_{G}+C^{G}\right)\right)-1+\exp\left(\beta\left(\Pi_{G}-\Pi_{B}+C^{B}\right)\right)-1}\\
\\
& = & \lim_{\beta\rightarrow 0^{+}}   \frac{\beta\left(\Pi_{B}-\Pi_{G}+C^{G}\right)}{\beta\left(\Pi_{B}-\Pi_{G}+C^{G}\right)+\beta\left(\Pi_{G}-\Pi_{B}+C^{B}\right)}\\
\\
& = & \lim_{\beta\rightarrow 0^{+}}   \frac{ \Pi_{B}-\Pi_{G}+C^{G} }{C^{G}+C^{B}} = \frac{ \Pi_{B}-\Pi_{G}+C^{G} }{C^{G}+C^{B}} = \bar{\eta}^{i,G}_{in,\left(\beta\rightarrow 0\right)}\text{.}\\
\end{array}
\end{equation}
Assumption $\Pi^{B}-\Pi^{G}+C^{G}>0$ yields $\bar{\eta}^{i,G}_{in,\left(\beta\rightarrow 0\right)}>0$, while assumption $C^{B}>\Pi^{B}-\Pi^{G}$ implies $\bar{\eta}^{i,G}_{in,\left(\beta\rightarrow 0\right)}<1$. By Proposition \ref{Prop::ExpReplicatorEvoluta}, we have $\mathcal{B}\left(\bar{\eta}_{0}^{i,G}\right)=\left[0,\bar{\eta}^{i,G}_{in}\right)$. Therefore $\mathcal{B}\left(\bar{\eta}_{0}^{i,G}\right)=\left[0,\bar{\eta}^{i,G}_{in,\left(\beta\rightarrow 0\right)}\right)$ when $\beta\rightarrow 0$. Note that $\frac{\partial \bar{\eta}^{i,G}_{in,\left(\beta\rightarrow 0\right)}}{\partial \left(\Pi^{B}-\Pi^{G}\right)}>0$. This proves (1). From standard calculus, we can conclude that
\begin{equation}
\begin{array}{lll}
\lim_{\beta\rightarrow +\infty}  \bar{\eta}^{i,G}_{in}  & = &     \lim_{\beta\rightarrow +\infty}   \frac{\exp\left(\beta\left(\Pi_{B}-\Pi_{G}+C^{G}\right)\right)-1}{\exp\left(\beta\left(\Pi_{B}-\Pi_{G}+C^{G}\right)\right)-1+\exp\left(\beta\left(\Pi_{G}-\Pi_{B}+C^{B}\right)\right)-1}\\
\\
& = & \lim_{\beta\rightarrow +\infty}   \frac{1-\exp\left(-\beta\left(\Pi_{B}-\Pi_{G}+C^{G}\right)\right)}{1-\exp\left(-\beta\left(\Pi_{B}-\Pi_{G}+C^{G}\right)\right)+\exp\left(\beta\left(2\Pi_{G}-2\Pi_{B}+C^{B}-C^{G}\right)\right)-\exp\left(-\beta\left(\Pi_{B}-\Pi_{G}+C^{G}\right)\right)}\\
\\
& = & \lim_{\beta\rightarrow +\infty}    \frac{1}{1+\exp\left(\beta\left(2\Pi_{G}-2\Pi_{B}+C^{B}-C^{G}\right)\right)}\text{,}\\
\end{array}
\end{equation}
which proves \eqref{betainfty}. Moreover, by Proposition \ref{Prop::ExpReplicatorEvoluta}, we have $\mathcal{B}\left(\bar{\eta}_{0}^{i,G}\right)=\left[0,\bar{\eta}^{i,G}_{in}\right)$, which proves (2). Finally,
\begin{equation}
\begin{array}{lll}
\frac{\partial \bar{\eta}^{i,G}_{in}}{\partial \beta} & = &  \frac{\left(\Pi^{B}-\Pi^{G}+C^{G}\right)\exp\left(\beta\left(\Pi^{B}-\Pi^{G}+C^{G}\right)\right)\left[\exp\left(\beta\left(\Pi^{G}-\Pi^{B}+C^{B}\right)\right)-1\right] - \left(\Pi^{G}-\Pi^{B}+C^{B}\right)\exp\left(\beta\left(\Pi^{G}-\Pi^{B}+C^{B}\right)\right)\left[\exp\left(\beta\left(\Pi^{B}-\Pi^{G}+C^{G}\right)\right)-1\right]}{\left[\exp\left(\beta\left(\Pi^{B}-\Pi^{G}+C^{G}\right)\right)-1+\exp\left(\beta\left(\Pi^{G}-\Pi^{B}+C^{B}\right)\right)-1\right]^{2}}\\
\\
& = & \frac{\left(\Pi^{B}-\Pi^{G}+C^{G} -\Pi^{G}+\Pi^{B}-C^{B}\right)\exp\left(\beta\left(C^{B}+C^{G}\right)\right)+\left(\Pi^{B}-\Pi^{G}+C^{G} -\left(\Pi^{G}-\Pi^{B}+C^{B}\right)\exp\left(\beta\left(2\left(\Pi^{G}-\Pi^{B}\right) + C^{B}-C^{G}\right)\right)\right)\exp\left(\beta\left(\Pi^{B}-\Pi^{G}+C^{G}\right)\right)}{\left[\exp\left(\beta\left(\Pi^{B}-\Pi^{G}+C^{G}\right)\right)-1+\exp\left(\beta\left(\Pi^{G}-\Pi^{B}+C^{B}\right)\right)-1\right]^{2}}\text{,}
\end{array}
\end{equation}
which is positive for $2\left(\Pi^{G}-\Pi^{B}\right) + C^{B}-C^{G}<0$ and negative otherwise. This proves (3).
\end{proof}

\medskip

\begin{proof}[Proof of Lemma \ref{Lemma:MainProperties}]
By definition of model \eqref{System::Main}, we have that $\eta^{1,G}_{t},\eta^{2,G}_{t}\in\left[0,1\right]$ implies $\eta^{1,G}_{t+1},\eta^{2,G}_{t+1}\in\left[0,1\right]$ for all $t$. Moreover, $\eta^{1,G}_{t}=\eta^{2,G}_{t}$ implies $\eta^{1,G}_{t+1}=\eta^{2,G}_{t+1}$ for all $t$. This proves (P1). Note that for any $\left(\eta^{1,G}_{t},\eta^{2,G}_{t}\right)$ mapped into $\left(\eta^{1,G}_{t+1},\eta^{2,G}_{t+1}\right)$, we have $\left(\eta^{2,G}_{t},\eta^{1,G}_{t}\right)$ mapped into $\left(\eta^{2,G}_{t+1},\eta^{1,G}_{t+1}\right)$. This proves (P2). (P3) and (P4) follow from (P2). With regard to the equilibria of the system, let us impose $\eta^{1,G}_{t+1}=\eta^{1,G}_{t}=\bar{\eta}^{1,G}$ and $\eta^{2,G}_{t+1}=\eta^{2,G}_{t}=\bar{\eta}^{2,G}$ in the dynamical system \eqref{System::Main}. Then, we obtain the following algebraic system:
\begin{equation}\label{System::Equilibria}
\begin{array}{lll}
\bar{\eta}^{1,G} & = &  \bar{\eta}^{1,G}\frac{\bar{\eta}^{1,G}}{\bar{\eta}^{1,G}+\left(1-\bar{\eta}^{1,G}\right)\exp\left(\beta \left(a\bar{\eta}^{2,G}+b-C^{B}\right)\right)} + \left(1-\bar{\eta}^{1,G}\right) \frac{\bar{\eta}^{1,G}}{\bar{\eta}^{1,G}+\left(1-\bar{\eta}^{1,G}\right)\exp\left(\beta\left(a\bar{\eta}^{2,G}+b+C^{G}\right)\right)}  \\
\\
\bar{\eta}^{2,G} & = &  \bar{\eta}^{2,G}\frac{\bar{\eta}^{2,G}}{\bar{\eta}^{2,G}+\left(1-\bar{\eta}^{2,G}\right)\exp\left(\beta \left(a\bar{\eta}^{1,G}+b-C^{B}\right)\right)} + \left(1-\bar{\eta}^{2,G}\right) \frac{\bar{\eta}^{2,G}}{\bar{\eta}^{2,G}+\left(1-\bar{\eta}^{2,G}\right)\exp\left(\beta\left(a\bar{\eta}^{1,G}+b+C^{G}\right)\right)}\text{,}
\end{array}
\end{equation}
the solutions of which are equilibria of the model \eqref{System::Main}. It is straightforward to verify that $\eta_{00}$, $\eta_{10}$, $\eta_{11}$, and $\eta_{01}$ solve \eqref{System::Equilibria} and are therefore equilibria of the model. This proves (P5). (P6) follows from Proposition \ref{Prop::ExpReplicatorEvoluta}, by noting that at each edge of the box, the model reduces to a one-dimensional map equivalent to \eqref{MAINMAPNested1}. Specifically, imposing $\bar{\eta}^{1,G}=0$ and $\bar{\eta}^{2,G}\in\left(0,1\right)$, the first equation of the algebraic systems \eqref{System::Equilibria} is satisfied, while the second one reduces to the following linear algebraic equation
\begin{equation}\label{Eqexetastar}
\bar{\eta}^{2,G}+\left(1-\bar{\eta}^{2,G}\right)\exp\left(\beta\left(b+C^{G}\right)\right) =  \bar{\eta}^{2,G}\exp\left(-\beta \left(b-C^{B}\right)\right)+\left(1-\bar{\eta}^{2,G}\right)\text{,}
\end{equation}
which is an identity for $b = - C^{G} = C^{B}$ and a linear equation otherwise. However, $b = - C^{G} = C^{B}$ is excluded by the assumptions $C^{B},C^{G}\geq0$ and either $C^{B}>0$ or $C^{G}>0$. Therefore, \eqref{Eqexetastar} has a unique solution. By straightforward algebra, this unique solution is
\begin{equation}
\eta^{*} =  \frac{1 - \exp\left(\beta\left(b+C^{G}\right)\right)}{2-  \exp\left(\beta\left(b+C^{G}\right)\right) -  \exp\left(-\beta \left(b-C^{B}\right)\right)}\text{.}
\end{equation}
Hence, $\left(0,\eta^{*}\right)$ is the unique equilibrium (excluding the vertices) at edge $\eta^{1,G}=0$ of box $\left[0,1\right]^{2}$.
By symmetric arguments (see (P2) and (P3)), $\left(\eta^{*},0\right)$ is the unique equilibrium (excluding the vertices) at edge $\eta^{2,G}=0$ of box $\left[0,1\right]^{2}$. Moreover, imposing $\bar{\eta}^{1,G}=1$ and $\bar{\eta}^{2,G}\in\left(0,1\right)$, the first equation of the algebraic system \eqref{System::Equilibria} is satisfied, while the second one reduces to the following linear algebraic equation
\begin{equation}\label{Eqexetaplus} 
\bar{\eta}^{2,G}+\left(1-\bar{\eta}^{2,G}\right)\exp\left(\beta\left(a+b+C^{G}\right)\right) = \bar{\eta}^{2,G}\exp\left(-\beta \left(a+b-C^{B}\right)\right)+\left(1-\bar{\eta}^{2,G}\right)\text{,}
\end{equation}
which is an identity for $a + b = - C^{G} = C^{B}$ and a linear equation otherwise. However, $a + b = - C^{G} =  C^{B}$ is excluded by the assumptions $C^{B},C^{G}\geq0$ and either $C^{B}>0$ or $C^{G}>0$. Therefore, \eqref{Eqexetaplus} has a unique solution. By straightforward algebra, this unique solution is
\begin{equation} 
\eta^{+} = \frac{1 - \exp\left(\beta\left(a+b+C^{G}\right)\right)}{2 - \exp\left(\beta\left(a+b+C^{G}\right)\right) - \exp\left(-\beta \left(a+b-C^{B}\right)\right)}\text{.}
\end{equation}
Hence, $\left(1,\eta^{+}\right)$ is the unique equilibrium (excluding the vertices) at edge $\eta^{1,G}=1$ of box $\left[0,1\right]^{2}$. By symmetric arguments (see (P2) and (P3)), $\left(\eta^{+},1\right)$ is the unique equilibrium (excluding the vertices) at edge $\eta^{2,G}=1$ of box $\left[0,1\right]^{2}$. This proves (P7). Imposing $\eta^{*}$ and $\eta^{+}$ in $\left(0,1\right)^{2}$, (P8) and (P9) follow. Indeed, assuming $\eta^{+}>0$ and $2 - \exp\left(\beta\left(a+b+C^{G}\right)\right) - \exp\left(-\beta \left(a+b-C^{B}\right)\right)>0$ implies $a+b+C^{G}<0$, which implies $a+b-C^{B}<0$. Then, $\eta^{+}<1$ implies $1-\exp\left(-\beta \left(a+b-C^{B}\right)\right)>0$ and thus $a+b-C^{B}<0$. Instead, $\eta^{+}>0$ and $2 - \exp\left(\beta\left(a+b+C^{G}\right)\right) - \exp\left(-\beta \left(a+b-C^{B}\right)\right)<0$ implies $a+b+C^{G}>0$. Then, $\eta^{+}<1$ implies $1-\exp\left(-\beta \left(a+b-C^{B}\right)\right)<0$, which implies $a+b-C^{B}>0$. Therefore, $a+b-C^{B}>0$ and $a+b+C^{G}>0$, which are the conditions in (P8). Similarly, we obtain the conditions in (P9). Searching for solutions $\left(\bar{\eta}^{1,G},\bar{\eta}^{2,G}\right)\in\left(0,1\right)^{2}$, the algebraic system \eqref{System::Equilibria} is equivalent to:
\begin{equation}
\begin{array}{lll}
\bar{\eta}^{1,G}  & = & \frac{1-\exp\left(\beta\left(a\bar{\eta}^{2,G}+b+C^{G}\right)\right)}{2 - \exp\left(\beta\left(a\bar{\eta}^{2,G}+b+C^{G}\right)\right) - \exp\left(\beta \left(-a\bar{\eta}^{2,G}-b+C^{B}\right)\right)}  \\
\\
\bar{\eta}^{2,G} & = & \frac{1-\exp\left(\beta\left(a\bar{\eta}^{1,G}+b+C^{G}\right)\right)}{2 - \exp\left(\beta\left(a\bar{\eta}^{1,G}+b+C^{G}\right)\right) - \exp\left(\beta \left(-a\bar{\eta}^{1,G}-b+C^{B}\right)\right)}\text{.}
\end{array}
\end{equation}
By definition of equilibrium, the intersection points of these two curves in box $\left(0,1\right)^{2}$ are equilibria of the dynamical system. Imposing $\eta^{1,G} = \eta^{2,G} = \bar{\eta}=0.5$ in \eqref{eqconinnereq}, the condition $\Pi^{GG}-C^{G}-\Pi^{BG}=\Pi^{BB}-C^{B}-\Pi^{GB}$ emerges after straightforward algebraic manipulations. This proves (P10). (P11) follows from (P4). In the following, we will show that the eigenvalues of the Jacobian matrix of the model computed in $\eta_{00}$ are real and coincident. The same holds for $\eta_{11}$. This proves (P12). Since $\eta_{00}$ and $\eta_{11}$ are the two extremes of segment $\mathcal{A}$, and invariant set by (P1), topological arguments indicate that when these two equilibria are attractors (repellors), a source (sink) in $\mathcal{A}$ is required. This proves (P13). With regard to the stability of the equilibria, let us use standard eigenvalue analysis. For system \eqref{System::Main}, we have that the Jacobian matrix is given by
\begin{equation}\label{GenericJac}
J\left(\eta^{1,G}_{t},\eta^{2,G}_{t}\right)= 
\left[
\begin{array}{cc}
J_{11}\left(\eta^{1,G}_{t},\eta^{2,G}_{t}\right) & J_{12}\left(\eta^{1,G}_{t},\eta^{2,G}_{t}\right) \\
\\
J_{21}\left(\eta^{1,G}_{t},\eta^{2,G}_{t}\right) & J_{22}\left(\eta^{1,G}_{t},\eta^{2,G}_{t}\right) 
\end{array}
\right]\text{,}
\end{equation}
where
\begin{equation}
\begin{array}{lll}
 J_{11}\left(\eta^{1,G}_{t},\eta^{2,G}_{t}\right)  &:= & \frac{\partial \eta^{1,G}_{t+1}}{\partial \eta^{1,G}_{t}} =\frac{2\eta^{1,G}_{t}}{\eta^{1,G}_{t}+\left(1-\eta^{1,G}_{t}\right)\exp\left(\beta \left(a\eta^{2,G}_{t}+b-C^{B}\right)\right)} - \frac{\left(\eta^{1,G}_{t}\right)^{2}\left(1-\exp\left(\beta \left(a\eta^{2,G}_{t}+b-C^{B}\right)\right)\right)}{\left[\eta^{1,G}_{t}+\left(1-\eta^{1,G}_{t}\right)\exp\left(\beta \left(a\eta^{2,G}_{t}+b-C^{B}\right)\right)\right]^{2}} \\
 \\
 && +  \frac{1-2\eta^{1,G}_{t}}{\eta^{1,G}_{t}+\left(1-\eta^{1,G}_{t}\right)e\exp\left(\beta\left(a\eta^{2,G}_{t}+b+C^{G}\right)\right)} -  \frac{\eta^{1,G}_{t}\left(1-\eta^{1,G}_{t}\right)\left(1-\exp\left(\beta\left(a\eta^{2,G}_{t}+b+C^{G}\right)\right)\right)}{\left[\eta^{1,G}_{t}+\left(1-\eta^{1,G}_{t}\right)\exp\left(\beta\left(a\eta^{2,G}_{t}+b+C^{G}\right)\right)\right]^{2}}\\
 \\
J_{12}\left(\eta^{1,G}_{t},\eta^{2,G}_{t}\right)  &:= & \frac{\partial \eta^{1,G}_{t+1}}{\partial \eta^{2,G}_{t}} = -\frac{\beta a \left(\eta^{1,G}_{t}\right)^{2}  \left(1-\eta^{1,G}_{t}\right)\exp\left(\beta \left(a\eta^{2,G}_{t}+b-C^{B}\right)\right)}{\left[\eta^{1,G}_{t}+\left(1-\eta^{1,G}_{t}\right)\exp\left(\beta \left(a\eta^{2,G}_{t}+b-C^{B}\right)\right)\right]^{2}} -  \frac{\beta a\eta^{1,G}_{t}\left(1-\eta^{1,G}_{t}\right)^{2}\exp\left(\beta\left(a\eta^{2,G}_{t}+b+C^{G}\right)\right)}{\left[\eta^{1,G}_{t}+\left(1-\eta^{1,G}_{t}\right)\exp\left(\beta\left(a\eta^{2,G}_{t}+b+C^{G}\right)\right)\right]^{2}}\\
\\
J_{21}\left(\eta^{1,G}_{t},\eta^{2,G}_{t}\right)  &:= & \frac{\partial \eta^{2,G}_{t+1}}{\partial \eta^{1,G}_{t}} = J_{12}\left(\eta^{2,G}_{t},\eta^{1,G}_{t}\right) \quad \text{and} \quad
J_{22}\left(\eta^{1,G}_{t},\eta^{2,G}_{t}\right) \ \ :=  \ \ \frac{\partial \eta^{2,G}_{t+1}}{\partial \eta^{2,G}_{t}} =   J_{11}\left(\eta^{2,G}_{t},\eta^{1,G}_{t}\right) \text{.}
 \end{array}
\end{equation}
Therefore,
\begin{equation}
J\left(0,0\right)= 
\left[
\begin{array}{cc}
\exp\left(-\beta\left(b+C^{G}\right)\right) & 0 \\
\\
0 & \exp\left(-\beta\left(b+C^{G}\right)\right)
\end{array}
\right]
\end{equation}
and $\lambda^{\left(0,0\right)}_{1,2}=\exp\left(-\beta\left(b+C^{G}\right)\right)>0$. Hence, $\eta_{00}=\left(0,0\right)$ is locally asymptotically stable if and only if $b+C^{G}>0$. This proves (P14). Moreover,
\begin{equation}
J\left(1,1\right)= 
\left[
\begin{array}{cc}
\exp\left(\beta\left(a+b-C^{B}\right)\right) & 0 \\
\\
0 & \exp\left(\beta\left(a+b-C^{B}\right)\right)
\end{array}
\right]
\end{equation}
and $\lambda^{\left(1,1\right)}_{1,2}=\exp\left(\beta\left(a+b-C^{B}\right)\right)>0$. Hence, $\eta_{11}=\left(1,1\right)$ is locally asymptotically stable if and only if $a+b-C^{B}<0$. This proves (P15).
Moreover,
\begin{equation}
J\left(1,0\right)= 
\left[
\begin{array}{cc}
\exp\left(\beta\left(b-C^{B}\right)\right) & 0 \\
\\
0 & \exp\left(-\beta\left(a+b+C^{G}\right)\right)
\end{array}
\right]
\end{equation}
and $\lambda^{\left(1,0\right)}_{1}=\exp\left(\beta\left(b-C^{B}\right)\right)>0$ and $\lambda^{\left(1,0\right)}_{2}=\exp\left(-\beta\left(a+b+C^{G}\right)\right)>0$. Hence, $\left(1,0\right)$ is locally asymptotically stable if and only if $b-C^{B}<0$ and $a+b+C^{G}>0$. Note that $\eta_{01}=\left(0,1\right)$ is symmetric to $\eta_{10}=\left(1,0\right)$ with respect to $\mathcal{A}$. Hence, from (P4) and the stability conditions of $\eta_{10}$, we have that $\eta_{01}$ is locally asymptotically stable if and only if $a+b+C^{G}>0$ and $b-C^{B}<0$. This proves (P16). Assuming the existence of an equilibrium in the vertex of type $\left(0,\eta^{*}\right)$, we have that
\begin{equation}
J\left(0,\eta^{*}\right)= 
\left[
\begin{array}{cc}
\exp\left(-\beta\left(a\eta^{*}+b+C^{G}\right)\right) & 0 \\
\\
J_{21}\left(0,\eta^{*}\right) & J_{22}\left(0,\eta^{*}\right)
\end{array}
\right]\text{,} 
\end{equation}
where
\begin{equation}
\begin{array}{lll}
J_{22}\left(0,\eta^{*}\right)  &:= & \frac{\left(\eta^{*}\right)^{2}+2\eta^{*}\exp\left(\beta \left(b-C^{B}\right)\right)-\left(\eta^{*}\right)^{2}\exp\left(\beta \left(b-C^{B}\right)\right)}{\left[\eta^{*}+\left(1-\eta^{*}\right)\exp\left(\beta \left(b-C^{B}\right)\right)\right]^{2}} + \frac{\left(1-\eta^{*}\right)^{2}\exp\left(\beta\left(b+C^{G}\right)\right)-\left(\eta^{*}\right)^{2}}{\left[\eta^{*}+\left(1-\eta^{*}\right)\exp\left(\beta\left(b+C^{G}\right)\right)\right]^{2}} \\
\\
& = & \frac{\left(\eta^{*}\right)^{2}\exp\left(-2\beta \left(b-C^{B}\right)\right)+2\eta^{*}\exp\left(-\beta \left(b-C^{B}\right)\right)-\left(\eta^{*}\right)^{2}\exp\left(-\beta \left(b-C^{B}\right)\right)}{\left[\eta^{*}\exp\left(-\beta \left(b-C^{B}\right)\right)+\left(1-\eta^{*}\right)\right]^{2}} + \frac{\left(1-\eta^{*}\right)^{2}\exp\left(\beta\left(b+C^{G}\right)\right)-\left(\eta^{*}\right)^{2}}{\left[\eta^{*}+\left(1-\eta^{*}\right)\exp\left(\beta\left(b+C^{G}\right)\right)\right]^{2}} \\
\\
& = & \frac{\left(\eta^{*}\right)^{2}\exp\left(-2\beta \left(b-C^{B}\right)\right)+2\eta^{*}\exp\left(-\beta \left(b-C^{B}\right)\right)-\left(\eta^{*}\right)^{2}\exp\left(-\beta \left(b-C^{B}\right)\right)+\left(1-\eta^{*}\right)^{2}\exp\left(\beta\left(b+C^{G}\right)\right)-\left(\eta^{*}\right)^{2}}{\left[\eta^{*}+\left(1-\eta^{*}\right)\exp\left(\beta\left(b+C^{G}\right)\right)\right]^{2}}\text{.}  \\
 \end{array}
\end{equation}
It follows that the eigenvalues are $\lambda_{1}^{\left(0,\eta^{*}\right)}=\exp\left(-\beta\left(a\eta^{*}+b+C^{G}\right)\right)$ and $\lambda_{2}^{\left(0,\eta^{*}\right)}=J_{22}\left(0,\eta^{*}\right)$. Note that $\lambda_{1}^{\left(0,\eta^{*}\right)}>0$ is responsible for transverse stability with respect to the edge of the box of equation $\eta^{1,G}=0$. Imposing $\lambda_{1}^{\left(0,\eta^{*}\right)}<1$, we obtain $a\eta^{*}+b+C^{G}>0$, that is, $\left(\Pi^{BG}-\Pi^{GG}\right)\eta^{*} + \left(\Pi^{BB} - \Pi^{GB}\right)\left(1-\eta^{*}\right) + C^{G}>0$. Instead, $\lambda_{2}^{\left(0,\eta^{*}\right)}$ is responsible for stability along the invariant edge $\eta^{1,G}=0$, which is the eigenvector associated with $\lambda_{2}^{\left(0,\eta^{*}\right)}$. Therefore, a necessary condition for the stability of $\left(0,\eta^{*}\right)$ is that it attracts points along this eigenvector. However, along this eigenvector, that is, on the invariant edge $\eta^{1,G}=0$ of the box, the dynamics of model \eqref{System::Main} is given by $\left(0,\eta^{2,G}_{t+1}\right) : = \left(0,H\left(\eta^{2,G}_{t}\right)\right)$, where
\begin{equation}\label{MapH}
H\left(\eta^{2,G}_{t}\right) : =  \eta^{2,G}_{t}\frac{\eta^{2,G}_{t}}{\eta^{2,G}_{t}+\left(1-\eta^{2,G}_{t}\right)\exp\left(\beta \left(b-C^{B}\right)\right)} + \left(1-\eta^{2,G}_{t}\right) \frac{\eta^{2,G}_{t}}{\eta^{2,G}_{t}+\left(1-\eta^{2,G}_{t}\right)\exp\left(\beta\left(b+C^{G}\right)\right)}
\end{equation}
on $\left[0,1\right]$. Setting $b=\Pi^{G}-\Pi^{B}$, $\eta^{2,G}_{t+1} = H\left(\eta^{2,G}_{t}\right)$ is equivalent to \eqref{MAINMAPNested1}. Therefore, $\left(0,\eta^{*}\right)$ is an equilibrium of the dynamics of model \eqref{System::Main} if and only if $\eta^{*}$ is an equilibrium of the one-dimensional model \eqref{MapH}. Moreover, the stability of $\eta^{*}$ for model \eqref{MapH} is a necessary condition for the stability of $\left(0,\eta^{*}\right)$ for the dynamics of model \eqref{System::Main}. However, by Proposition \ref{Prop::ExpReplicatorEvoluta}, we have that $\eta^{*}$ is always unstable for all $\Pi^{G},\Pi^{B}>0$, and therefore for all values of $b$. It follows that $\left(0,\eta^{*}\right)$ is either a saddle or a repellor. By (P4), the symmetric equilibrium with respect to $\mathcal{A}$, that is  $\left(\eta^{*},0\right)$, has the same stability properties of $\left(0,\eta^{*}\right)$. This proves (P17). Considering equilibrium $\left(1,\eta^{+}\right)$, we have
\begin{equation}
J\left(1,\eta^{+}\right)= 
\left[
\begin{array}{cc}
\exp\left(\beta\left(a\eta^{+}+b-C^{B}\right)\right) & 0 \\
\\
J_{21}\left(1,\eta^{+}\right) & J_{22}\left(1,\eta^{+}\right)
\end{array}
\right]\text{,} 
\end{equation}
where
\begin{equation}
\resizebox{1\hsize}{!}{$
J_{22}\left(1,\eta^{+}\right) : =    \frac{\left(\eta^{+}\right)^{2}+2\eta^{+}\left(1-\eta^{+}\right)\exp\left(\beta \left(a+b-C^{B}\right)\right)-\left(\eta^{+}\right)^{2}\exp\left(\beta \left(a+b-C^{B}\right)\right)}{\left[\eta^{+}+\left(1-\eta^{+}\right)\exp\left(\beta \left(a+b-C^{B}\right)\right)\right]^{2}}  + \frac{\left(1-\eta^{+}\right)^{2}\exp\left(\beta\left(a+b+C^{G}\right)\right)-\left(\eta^{+}\right)^{2}}{\left[\eta^{+}+\left(1-\eta^{+}\right)\exp\left(\beta\left(a+b+C^{G}\right)\right)\right]^{2}}\text{.} $}
\end{equation}
It follows that the eigenvalues are $\lambda_{1}^{\left(1,\eta^{+}\right)}=\exp\left(\beta\left(a\eta^{+}+b-C^{B}\right)\right)$ and $\lambda_{2}^{\left(1,\eta^{+}\right)}=J_{22}\left(1,\eta^{+}\right)$. Note that $\lambda_{1}^{\left(1,\eta^{+}\right)}>0$ is responsible for transverse stability with respect to the edge of the box of equation $\eta^{1,G}=1$. Imposing $\lambda_{1}^{\left(1,\eta^{+}\right)}<1$, we obtain $a\eta^{+}+b-C^{B}<0$, that is, $\left(\Pi^{BG}-\Pi^{GG}\right)\eta^{+} + \left(\Pi^{BB} - \Pi^{GB}\right)\left(1-\eta^{+}\right) - C^{B}<0$. Instead $\lambda_{2}^{\left(1,\eta^{+}\right)}$ is responsible for stability along the invariant edge $\eta^{1,G}=1$, which is the eigenvector associated with $\lambda_{2}^{\left(1,\eta^{+}\right)}$. Therefore, a necessary condition for the stability of $\left(1,\eta^{+}\right)$ is that it attracts points along this eigenvector. However, along this eigenvalue, that is, on the invariant edge $\eta^{1,G}=1$ of the box, the dynamics of model \eqref{System::Main} is given by $\left(1,\eta^{2,G}_{t+1}\right) : = \left(1,W\left(\eta^{2,G}_{t}\right)\right)$, where
\begin{equation}\label{MapW}
W\left(\eta^{2,G}_{t}\right) : =  \eta^{2,G}_{t}\frac{\eta^{2,G}_{t}}{\eta^{2,G}_{t}+\left(1-\eta^{2,G}_{t}\right)\exp\left(\beta \left(a+b-C^{B}\right)\right)} + \left(1-\eta^{2,G}_{t}\right) \frac{\eta^{2,G}_{t}}{\eta^{2,G}_{t}+\left(1-\eta^{2,G}_{t}\right)\exp\left(\beta\left(a+b+C^{G}\right)\right)}
\end{equation}
on $\left[0,1\right]$. Setting $a+b=\Pi^{G}-\Pi^{B}$, $\eta^{2,G}_{t+1} = W\left(\eta^{2,G}_{t}\right)$ is equivalent to \eqref{MAINMAPNested1}. Therefore, $\left(1,\eta^{+}\right)$ is an equilibrium of the dynamics of model \eqref{System::Main} if and only if $\eta^{+}$ is an equilibrium of the one-dimensional model \eqref{MapW}. Moreover, the stability of $\eta^{+}$ for model \eqref{MapW} is a necessary condition for the stability of $\left(1,\eta^{+}\right)$ for the dynamics of model \eqref{System::Main}. However, by Proposition \ref{Prop::ExpReplicatorEvoluta}, we have that $\eta^{+}$ is always unstable for whatever $\Pi^{G},\Pi^{B}>0$, therefore for whatever value of $b$. Hence, $\left(1,\eta^{+}\right)$ cannot attract trajectories starting at the edge $\eta^{1,G}=1$ of the box, and it is either a saddle or a repellor. It follows that $\left(1,\eta^{+}\right)$ is always unstable, and either a saddle or a repellor. By (P4), the symmetric equilibrium with respect to $\mathcal{A}$, that is  $\left(\eta^{+},1\right)$, has the same stability properties of $\left(1,\eta^{+}\right)$. This proves (P18). Consider an equilibrium in the invariant set $\mathcal{A}$. By definition of $\mathcal{A}$ in (P1), this equilibrium is given by $\left(\bar{\eta},\bar{\eta}\right)$, with $\bar{\eta}\in\left(0,1\right)$. Hence, the Jacobian matrix in \eqref{GenericJac} becomes
\begin{equation}
J\left(\bar{\eta},\bar{\eta}\right)= 
\left[
\begin{array}{cc}
A & B \\
\\
B & A 
\end{array}
\right]\text{,}
\end{equation}
where $A = J_{11}\left(\bar{\eta},\bar{\eta}\right) = J_{22}\left(\bar{\eta},\bar{\eta}\right)$ and $B = J_{12}\left(\bar{\eta},\bar{\eta}\right) = J_{21}\left(\bar{\eta},\bar{\eta}\right)$. It follows that the eigenvalues are always real and given by $\lambda^{\left(\bar{\eta},\bar{\eta}\right)}_{1,2}=A\pm B$, with \eqref{Eigenvaluesetaeta}, which follows from straightforward algebra, and $v_{1,2}=\left[\pm 1, 1\right]$ are a couple of two related eigenvectors, respectively. This proves (P19). For $\left(\bar{\eta},\bar{\eta}\right)\in\left(0,1\right)^{2}$, note that
\begin{equation}
A = \frac{\left(\bar{\eta}\right)^{2}+\bar{\eta}\left(2-\bar{\eta}\right)\exp\left(\beta \left(a\bar{\eta}+b-C^{B}\right)\right)}{\left[\bar{\eta}+\left(1-\bar{\eta}\right)\exp\left(\beta \left(a\bar{\eta}+b-C^{B}\right)\right)\right]^{2}}  +    \frac{\left(1-\bar{\eta}\right)^{2}\exp\left(\beta\left(a\bar{\eta}+b+C^{G}\right)\right)  
  -\left(\bar{\eta}\right)^{2}}{\left[\bar{\eta}+\left(1-\bar{\eta}\right)\exp\left(\beta\left(a\bar{\eta}+b+C^{G}\right)\right)\right]^{2}}  >0 \text{,} 
\end{equation}
since $\left[\bar{\eta}+\left(1-\bar{\eta}\right)\exp\left(\beta\left(a\bar{\eta}+b+C^{G}\right)\right)\right]^{2}>\left[\bar{\eta}+\left(1-\bar{\eta}\right)\exp\left(\beta \left(a\bar{\eta}+b-C^{B}\right)\right)\right]^{2}$. Moreover, assume $a>0$ and note that
\begin{equation}
B = -\frac{\beta a \left(\bar{\eta}\right)^{2}  \left(1-\bar{\eta}\right)\exp\left(\beta \left(a\bar{\eta}+b-C^{B}\right)\right)}{\left[\bar{\eta}+\left(1-\bar{\eta}\right)\exp\left(\beta \left(a\bar{\eta}+b-C^{B}\right)\right)\right]^{2}} -  \frac{\beta a\bar{\eta}\left(1-\bar{\eta}\right)^{2}\exp\left(\beta\left(a\bar{\eta}+b+C^{G}\right)\right)}{\left[\bar{\eta}+\left(1-\bar{\eta}\right)\exp\left(\beta\left(a\bar{\eta}+b+C^{G}\right)\right)\right]^{2}}<0 \text{.}    
\end{equation}
Hence, $\lambda^{\left(\bar{\eta},\bar{\eta}\right)}_{2}>0$ and $\lambda^{\left(\bar{\eta},\bar{\eta}\right)}_{2}>\left|\lambda^{\left(\bar{\eta},\bar{\eta}\right)}_{1}\right|$. By standard bifurcation analysis, see, e.g., \citet{GuckenheimerHolmes1983}, it follows that an inner equilibrium $\left(\bar{\eta},\bar{\eta}\right)$ can undergo only fold/transcritical bifurcations along the direction spanned by eigenvector $v_{2}$, that is, along the manifold transverse to the invariant set $\mathcal{A}$. Along the invariant set $\mathcal{A}$, the direction spanned by eigenvector $v_{1}$, an equilibrium $\left(\bar{\eta},\bar{\eta}\right)$ can undergo either a bifurcation of eigenvalue $1$ (when $\lambda^{\left(\bar{\eta},\bar{\eta}\right)}_{1}=1$) or a bifurcation of eigenvalue $-1$ (when $\lambda^{\left(\bar{\eta},\bar{\eta}\right)}_{1}=-1$). Note that $\left(\bar{\eta},\bar{\eta}\right)$ is unstable (a saddle or a replellor) at least immediately before and after these bifurcations. Indeed, $\lambda^{\left(\bar{\eta},\bar{\eta}\right)}_{1}$ equal to either $1$ or $-1$ implies $\lambda^{\left(\bar{\eta},\bar{\eta}\right)}_{2}>1$, since $\lambda^{\left(\bar{\eta},\bar{\eta}\right)}_{2}>\left|\lambda^{\left(\bar{\eta},\bar{\eta}\right)}_{1}\right|$. For $a<0$ and $B>0$, similar but opposite considerations apply. This proves (P20).
\end{proof}

\medskip

\begin{proof}[Proof of Proposition \ref{Prop:FullModelSce}]
As shown in Lemma \ref{Lemma:MainProperties}, the stability of the vertex equilibria depends on the sign of four different expressions: 1) $\Pi^{BB} - \Pi^{GB} + C^{G}$, 2) $\Pi^{GB}-\Pi^{BB} + C^{B}$, 3) $\Pi^{GG} - \Pi^{BG}+C^{B}$, 4) $\Pi^{BG}-\Pi^{GG}+C^{G}$. Therefore, there are $2^{4}=16$, possible scenarios. However,
$\Pi^{BB} < \Pi^{GB} - C^{G}$ and $\Pi^{GB}<\Pi^{BB} -C^{B}$ lead to $\Pi^{GB}<\Pi^{BB} -C^{B}<\Pi^{BB} < \Pi^{GB} - C^{G}$, which is not possible since $C^{G}\geq0$ by assumption. Moreover, $\Pi^{GG} < \Pi^{BG}-C^{B}$ and $\Pi^{BG}<\Pi^{GG}-C^{G}$ lead to $\Pi^{GG} < \Pi^{BG}-C^{B}<\Pi^{BG}<\Pi^{GG}-C^{G}$, which is not possible since $C^{G}\geq0$ by assumption. By straightforward consideration, we then have that all possible scenarios reduce to $9$. The other results follow from Lemma \ref{Lemma:MainProperties}.
\end{proof}

\medskip

\begin{proof}[Proof of Corollary \ref{Corr2}]
Consider the four conditions that characterize the nine scenarios of Proposition \ref{Prop:FullModelSce}: $\Pi^{BB} \lessgtr \Pi^{GB} - C^{G}$, $\Pi^{GG} \lessgtr \Pi^{BG}-C^{B}$, $\Pi^{BG} \lessgtr  \Pi^{GG}-C^{G}$ and $\Pi^{GB} \lessgtr \Pi^{BB} -C^{B}$. If $\Pi^{GG}=\Pi^{GB} = \Pi^{G}$, and $\Pi^{BG}=\Pi^{BB} = \Pi^{B}$, these conditions reduce to: $\Pi^{B} \lessgtr \Pi^{G} - C^{G}$ and $\Pi^{G} \lessgtr \Pi^{B} -C^{B}$. These conditions are consistent only with Scenarios 1, 5, and 9. This proves (1). If $\Pi^{GG} = \Pi^{GB} = \Pi^{G}$, the four conditions that characterize the nine scenarios of Proposition \ref{Prop:FullModelSce} become: $\Pi^{BB} \lessgtr \Pi^{G} - C^{G}$, $\Pi^{G} \lessgtr \Pi^{BG}-C^{B}$, $\Pi^{BG} \lessgtr  \Pi^{G}-C^{G}$, and $\Pi^{G} \lessgtr \Pi^{BB} -C^{B}$. Imposing $\Pi^{BB}<\Pi^{BG}$, we have: (A): $\Pi^{G} < \Pi^{BB}-C^{B}$, then $\Pi^{G} < \Pi^{BG} -C^{B}$; (B) $\Pi^{G} > \Pi^{BG}-C^{B}$, then $\Pi^{G} > \Pi^{BB} -C^{B}$; (C) $\Pi^{BB} >\Pi^{G} - C^{G}$, then $\Pi^{BG} > \Pi^{G}-C^{G}$; (D) $\Pi^{BG} < \Pi^{G}-C^{G}$, then $\Pi^{BB} <\Pi^{G} - C^{G}$. Therefore, Scenarios 2, 3, and 6 are not possible. This proves (2). If $\Pi^{BB} = \Pi^{BG} = \Pi^{B}$, the four conditions that characterize the nine scenarios of Proposition \ref{Prop:FullModelSce} become: $\Pi^{B} \lessgtr \Pi^{GB} - C^{G}$, $\Pi^{GG} \lessgtr \Pi^{B}-C^{B}$, $\Pi^{B} \lessgtr  \Pi^{GG}-C^{G}$, and $\Pi^{GB} \lessgtr \Pi^{B} -C^{B}$. Imposing $\Pi^{GG}>\Pi^{GB}$, we have: (A): $\Pi^{B} < \Pi^{GB} - C^{G}$, then $\Pi^{B} < \Pi^{GG} - C^{G}$; (B) $\Pi^{B} > \Pi^{GG} - C^{G}$, then $\Pi^{B} > \Pi^{GB} - C^{G}$; (C) $\Pi^{GG} < \Pi^{B}-C^{B}$, then $\Pi^{GB} < \Pi^{B} -C^{B}$; (D) $\Pi^{GB} > \Pi^{B} -C^{B}$, then $\Pi^{GG} > \Pi^{B} -C^{B}$. Therefore, Scenarios 4, 7, and 8 are not possible. This proves (3). Imposing $\Pi^{BG}>\Pi^{BB}$ and $\Pi^{GG}>\Pi^{GB}$, neither of the conditions that characterize one the nine scenarios can be excluded. This proves (4).
\end{proof}

\medskip

\begin{proof}[Proof of Corollary \ref{Corr3}]
Note that $\Pi^{GG}>\Pi^{GB} > \Pi^{BG}> \Pi^{BB}$ and $C^{B}>0$ imply $\Pi^{GG}> \Pi^{BG} - C^{B}$ and $\Pi^{GB}> \Pi^{BB} - C^{B}$. Hence, only Scenarios 1, 6,  and 9 are possible. Further assuming $C^{G}=0$ (subsidizing the cost of adapting green technology), implies $\Pi^{BB} < \Pi^{GB} - C^{G}$ and $\Pi^{BG} < \Pi^{GG} - C^{G}$. Hence, only Scenario 9 is possible. Assuming $C^{G}>\max\left\{\Pi^{GB}-\Pi^{BB};\Pi^{GG}-\Pi^{BG}\right\}$ implies $\Pi^{BB} > \Pi^{GB} - C^{G}$ and $\Pi^{BG} > \Pi^{GG} - C^{G}$. Hence, only Scenario 1 is possible. This proves (1). Note that $\Pi^{BG}> \Pi^{BB}>\Pi^{GG}>\Pi^{GB}$ and $C^{G}>0$ imply $\Pi^{BB}> \Pi^{GB} - C^{G}$ and $\Pi^{BG}> \Pi^{GG} - C^{G}$. Hence, only Scenarios 1, 2, 4, and 5 are possible. Further assuming $C^{B}=0$ implies $\Pi^{GG} < \Pi^{BG} - C^{B}$ and $\Pi^{GB} < \Pi^{BB} - C^{B}$. Hence, only Scenario 5 is possible. Assuming $C^{B}>\max\left\{\Pi^{BG}-\Pi^{GG};\Pi^{BB}-\Pi^{GB}\right\}$ implies $\Pi^{GG} > \Pi^{BG} - C^{B}$ and $\Pi^{GB} > \Pi^{BB} - C^{B}$. Hence, only Scenario 1 is possible. This proves (2).
\end{proof}

\medskip



\end{document}